\documentclass[letterpaper,11pt]{article}


\usepackage{amsmath}
\usepackage{amsfonts}
\usepackage{amssymb}
\usepackage{amsthm}
\usepackage[usenames]{color}
\usepackage{fullpage}
\usepackage{xspace}
\usepackage{url,ifthen}
\usepackage{srcltx}
\usepackage{multirow}
\usepackage{boxedminipage}
\usepackage[margin=1.2in]{geometry}
\usepackage{nicefrac}
\usepackage{xspace}
\usepackage{graphicx}
\usepackage[usenames]{color}
\usepackage{srcltx}
\definecolor{DarkGreen}{rgb}{0.1,0.5,0.1}
\definecolor{DarkRed}{rgb}{0.5,0.1,0.1}
\definecolor{DarkBlue}{rgb}{0.1,0.1,0.5}
\usepackage[small]{caption}

\usepackage[pdftex]{hyperref}
\hypersetup{
    unicode=false,          
    pdftoolbar=true,        
    pdfmenubar=true,        
    pdffitwindow=false,      
    pdfnewwindow=true,      
    colorlinks=true,       
    linkcolor=DarkBlue,          
    citecolor=DarkGreen,        
    filecolor=DarkRed,      
    urlcolor=DarkBlue,          
    %
    %
    pdftitle={Beyond worst-case analysis in private singular vector computation},
    pdfauthor={Moritz Hardt, Aaron Roth},
}

\def\draft{0}

\def\submit{1}

\ifnum\draft=1 
    \def\ShowAuthNotes{1}
\else
    \def\ShowAuthNotes{0}
\fi

\ifnum\submit=1
\newcommand{\forsubmit}[1]{#1}
\newcommand{\forreals}[1]{}
\else
\newcommand{\forreals}[1]{#1}
\newcommand{\forsubmit}[1]{}
\fi

\ifnum\ShowAuthNotes=1
\newcommand{\authnote}[2]{{ \footnotesize \bf{{\color{DarkRed}[#1's Note:}
{\color{DarkBlue}#2}]}}}
\else
\newcommand{\authnote}[2]{}
\fi


%
%

\newtheorem{theorem}{Theorem}[section]
\newtheorem{remark}[theorem]{Remark}
\newtheorem{lemma}[theorem]{Lemma}
\newtheorem{corollary}[theorem]{Corollary}

\newtheorem{claim}[theorem]{Claim}
\newtheorem{fact}[theorem]{Fact}

\theoremstyle{definition}
\newtheorem{definition}[theorem]{Definition}

%
%

\newcommand{\chapterref}[1]{\hyperref[ch:#1]{Chapter~\ref{ch:#1}}}

\newcommand{\claimref}[1]{\hyperref[claim:#1]{Claim~\ref{claim:#1}}}
\newcommand{\corollarylabel}[1]{\label{cor:#1}}
\newcommand{\corollaryref}[1]{\hyperref[cor:#1]{Corollary~\ref{cor:#1}}}
\newcommand{\definitionlabel}[1]{\label{def:#1}}
\newcommand{\definitionref}[1]{\hyperref[def:#1]{Definition~\ref{def:#1}}}
\newcommand{\equationlabel}[1]{\label{eq:#1}}
\newcommand{\equationref}[1]{\hyperref[eq:#1]{Equation~\ref{eq:#1}}}
\newcommand{\factlabel}[1]{\label{fact:#1}}
\newcommand{\factref}[1]{\hyperref[fact:#1]{Fact~\ref{fact:#1}}}
\newcommand{\figurelabel}[1]{\label{fig:#1}}
\newcommand{\figureref}[1]{\hyperref[fig:#1]{Figure~\ref{fig:#1}}}
\newcommand{\itemlabel}[1]{\label{item:#1}}
\newcommand{\itemref}[1]{\hyperref[item:#1]{Item~\ref{item:#1}}}
\newcommand{\lemmalabel}[1]{\label{lem:#1}}
\newcommand{\lemmaref}[1]{\hyperref[lem:#1]{Lemma~\ref{lem:#1}}}

\newcommand{\propref}[1]{\hyperref[prop:#1]{Proposition~\ref{prop:#1}}}

\newcommand{\propositionref}[1]{\hyperref[prop:#1]{Proposition~\ref{prop:#1}}}

\newcommand{\remarkref}[1]{\hyperref[rem:#1]{Remark~\ref{rem:#1}}}
\newcommand{\sectionlabel}[1]{\label{sec:#1}}
\newcommand{\sectionref}[1]{\hyperref[sec:#1]{Section~\ref{sec:#1}}}
\newcommand{\theoremlabel}[1]{\label{thm:#1}}
\newcommand{\theoremref}[1]{\hyperref[thm:#1]{Theorem~\ref{thm:#1}}}

%
%

\usepackage[T1]{fontenc}
\usepackage{kpfonts}
\usepackage{microtype}


\newcommand{\Esymb}{\mathbb{E}}
\newcommand{\Psymb}{\mathbb{P}}
\newcommand{\Vsymb}{\mathbb{V}}
\DeclareMathOperator*{\E}{\Esymb}
\DeclareMathOperator*{\Var}{\Vsymb}
\DeclareMathOperator*{\ProbOp}{\Psymb}
\renewcommand{\Pr}{\ProbOp}


\newcommand{\mper}{\,.}
\newcommand{\mcom}{\,,}

\renewcommand{\hat}{\widehat}


\newcommand{\defeq}{\stackrel{\small \mathrm{def}}{=}}
\renewcommand{\leq}{\leqslant}
\renewcommand{\le}{\leqslant}
\renewcommand{\geq}{\geqslant}
\renewcommand{\ge}{\geqslant}



\newcommand{\Set}[1]{\left\{#1\right\}}





\newcommand{\signs}{\{-1,1\}}

\newcommand{\R}{\mathbb{R}}


\usepackage{bm}











\newcommand{\sign}{\mathit{sign}}

\renewcommand{\epsilon}{\varepsilon}


\newcommand{\remove}[1]{}

%
%


%
%

\newcommand{\ignore}[1]{}


\title{Beyond worst-case analysis in\\ private singular vector computation}
\author{Moritz Hardt\thanks{IBM Almaden Research. Email: {\tt
mhardt@us.ibm.com}} \and Aaron Roth\thanks{Department of Computer and Information Sciences, University of Pennsylvania. Supported in party by NSF grant CNS-1065060. Email: {\tt aaroth@cis.upenn.edu}}}

\begin{document}
\maketitle
\begin{abstract}
We consider differentially private approximate singular vector computation.
Known worst-case lower bounds show that the error of any differentially
private algorithm must scale polynomially with the dimension of the singular
vector. We are able to replace this dependence on the dimension by a natural
parameter known as the \emph{coherence} of the matrix that is often
observed to be significantly smaller than the dimension
both theoretically and empirically.
We also prove a matching lower bound showing that our guarantee is
nearly optimal for every setting of the coherence parameter.
Notably, we achieve our bounds by giving a robust analysis of the well-known
power iteration algorithm, which may be of independent interest. Our algorithm also leads to improvements in
worst-case settings and to better low-rank approximations in the spectral norm.
\end{abstract}
\vfill
\thispagestyle{empty}
\pagebreak
%
%
%
%

\section{Introduction}
Spectral analysis of graphs and matrices is one of the most fundamental tools in
data mining. The singular vectors of data matrices are used for spectral
clustering, principal component analysis, latent semantic indexing, manifold
learning, multi-dimensional scaling, low rank matrix approximation,
collaborative filtering, and matrix completion.
They provide a means of avoiding the curse of dimensionality by
discovering an (approximate) low-dimensional representation of seemingly very
high dimensional data. Unfortunately, many of the datasets for which
spectral methods are ideal are composed of sensitive user information:
browsing histories, friendship networks, movie reviews, and other data
collected from private user interactions. The Netflix prize dataset is a
perfect example of this phenomenon: a dataset of supposedly "anonymized" user
records was released for the Netflix Prize Challenge, which was a matrix of
user/movie review pairs. The goal of the competition was to
predict user/movie review pairs missing from the matrix.  Unfortunately, the
ad-hoc anonymization of this dataset proved to be insufficient, and Narayanan
and Shmatikov \cite{NarayananS08} were able to re-identify many of the users. Because
of the privacy concerns that the attack brought to light, the
second proposed Netflix challenge was canceled.

In the past decade, a rigorous formulation of privacy known as differential
privacy has been developed, along with a collection of powerful theoretical
results.  With very few exceptions, existing algorithms come with utility guarantees that hold in the worst case over the choice of the private data. As a result, these utility bounds can sometimes be too weak to be meaningful on particular data sets of interest.

Several algorithms are known for computing approximate top singular vectors of a matrix under differential privacy. In fact, nearly optimal error
bounds are known in the worst case. Unfortunately, differential privacy
unavoidably forces these bounds to degrade with the
dimension of the data. More concretely, given an $n\times n$ matrix $A,$ any
differentially private algorithm must in the worst case
output a vector $x$ such that $\|Ax\|_2\le \sigma_1(A) - O(\sqrt{n}),$ where
$\sigma_1(A)$ denotes the top singular value of $A.$ If the matrix $A$ has
bounded entries and is sparse as is very common, the dependence on $n$ in the
error term can easily overwhelm the signal.
This dependence on $n$ is discouraging, because one of the most compelling goals
of tools such as PCA is to overcome the ``curse of dimensionality'' inherent
in the analysis of very high dimensional data.  We therefore ask the question:
Can we hope to achieve a nearly \emph{dimension-free} bound under a reasonable
assumption on the input matrix?

We answer this question in the affirmative. Specifically, we give
an algorithm to compute an approximate singular vector that achieves
error $O(\sqrt{\mu(A)}\log(n)).$ Here, $\mu(A)$ denotes the \emph{coherence}
of the input matrix. The coherence varies between $1$ and $n.$ We say that $A$
has low coherence if $\mu(A)$ is significantly smaller than $n.$
Roughly, a matrix has low coherence if none of its singular vectors have any
large coordinates.  Low coherence is a widely observed property of large
matrices.  Random models exhibit low coherence as well as many real-world
matrices.  Indeed, many recent results in matrix completion, Robust Principal
Component Analysis and Low-rank approximation rely crucially on the assumption
that the input matrix has low coherence.
%
%
The error of our algorithm depends essentially only on the square root of the
coherence of the data matrix.
Moreover, we show that the exact dependence on the coherence that
we achieve is best possible: Specifically,
for each value of the coherence parameter, we give a family of matrices for
which no differentially private algorithm can get a better approximation to
the top singular vector than our algorithm does, up to logarithmic factors.


Our algorithm is also highly efficient and can be implemented using a nearly
linear number of vector inner product computations. In particular, our running
time is nearly linear in the number of nonzeros of the matrix. In fact, our algorithm is
a new variant of the classical power iteration method that has long been the
basis of many practical eigenvalue solvers.

\subsection{Our Results}
We say that a matrix $A \in \mathbb{R}^{m\times n}$ with singular value
decomposition $A = U\Sigma V^t$ has \emph{coherence}
\[
\mu(A) \defeq\left\{m\|U\|^2_\infty, n\|V\|^2_\infty\right\}\mper
\]
For now we assume that $m=n,$ but all of our results apply to general
matrices. Note that $\mu(A) \in [1,n]$.
We give a simple $(\epsilon,\delta)$-differentially
private algorithm which achieves the following guarantee.

\begin{theorem}[Informal, some parameters hidden]
For any matrix $A$ that satisfies a mild assumption on the decay of its singular values, Private Power Iteration returns a vector~$x$ such that with high probability
\[
\frac{\|Ax\|}{\|x\|} \geq \sigma_1(A) - O\left(
\epsilon^{-1}\sqrt{\mu(A)\log(1/\delta)}\log n\right)
\]
\end{theorem}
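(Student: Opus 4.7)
The plan is to privatize the classical power method by injecting Gaussian noise into each matrix-vector product, and to couple the utility analysis to the coherence of~$A$. Concretely, I will (i)~describe a noisy power iteration, (ii)~analyze privacy via the Gaussian mechanism plus advanced composition, (iii)~use low coherence to ensure that the iterates stay \emph{spread out}, which is what makes the required noise scale with $\sqrt{\mu}$ rather than $\sqrt{n}$, and (iv)~carry out a robust convergence analysis that tolerates the noise.

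\textbf{Algorithm and privacy.} Starting from a random unit vector $x_0$, perform $T = O(\log n)$ rounds of $\tilde y_t = A x_t + g_t$ and $x_{t+1}\propto A^t \tilde y_t + h_t$, where $g_t,h_t$ are spherical Gaussians of per-coordinate variance $\sigma^2$. If $A$ and $A'$ differ in a single entry by at most~$1$, then $\normt{(A-A')x_t}\leq \normi{x_t}$, so the $\ell_2$-sensitivity of each matrix-vector product is governed by the $\ell_\infty$ norm of the current iterate. The Gaussian mechanism gives $(\eps_0,\delta_0)$-DP per round for $\sigma\approx \normi{x_t}\sqrt{\log(1/\delta_0)}/\eps_0$, and advanced composition over the $T$ rounds with $\eps_0\approx \eps/\sqrt{T\log(1/\delta)}$ delivers overall $(\eps,\delta)$-differential privacy.

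\textbf{Coherence controls the sensitivity.} The top right singular vector $v_1$ satisfies $\normi{v_1}\leq \sqrt{\mu(A)/n}$ by definition of coherence, and any vector close to the span of the (low-coherence) top singular vectors inherits the same $\ell_\infty$ bound. I plan to establish inductively that $\normi{x_t}=O(\sqrt{\mu(A)/n})$ along the entire trajectory: a uniformly random initializer already satisfies this up to a $\sqrt{\log n}$ factor with high probability, the tail component (orthogonal to the top singular subspace) shrinks geometrically by standard power-iteration dynamics, and the head component, being confined to the coherence-controlled subspace, retains the spread-out property even after adding Gaussian noise whose $\ell_\infty$ norm is itself $O(\sqrt{\log n / n})$. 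Plugging $\normi{x_t}=O(\sqrt{\mu/n})$ into the privacy bound yields per-round noise $\normt{g_t}=O(\sqrt{\mu\log(1/\delta)}\log n/\eps)$ after accounting for the dimension-$n$ norm of the Gaussian.

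\textbf{Robust power iteration and main obstacle.} Given noise of norm $B = O(\sqrt{\mu\log(1/\delta)}\log n/\eps)$ per iteration, I will adapt the standard analysis of power iteration in the presence of adversarial perturbations: writing $x_t$ in the singular basis, a single round multiplies the $v_1$-component by $\sigma_1^2$ and the $k$-th tail component by $\sigma_k^2$, with an additive noise term of norm at most $B$ per round. Under the mild gap assumption on the singular values, these dynamics drive $x_T$ to within $O(B/\sigma_1)$ of $v_1$ after $T=O(\log n)$ rounds, giving $\normt{Ax_T}/\normt{x_T}\geq \sigma_1(A)-O(B)$ as claimed. The chief technical difficulty is the circular coupling between Steps (ii) and (iii): the per-round noise needed for privacy is controlled by $\normi{x_t}$, but bounding $\normi{x_t}$ requires that $x_t$ already lies close to the coherence-controlled top subspace, which is precisely the conclusion of the utility argument. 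Resolving this will require a joint inductive invariant that simultaneously tracks the geometric decay of the tail energy of $x_t$ \emph{and} the preservation of its $\ell_\infty$ bound under the Gaussian perturbation, with the random initialization used to place the very first iterate safely into the regime where the invariant can be bootstrapped.
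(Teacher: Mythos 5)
The overall architecture you describe---noisy power iteration, Gaussian mechanism calibrated to $\|x_t\|_\infty$, advanced composition over $T=O(\log n)$ rounds, and a robust convergence analysis tolerating per-round perturbations---matches the paper, and you have correctly located the crux: the circular dependence between the noise scale (which needs $\|x_t\|_\infty$ small) and the utility argument. But your proposed resolution of that circularity is not the one that works, and two essential ideas are missing. On the privacy side, you cannot calibrate the noise to the realized $\|x_t\|_\infty$ and argue privacy afterwards, since the noise magnitude would then itself be a data-dependent quantity; the paper instead fixes the per-coordinate noise variance in advance to $C\sigma^2/n$ for a \emph{public} coherence bound $C$, and has the algorithm explicitly test $\|x_{t-1}\|_\infty^2\le C/n$ at every round, outputting ``fail'' otherwise. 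Privacy then holds unconditionally, and utility reduces to showing the test passes with high probability---this decoupling is what breaks the circle.

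More fundamentally, your plan to bound $\|x_t\|_\infty$ by arguing that $x_t$ converges into a ``coherence-controlled top subspace'' would not give the claimed bound. A unit vector lying exactly in the span of $k$ incoherent orthonormal singular vectors can still have $\ell_\infty$ norm as large as $\sqrt{k\,\mu(A)/n}$ (all coefficients equal, supports aligned), so confinement to a subspace yields a bound degrading with the subspace dimension; and in early rounds $x_t$ has mass on essentially all $n$ singular directions anyway. The paper's key ingredient is the Sign Symmetry Lemma: writing $x_t=\sum_i s_i\alpha_i u_i$ in the eigenbasis, the signs $s_i=\mathrm{sign}(\langle u_i,x_t\rangle)$ remain independent, unbiased $\pm 1$ random variables throughout the noisy iteration, because both the Gaussian initializer and the Gaussian perturbations are symmetric. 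A per-coordinate Chernoff bound then gives $\|x_t\|_\infty\le 4\sqrt{\mu(A)/n}\cdot\sqrt{\log n}$ with high probability, independently of how many coefficients $\alpha_i$ are active and without any appeal to convergence. Without this (or an equivalent) argument, the ``joint inductive invariant'' you propose cannot be closed.
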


We also show a nearly matching lower bound:
\begin{theorem}[Informal]
For any coherence parameter $c \in \{2,\dots,n\}$, there exists a family of matrices
$\mathcal{A}$ such that for each $A \in \mathcal{A}$, $\mu(A) = c$, and such
that for every $(\epsilon,\delta)$-differentially private algorithm $M$ with
$\delta = \Omega(1/n)$ there is a matrix $A \in \mathcal{A}$ so that with high
probability, $M(A)$ outputs a vector $x$ such that
\[
\frac{\|A x\|_2}{\|x\|_2} \leq \sigma_1(A) -
\Omega\left(\epsilon^{-1}\sqrt{\mu(A)}\right)
\]
\end{theorem}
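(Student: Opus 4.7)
The lower bound is naturally proved by a packing-and-reduction argument. I propose to construct, for each coherence value $c$, a family $\mathcal{A}_c = \{A_z\}_z$ of matrices of coherence exactly $c$ whose top right singular vector encodes a secret sign string~$z$, and then show that any $(\epsilon,\delta)$-DP algorithm achieving error better than $\sqrt{c}/\epsilon$ on this family can be post-processed into a DP procedure that reconstructs most of the bits of $z$, contradicting a standard DP reconstruction lower bound (e.g.\ fingerprinting codes) for $\delta = \Omega(1/n)$.

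Concretely, fix a subset $I \subseteq [n]$ of size $s = \lceil n/c \rceil$ and a maximally spread-out fixed left vector $u \in \mathbb{R}^m$ with $m\|u\|_\infty^2 = 1$. For each $z \in \{-1,+1\}^s$, set $v_z = s^{-1/2}\sum_{i \in I} z_i e_i$ and $A_z = \sigma\, u v_z^{\top}$. Then $\sigma_1(A_z) = \sigma$, the top right singular vector of $A_z$ is $v_z$, $\|v_z\|_\infty^2 = 1/s$, and the coherence of $A_z$ equals $\max\{1, n/s\} = c$. The scale $\sigma$ is tuned to the largest value consistent with the DP neighboring relation (bounded row $\ell_2$ norm), so the adversary makes the signal as strong as privacy allows.

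A one-line rank-one Davis--Kahan estimate turns the guarantee $\|A_z x\| \geq \sigma_1(A_z) - \gamma$ into $\langle x, v_z \rangle^2 \geq 1 - 2\gamma/\sigma$. Plugging in $\gamma = o(\sqrt{c}/\epsilon)$ makes $x$ close to $v_z$; thresholding the coordinates of $x$ on $I$ then recovers a $(1-o(1))$-fraction of the sign bits $z_i$. At this point I would invoke a standard $(\epsilon, \delta)$-DP reconstruction lower bound --- such as fingerprinting-code lower bounds in the style of Bun--Ullman--Vadhan, or a tracing attack in the style of Dwork--McSherry--Nissim--Smith --- which for $\delta = \Omega(1/n)$ rules out recovering this many bits of a uniformly random $z$ from the data $A_z$ whenever the embedded signal-to-noise ratio is below the fingerprinting threshold. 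Matching the threshold to the construction gives the claimed $\Omega(\sqrt{c}/\epsilon)$ bound.

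The main obstacle I anticipate is making all the scales line up cleanly across the full range $c \in \{2, \ldots, n\}$: the signal $\sigma$ must be at most the DP sensitivity bound (to produce a valid adversary instance) and at least large enough for Davis--Kahan to retain a nontrivial gap, and these two requirements must remain compatible with the chosen fingerprinting scale. A secondary issue is that for large~$c$ (where the support $s = n/c$ is small) the $s$ sign bits alone do not carry enough entropy to drive a fingerprinting bound of strength $\sqrt{c}/\epsilon$; in that regime one would additionally randomize the choice of support $I$ and reduce from distinguishing supports rather than signs, which fits in the same packing framework.
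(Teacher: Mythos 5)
Your high-level architecture matches the paper's: a family of rank-one matrices whose top right singular vector encodes a secret, a rank-one computation showing that $\|A x\|\ge(1-\text{const})\sigma_1$ forces $\langle x, v\rangle$ close to $1$, and a reduction to a differentially private reconstruction impossibility. The gap is in the parameterization of the construction, and it is fatal for essentially the whole range of $c$, not just the large-$c$ regime you flag. You place the coherence knob and the secret in the same factor: $v_z$ has support $s=n/c$ (which sets $\mu=c$) and carries the secret, while $u$ is fully spread. Under the paper's neighboring relation (one entry changed by at most $1$), flipping one bit of $z$ changes an entire column of $A_z$, i.e.\ $n$ entries, so the group-privacy factor in the reduction from ``DP in the entries of $A_z$'' to ``DP in the bits of $z$'' is $n$ \emph{independent of $c$}. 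The reconstruction contradiction therefore only bites when $\epsilon\lesssim 1/n$, at which point the target bound $\sqrt{c}/\epsilon\approx n\sqrt{c}$ must be compared against $\sigma_1$, and boundedness of the entries forces $\sigma\le\sqrt{sn}=n/\sqrt{c}$. So the error your family can certify, $\Omega(\sigma_1)=\Omega(n/\sqrt{c})$, falls short of the claimed $\Omega(n\sqrt{c})$ by a factor of $c$. (Your stated normalization ``bounded row $\ell_2$ norm'' is also not the paper's privacy model.)

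The paper's construction resolves this by swapping the roles of the two factors: the secret is a $\{0,1\}^n$ database $D$ with exactly $n/2$ ones sitting in the \emph{right} factor $\bar D=D/\|D\|$ (so there are always $n/2$ secret bits and $n\|\bar D\|_\infty^2=2$), while the \emph{left} factor $u$ is the knob, supported on $s(C)=n/C$ coordinates so that $\mu=C$. Then $M_D$ is just $s(C)$ identical copies of the row $D$, so one database bit touches only $n/C$ entries; group privacy gives the factor $n/C$, one sets $\epsilon=C/n$, and $\sigma_1=n/\sqrt{2C}=\Theta(\sqrt{C}/\epsilon)$, so all three scales (signal, entry boundedness, group-privacy threshold) line up simultaneously for every $C$. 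This single change also removes your entropy problem for large $c$. A minor further difference: the reconstruction impossibility in the paper is an elementary swap argument applied once to a pair of neighboring databases (round $v$ to a candidate $D^*$, compare the recovery probability of a random $1$-index against a random $0$-index, and contradict the DP inequality), rather than an appeal to fingerprinting codes; the heavier machinery you cite would presumably also work on the corrected construction but is not needed.
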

Note that in addition to showing that our dependence on $\mu(A)$ is tight,
this theorem shows that the error of any data-independent guarantee must be at
least $\Omega\left(\epsilon^{-1}\sqrt{n}\right)$.


Finally, we show how our algorithm can be used to compute accurate rank
$k$-approximations to the private matrix $A$ in the spectral norm, for any
$k$. For $k = 1$, the quality of our approximation is optimal. For $k \geq 2$,
as in previous work \cite{HardtR12}, our bounds depend on $r$, where $r$ is
the rank of $A$. Note that these bounds still improve on the best worst-case
bounds when $A$ is low rank.

\begin{theorem}[Informal, some parameters hidden]
There is an $(\epsilon,\delta)$-differentially private algorithm such that
for any matrix $A$ that satisfies a mild assumption on the decay of its singular values,
it  returns a rank-1 matrix $A_1$
such that with high probability
\[
\|A-A_1\|_2 \leq \sigma_2(A) + O\left(\epsilon^{-1}\sqrt{\mu(A)\log(1/\delta)}\log n\right)
\]
Moreover, there is an $(\epsilon,\delta)$-differentially private algorithm
such that for any rank $r$ matrix $A$ that satisfies a mild assumption on the
decay of its singular values, it  returns a rank-$k$ matrix $A_k$ such that with
high probability: $$\|A-A_k\|_2 \leq \sigma_{k+1}(A) +
O\left(\epsilon^{-1}k^2\sqrt{(r\cdot \mu(A)+k\log n)\log(1/\delta)}\log
n\right)$$
\end{theorem}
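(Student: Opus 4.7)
The plan is to reduce both claims to the utility guarantee of Private Power Iteration, using a direct conversion for rank $1$ and an iterative deflation scheme for general $k$. Write $\alpha(\mu) = O(\e^{-1}\sqrt{\mu\log(1/\delta)}\log n)$ for the per-call error of Private Power Iteration on a matrix of coherence $\mu$.

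\emph{Rank-1 case.} Given the unit vector $\hat v$ returned by Private Power Iteration, form $A_1 := A\hat v\hat v^\top$. Comparing to the best rank-1 approximation $\sigma_1(A) u_1 v_1^\top$, which has spectral-norm error exactly $\sigma_2(A)$, the triangle inequality gives
\[
\|A - A_1\|_2 \leq \|A - A v_1 v_1^\top\|_2 + \|A(v_1 v_1^\top - \hat v\hat v^\top)\|_2 \leq \sigma_2(A) + \sigma_1(A)\sin\theta\mcom
\]
where $\theta = \angle(\hat v,v_1)$ and we used the standard identity $\|v_1 v_1^\top - \hat v\hat v^\top\|_2 = \sin\theta$ for unit vectors. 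Expanding $\hat v$ in the right singular basis and combining the utility guarantee $\|A\hat v\|_2 \geq \sigma_1(A) - \alpha(\mu(A))$ with the spectral gap provided by the decay assumption, a standard perturbation computation pins $\sigma_1(A)\sin\theta$ to $O(\alpha(\mu(A)))$, yielding the first claim.

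\emph{Rank-$k$ case via deflation.} Let $\hat V_i \in \R^{n\times i}$ have as columns the approximate singular vectors obtained in the first $i$ iterations. At step $i+1$, invoke Private Power Iteration on the deflated matrix $A^{(i+1)} := A(I - \hat V_i \hat V_i^\top)$ to get $\hat v_{i+1}$, append it to form $\hat V_{i+1}$, and after $k$ iterations output $A_k := A\hat V_k \hat V_k^\top$. Two pieces must fit together. \emph{Privacy:} deflation is post-processing of the already-released columns, so only the $k$ calls to Private Power Iteration consume budget; advanced composition across them yields the overall $(\e,\delta)$-guarantee and accounts for the extra $\sqrt{\log(1/\delta)}$ factor inside the square root. \emph{Utility:} the key new ingredient is a coherence bound $\mu(A^{(i)}) \leq O(r\cdot\mu(A) + k\log n)$. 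Since $A$ has rank $r$, each $A^{(i)}$ has rank at most $r$, and its singular vectors live---up to the drift from using approximate rather than exact deflators---in a subspace spanned by trailing true right singular vectors of $A$. Any unit vector in a rank-$r$ subspace whose basis has squared $\ell_\infty$-mass at most $\mu(A)/n$ itself has squared $\ell_\infty$-mass at most $r\mu(A)/n$, so the ideal deflation gives $\mu \leq r\mu(A)$; the additive $k\log n$ absorbs the deflation drift. Plugging this into the main utility theorem produces per-iteration error of order $\sqrt{(r\mu(A)+k\log n)\log(1/\delta)}\log n/\e$, and telescoping $\|A - A_k\|_2$ through a sequence of $k$ rank-1 perturbations, exactly as in the rank-1 argument but summed over $k$ steps with deflation at each level, yields the outer $k^2$ factor (one $k$ from summing the perturbations, another from the coherence propagation).

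\emph{Main obstacle.} The coherence-propagation step is the crux. Because the deflators are noisy, the invariant subspace of $A^{(i)}$ is only close to---not equal to---the span of the true trailing right singular vectors of $A$, and this drift accumulates across iterations. I would invoke a Davis--Kahan-style $\sin\Theta$ theorem to quantify the drift and transfer the $\ell_\infty$-control from the true singular vectors to the perturbed subspace, taking care not to lose more than polynomial factors in $r$ and $k$. In parallel, one must verify inductively that the spectral-gap hypothesis underlying the single-vector guarantee survives deflation, so that Private Power Iteration remains applicable at every step with the per-iteration error promised above.
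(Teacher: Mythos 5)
Your high-level architecture---reduce to the utility guarantee of Private Power Iteration, then deflate $k$ times with composition handling the privacy budget---matches the paper. But the two technical steps you actually lean on are different from the paper's, and both have genuine gaps.

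First, the rank-$1$ conversion. Bounding $\|A-A\hat v\hat v^\top\|_2\le\sigma_2+\sigma_1\sin\theta$ and then ``pinning $\sigma_1\sin\theta$ to $O(\alpha)$ using the spectral gap provided by the decay assumption'' presupposes a gap between $\sigma_1$ and $\sigma_2$. The paper's mild decay assumption is only that $\sigma_k\le(1-\gamma/2)\sigma_1$ for \emph{some} $k$; it allows $\sigma_2=\sigma_1$, in which case $\hat v$ may be any vector in the top two-dimensional singular subspace and $\sin\theta$ can equal $1$, so no perturbation computation controls the angle. Moreover the choice $A_1=A\hat v\hat v^\top$ is itself lossy: for $A=\mathrm{diag}(\sigma_1,0,\dots,0)$ and a unit $\hat v$ with $\|A\hat v\|=\sigma_1-\alpha$, one computes $\|A(I-\hat v\hat v^\top)\|_2\approx\sqrt{2\sigma_1\alpha}$, a square-root degradation relative to the claimed additive error. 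The paper instead outputs $\hat\sigma_1\hat v\hat v^\top$, converts $\|A\hat v\|\ge(1-\alpha)\lambda_1$ into the Rayleigh-quotient bound $\hat v^\top A\hat v\ge(1-5\alpha)\lambda_1$, and invokes the Kapralov--Talwar deflation lemma (Corollary~\ref{cor:modified}), which yields eigenvalue interlacing for the deflated matrix with no assumption on the top gap and no angle argument at all.

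Second, and more seriously, the coherence propagation, which you correctly flag as the crux. Davis--Kahan is the wrong tool: $\sin\Theta$ theorems control $\ell_2$ distances between subspaces, whereas incoherence is an $\ell_\infty$ property of the individual singular vectors. An $\ell_2$ drift of magnitude $\eta$ can add $\eta$ to the $\ell_\infty$-norm of a unit vector in the span, so preserving $\|w\|_\infty^2\lesssim\mu/n$ would require the accumulated drift to be below $\sqrt{\mu/n}$---far smaller than the noise scale PPI injects. The paper's Lemma~\ref{lem:deflationcoherence} sidesteps drift entirely: each deflator has the form $A'_{i-1}t_i+c_in_i$ where $n_i$ is the random direction introduced by the algorithm, so the singular vectors of the $i$-th deflated matrix lie \emph{exactly} in $\mathrm{span}(u_1,\dots,u_r,n_1,\dots,n_i)$; there is nothing approximate to control. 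What remains is to show that an incoherent orthonormal basis augmented by a few uniformly random directions stays incoherent, which is \lemmaref{mu0-perturb} and \lemmaref{mu0-infty} imported from \cite{HardtR12}. The factor $r$ and the additive $i\log n$ in $\mu(A'_i)\le 2r\mu(A)+O(i\log n)$ come from that argument, not from ``absorbing deflation drift.'' Without this exact-containment idea your plan for the rank-$k$ bound does not go through.
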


\subsection{More efficient and improved worst-case bounds}

Our robust power iteration analysis can also be applied easily to worst-case
settings without any incoherence assumptions. For example, we resolve multiple
questions asked by Kapralov and Talwar~\cite{KapralovT13}. Specifically, we
improve the running time of their algorithm by large polynomial factors, give a
much simpler algorithm and improve the error dependence on~$k.$ In the main body of the
paper we study differential privacy under changes of single entries. Here, we
consider unit changes in spectral norm as proposed by~\cite{KapralovT13}. Our
algorithm easily adapts to this definition and gives the following corollary.
\begin{corollary}
There is an algorithm
such that for every matrix $A$ that satisfies a mild assumption on the
decay of its singular values, it returns a rank-$k$ matrix $A_k$ such that with
high probability,
\begin{equation}\equationlabel{spectral}
\|A-A_k\|_2 \leq \sigma_{k+1}(A) +
O\left(\epsilon^{-1}k^2\sqrt{n\log(1/\delta)}\log n\right)\mper
\end{equation}
Moreover, the algorithm satisfies $(\epsilon,\delta)$-differential privacy
under unit spectral perturbations. For $(\epsilon,0)$-differential privacy the
error bound satisfies
\[
\|A-A_k\|_2 \leq \sigma_{k+1}(A) +
O\left(\epsilon^{-1}k^2n\log n\right)\mper
\]
\end{corollary}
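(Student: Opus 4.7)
The plan is to reuse the Private Power Iteration algorithm and its robust analysis from the main body and merely recalibrate the per-iteration noise to match the sensitivity of the spectral neighboring relation $\|A - A'\|_2 \leq 1$ instead of the single-entry relation. The robust power iteration analysis is modular: given a per-iteration bound $g$ on the $\ell_2$-norm of the injected perturbation, it produces a final error of order $O(g\log n)$ for the leading singular vector and $O(k^2 g\log n)$ for a rank-$k$ approximation via deflation. So the whole proof reduces to pinning down $g$ under the new sensitivity.

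The key observation is that spectral perturbations are extremely friendly to matrix-vector products: if $\|A - A'\|_2 \leq 1$ and $v$ is any unit vector, then
\[
\|Av - A'v\|_2 \leq \|A - A'\|_2 \cdot \|v\|_2 \leq 1,
\]
and symmetrically for $A^\top u$. Hence the $\ell_2$-sensitivity of one iteration is exactly $1$, with no coherence or dimension factor; this replaces the $\sqrt{\mu(A)/n}$ per-coordinate sensitivity that drives the bound in the entry-change setting. Since the deflation used in the rank-$k$ algorithm is linear in $A$, this sensitivity bound is preserved at every level of the recursion.

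For the $(\epsilon,\delta)$ case I would instantiate the per-iteration mechanism as Gaussian and use advanced composition across the $T = O(\log n)$ iterations. Calibrating each iteration to $(\epsilon/O(\sqrt{T\log(1/\delta)}), \delta/T)$-differential privacy gives per-coordinate standard deviation $O(\sqrt{\log(1/\delta)\log n}/\epsilon)$, hence per-iteration $\ell_2$-norm $g = O(\sqrt{n\log(1/\delta)}\log n/\epsilon)$; plugging this into the modular bound yields the claimed $O(\epsilon^{-1}k^2\sqrt{n\log(1/\delta)}\log n)$. For $(\epsilon,0)$-differential privacy I would use coordinatewise Laplace noise and basic composition; since $\|(A-A')v\|_1 \leq \sqrt{n}\,\|(A-A')v\|_2 \leq \sqrt{n}$, each coordinate needs Laplace noise of scale $O(\sqrt{n}\log n/\epsilon)$, giving $g = O(n\log n/\epsilon)$ and rank-$k$ error $O(\epsilon^{-1}k^2 n\log n)$. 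The mild singular value decay hypothesis is a property of $A$ alone and passes through unchanged, so the main remaining step, and the only place I expect any friction, is verifying that the composition theorems and the robust power iteration bound compose as a black box under the new noise calibration rather than through any argument that specifically exploited the entry-change sensitivity structure.
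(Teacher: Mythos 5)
Your proposal is correct and matches the argument the paper intends (the paper itself states this corollary without a written proof, attributing it to the fact that the robust power-iteration analysis is modular in the perturbation magnitude): under unit spectral perturbations the $\ell_2$-sensitivity of each matrix--vector product $Ax_{t-1}$ is $\|A-A'\|_2\|x_{t-1}\|\le 1$, so the whole incoherence/sign-symmetry machinery used to control $\|x_t\|_\infty$ becomes unnecessary, and one simply calibrates Gaussian (resp.\ Laplace) noise to this sensitivity and feeds the resulting $\|g_t\|$ bound into \corollaryref{robust} and the deflation analysis. Your bookkeeping of the noise scales ($g=O(\epsilon^{-1}\sqrt{n\log(1/\delta)}\log n)$ via advanced composition, $g=O(\epsilon^{-1}n\log n)$ via $\ell_1$-sensitivity $\sqrt n$ and basic composition) reproduces the stated bounds at the same level of precision as the paper's informal statements.
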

We stress that \equationref{spectral} is the first bound for
$(\epsilon,\delta)$-differential privacy under unit spectral norm
perturbations. The dependence on $n$ matches the error achieved by randomized
response for single entry changes.
%
%
%
\subsection{Our Techniques}
Our main technical contribution includes a novel ``robust'' analysis of the
classical power iteration algorithm for computing the top eigenvector of a
matrix, which may be of independent interest. Specifically, we analyze power
iteration in which an arbitrary sequence of perturbations $g_1,\ldots,g_t$ may
be added to the matrix vector products at each round $1,\ldots,T$. We give
simple conditions on the perturbation vectors $g_1,\ldots,g_t$ such that under
these conditions, perturbed powering of a matrix $A \in \mathbb{R}^{n\times
n}$ for $O(\log n)$ rounds results in a vector $x$ such that: $\|Ax\|/\|x\|
\geq (1-\beta)\sigma_1(A)$ where $\sigma_1(A)$ is the top singular value of
$A$. Using this general analysis, we are then free to choose the perturbations
appropriately to guarantee differential privacy. The accuracy bounds we obtain
are a function of the scale of the noise that is necessary for privacy.

It is immediate that the magnitude of the perturbation that must be used to
guarantee differential privacy (of the matrix) when computing a matrix vector
product is proportional to the magnitude of the largest coordinate in the
vector. To prove our accuracy guarantees, therefore, it suffices to bound the
maximum magnitude of any coefficient of any of the vectors $x_1,\ldots,x_T$
that emerge during the steps of power iteration. Of course, if the matrix is
incoherent, then each $x_t$ can be written as a linear combination of basis
vectors that each have small coordinates $x_t = \sum_{i=1}^n \alpha_iv_i.$
Unfortunately this does not suffice to guarantee that $x_t$ will have small
coordinates without incurring a blow-up that depends on the number of nonzero
coefficients. However, we show that at each round,
$\mathrm{sign}(\alpha_1),\ldots,\mathrm{sign}(\alpha_n)$ are independent,
unbiased $\signs$ random variables. This, together with the incoherence assumption,
is enough to complete the analysis.

Finding a unit vector $x$ such that $\|Ax\| \geq (1-\beta)\sigma_1(A)$ is
sufficient to compute an accurate rank-$1$ approximation to $A$ in spectral
norm. If $x$ was exactly equal to the top singular vector of $A$, we could
then recurse, and compute the top singular vector of $A' = A - \sigma_1xx^T$,
from which we could compute an optimal rank $2$ approximation to $A$.
Unfortunately, $x$ is only an approximation to the top singular vector.
Therefore, in order to be able to usefully recurse on $A' = A -
\hat{\sigma_1}xx^T$, we require two conditions: (1) That $\|A'\|_2 \approx
\sigma_2(A)$, and (2) that $A'$ is nearly as incoherent as $A$. Condition (1) has
already been shown by Kapralov and Talwar \cite{KapralovT13}. Therefore, it
remains for us to show condition (2). We show that indeed the incoherence of the
matrix cannot increase by more than a factor of $\sqrt{r}$, where $r$ is the
rank of $A$, during any number of ``deflation'' steps. However, we do not know
whether this factor of $\sqrt{r}$ is necessary, or is merely an artifact of
our analysis. We leave removing this factor of $\sqrt{r}$ from our
approximation factor for computing rank-$k$ approximations when $k \geq 2$ as
an intriguing open problem.

Finally, we give a pointwise lower bound that shows that (up to log factors),
our algorithm for privately computing singular vectors is tight \emph{for
every setting of the coherence parameter}. We do this by reducing to
reconstruction lower bounds of Dinur and Nissim \cite{DinurN03}. Specifically,
we show, for every coherence parameter $C$, how to construct a matrix with
coherence $C$ from some private bit-valued database $D$ such that improving on
the performance of our algorithm would imply that an adversary would be able
to reconstruct $D$. Since reconstruction attacks are precluded by reasonable
values of $\epsilon$ and $\delta$, a lower bound for all $(\epsilon,\delta)$
private algorithms follows.

\subsection{Related Work}
There is by now an extensive literature on a wide variety of differentially private computations, which we do not attempt to survey here. Instead we focus on only the most relevant recent work.

There are several papers that consider the problem of privately approximating
the singular vectors of a matrix without any assumptions on the data. Blum et
al. \cite{BlumDMN05} first studied this problem, and gave a simple ``input
perturbation'' algorithm based on adding noise directly to the covariance
matrix. Chaudhuri et al \cite{ChaudhuriSS12} and Kapralov and Talwar \cite{KapralovT13} give matching worst-case
upper and lower bounds for privately computing the top eigenvector of a matrix
under the constraint of $(\epsilon,0)$-differential privacy: They achieve additive error $O(n/\epsilon)$. Both algorithms involve sampling a singular vector from the exponential mechanism. \cite{KapralovT13} also give a polynomial time algorithm for performing this sampling from the exponential mechanism, whereas \cite{ChaudhuriSS12} give a heuristic, but practical implementation using Markov-Chain Monte-Carlo. Our algorithm matches these worst case bounds, and also gives worst case bounds for $(\epsilon,\delta)$-privacy, with error $O(\sqrt{n}/\epsilon)$. In the event that the matrix has low coherence, we improve substantially over the worst case bounds. Moreover, we give the first analysis of a natural, efficient algorithm for this problem. Indeed, our algorithm is simply a variant on the classic power iteration method, and runs in time nearly linear in the input sparsity.

Low coherence conditions have been recently studied in a number of papers for
a number of matrix problems, and is a commonly satisfied condition on
matrices. Recently, Candes and Recht \cite{CandesR09} and Candes and Tao
\cite{CandesT10} considered the problem of \emph{matrix completion}. Accurate matrix completion
is impossible for arbitrary matrices, but \cite{CandesR09,CandesT10} show the
remarkable result that it is possible under low coherence assumptions. Candes
and Tao \cite{CandesT10} also show that almost every matrix satisfies a low
coherence condition, in the sense that randomly generated matrices will be low
coherence with extremely high probability.

Talwalkar and Rostamizadeh recently used low-coherence assumptions for the
problem of (non-private) low-rank matrix approximation \cite{TalwalkarR10}.  They showed that under low-coherence assumptions
similar to those of \cite{CandesR09,CandesT10}, the spectrum of a matrix is in fact
well approximated by a small number of randomly sampled columns, and give
formal guarantees on the approximation quality of the sampling based
Nystr\"{o}m method of low-rank matrix approximation.

Most related to this paper is Hardt and Roth \cite{HardtR12}, which gives an
algorithm for giving a rank-$k$ approximation to a private matrix $A$ in the
\emph{Frobenius} norm, where the approximation quality also depends on a
(slightly different) notion of matrix coherence. This work differs from
\cite{HardtR12} in several respects. First, a matrix may not have any good
approximation in the Frobenius norm (and hence the bounds of \cite{HardtR12}
might be vacuous), but still might have an excellent approximation in the spectral
norm. Second, \cite{HardtR12} does not give any means to actually compute the top
singular vector of the private matrix, and hence cannot be easily used for
applications (such as PCA, or spectral clustering) that require direct access
to the singular vector itself. Moreover, unlike in this paper, \cite{HardtR12}
do not show that their dependence on the coherence is tight---only that their
guarantees surpass any data-independent worst case guarantees. The bounds of
\cite{HardtR12} also incur a constant \emph{multiplicative} error, in addition
to an additive error. In this paper, we are able to avoid any multiplicative
error. Finally, the bounds of \cite{HardtR12} depend on the rank of the
private matrix $A$, a
dependence that we are able to remove when computing the top singular vector
of $A$, as well as a rank~$1$ approximation of $A$.

Related to the problem of approximating the spectrum of a matrix is the problem of approximating \emph{cuts} in a graph. This problem was first considered by Gupta, Roth, and Ullman \cite{GuptaRU12} who gave methods for efficiently releasing synthetic data for graph cuts with additive error $O(n^{1.5})$. Blocki et al \cite{BlockiBDS12} gave a method which achieves improved error for small cuts, but does not improve the worst-case error. Improving these bounds to the information theoretically optimal bound of $O(n\log n)$ via an efficient algorithm remains an interesting open question. Note that smaller error is efficiently achievable for a polynomial number of cut queries, using private multiplicative weights \cite{HardtR10} or randomized response.
\subsection*{Acknowledgments}
We would like to thank Frank McSherry for suggesting the use of Power
Iteration.
\section{Preliminaries}
\sectionlabel{prelim}
We view our dataset as a real valued \emph{matrix} $A\in\mathbb{R}^{m\times
n}.$
%
\begin{definition} We say that two matrices $A,
A' \in \mathbb{R}^{m\times n}$ are \emph{neighboring} if $A - A' = \alpha e_se_t^T$
where $e_s,e_t$ are two standard basis vectors and $\alpha\in[-1,1].$ In other
words $A$ and $A'$ differ in precisely one entry by at most $1$ in absolute
value.
\end{definition}
We use the by now standard privacy solution concept of differential privacy:
\begin{definition}
An algorithm $M\colon\mathbb{R}^{m\times n}\rightarrow R$ (where $R$ is some
arbitrary abstract range) is \emph{$(\epsilon,\delta)$-differentially private}
if for all pairs of neighboring databases $A, A' \in \mathbb{R}^{m\times n}$, and for
all subsets of the range $S \subseteq R$ we have
$\Pr\Set{M(A) \in S} \leq
\exp(\epsilon)\Pr\Set{M(A') \in S} + \delta\mper$
\end{definition}

We make use of the following useful facts about differential privacy.
\begin{fact}
If $M:\mathbb{R}^{m\times n}\rightarrow R$ is $(\epsilon,\delta)$-differentially private, and $M':R\rightarrow R'$ is an arbitrary randomized algorithm mapping $R$ to $R'$, then $M'(M(\cdot)):\mathbb{R}^{m\times n}\rightarrow R'$ is $(\epsilon,\delta)$-differentially private.
\end{fact}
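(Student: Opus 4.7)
The plan is to reduce the randomized post-processing to the deterministic case and then invoke the definition of $(\epsilon,\delta)$-differential privacy directly. The key move is to represent $M'$ as a deterministic function of its input together with an independent source of randomness: write $M'(r) = f(r,W)$ where $W$ is a random variable drawn from some distribution $\mathcal{D}$ independent of everything else, and $f$ is a (measurable) deterministic function. This is always possible by the standard coupling/randomization device.

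Fix any two neighboring matrices $A,A' \in \mathbb{R}^{m\times n}$ and any measurable subset $S'\subseteq R'$. For each fixed realization $w$ of $W$, define the preimage
\[
S_w \defeq \{r \in R \colon f(r,w) \in S'\} \subseteq R\mper
\]
Since $f(\cdot,w)$ is deterministic, the event $\{f(M(A),w)\in S'\}$ is exactly the event $\{M(A)\in S_w\}$. Applying the $(\epsilon,\delta)$-differential privacy of $M$ to the set $S_w$ gives
\[
\Pr\Set{M(A)\in S_w} \le \exp(\epsilon)\,\Pr\Set{M(A')\in S_w} + \delta\mper
\]

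To finish, I take expectation over $W\sim\mathcal{D}$ on both sides. Because $W$ is independent of the randomness of $M$, the law of total probability yields
\[
\Pr\Set{M'(M(A))\in S'} = \E_{W}\Brac{\Pr\Set{M(A)\in S_W}}\mcom
\]
and analogously for $A'$. Combining with the pointwise inequality above and using linearity of expectation,
\[
\Pr\Set{M'(M(A))\in S'} \le \exp(\epsilon)\,\Pr\Set{M'(M(A'))\in S'} + \delta\mcom
\]
which is the desired bound. The only subtlety is ensuring measurability of $S_w$ and of the map $w\mapsto \Pr\{M(A)\in S_w\}$; this is a standard technical check given that $f$ is jointly measurable and $R$ is a standard measurable space, so it is not a real obstacle. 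The proof is essentially a one-line observation once the randomness of $M'$ is externalized, which is the natural viewpoint to adopt.
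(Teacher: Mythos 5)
Your proof is correct: the paper states this post-processing property as a standard fact without giving any proof, and your argument --- externalizing the randomness of $M'$ as an independent seed $W$, applying the privacy guarantee of $M$ to each preimage $S_w$, and averaging over $W$ (which preserves the additive $\delta$ since it is constant) --- is exactly the canonical argument the paper implicitly relies on. No gaps beyond the measurability caveat you already flagged, which is indeed routine.
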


The following useful theorem of Dwork, Rothblum, and Vadhan tells us how differential privacy guarantees compose.
\begin{theorem}[Composition \cite{DworkRV10}]
\label{thm:composition}
Let $\epsilon,\delta\in(0,1),\delta'>0.$
If $M_1, \ldots, M_k$ are each $(\epsilon,\delta)$-differentially private
algorithms, then the algorithm $M(A)
\equiv (M_1(A),\ldots,M_k(A))$ releasing the concatenation of the results of
each algorithm is $(k\epsilon, k\delta)$-differentially private. It is also $(\epsilon', k\delta + \delta')$-differentially private for
$\epsilon' < \sqrt{2k\ln(1/\delta')}\epsilon + 2k\epsilon^2.$
\end{theorem}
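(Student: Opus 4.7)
The first statement (basic $(k\epsilon,k\delta)$ bound) is a standard hybrid argument: given any event $S$ in the range of the composed mechanism, I would induct on $k$, expressing $\Pr[M(A)\in S]$ as a mixture over the outputs of $M_1(A)$ and then applying the $(\epsilon,\delta)$ guarantee of $M_1$ together with the induction hypothesis for $(M_2,\ldots,M_k)$. The slack $\delta$ terms accumulate additively and the $\epsilon$ terms accumulate inside an $\exp$, giving $(k\epsilon,k\delta)$.

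The harder (``advanced'') composition is proved via the privacy-loss random variable. For each $i$, and for a fixed pair of neighboring matrices $A\sim A'$, define
\[
L_i(o) \defeq \ln\frac{\Pr[M_i(A)=o]}{\Pr[M_i(A')=o]}\mper
\]
The plan is then three steps. First, reduce to the $\delta=0$ case: a standard lemma (implicit in \cite{DworkRV10}) shows that every $(\epsilon,\delta)$-DP mechanism is, up to statistical distance $\delta$, an $(\epsilon,0)$-DP mechanism. Applying this to each $M_i$ and union-bounding contributes the $k\delta$ term in the final privacy parameter; from here on I may pretend each $M_i$ is $(\epsilon,0)$-DP. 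Second, for a pure $\epsilon$-DP mechanism one shows that $|L_i|\le\epsilon$ pointwise and, by a short calculation using $e^\epsilon\le 1+\epsilon+\epsilon^2$ for small $\epsilon$, that $\E_{o\sim M_i(A)}[L_i(o)]\le 2\epsilon^2$. Third, view the sequence $L_1,L_2,\ldots,L_k$ (evaluated along a random transcript drawn from $M(A)$) as a martingale-difference-like process with bounded increments; Azuma--Hoeffding yields
\[
\Pr\Brac{\sum_{i=1}^k L_i > 2k\epsilon^2 + \sqrt{2k\ln(1/\delta')}\,\epsilon} \le \delta'\mper
\]

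The final step is to convert this tail bound on the privacy loss into an $(\epsilon',k\delta+\delta')$-DP guarantee: for any event $S$ in the joint range, split $S$ according to whether the realized privacy loss exceeds the threshold $\epsilon'=2k\epsilon^2+\sqrt{2k\ln(1/\delta')}\epsilon$, bounding the ``high-loss'' part by $\delta'$ via the tail inequality above and the ``low-loss'' part by $e^{\epsilon'}\Pr[M(A')\in S]$ directly from the definition of $L_i$.

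I expect the main obstacle to be the second step, namely pinning down $\E[L_i]\le 2\epsilon^2$ cleanly, since a naive bound only gives $\E[L_i]\le \epsilon$ and loses the quadratic improvement that is responsible for the $\sqrt{k}$ savings. The key trick is to symmetrize: $\E_{o\sim M_i(A)}[L_i(o)] + \E_{o\sim M_i(A')}[-L_i(o)]$ equals twice a KL divergence-like quantity that, combined with $|L_i|\le\epsilon$ and the $\epsilon$-DP constraint, can be expanded using $e^x+e^{-x}-2\le x^2\cdot$ (bounded factor) for $|x|\le\epsilon$, delivering the desired $O(\epsilon^2)$ bound. Everything else is bookkeeping.
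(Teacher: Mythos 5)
The paper does not prove this theorem; it is quoted directly from Dwork, Rothblum, and Vadhan \cite{DworkRV10}, so there is no internal proof to compare against. Your outline is a faithful reconstruction of the DRV argument: the hybrid argument for basic composition, and for advanced composition the privacy-loss random variable, the reduction to pure DP, the quadratic bound $\E[L_i]\le \epsilon(e^\epsilon-1)\le 2\epsilon^2$ obtained by symmetrizing into $\mathrm{KL}(P\|Q)+\mathrm{KL}(Q\|P)=\sum_o (P(o)-Q(o))\ln(P(o)/Q(o))$, Azuma for the tail bound, and the conversion of the tail bound back into an $(\epsilon',k\delta+\delta')$ guarantee. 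You also correctly identify the quadratic expectation bound as the crux. The one caveat worth recording is that the reduction to $\delta=0$ is more delicate than ``every $(\epsilon,\delta)$-DP mechanism is within statistical distance $\delta$ of an $(\epsilon,0)$-DP mechanism'': the correct lemma is a statement about a fixed pair of neighboring inputs, constructing coupled surrogate distributions that satisfy the pointwise $e^{\epsilon}$ ratio bound and each lie within distance $\delta$ of the corresponding true output distribution; one cannot in general modify the mechanism globally. With that repaired, the remaining steps are, as you say, bookkeeping.
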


We denote the $1$-dimensional Gaussian distribution of mean $\mu$ and
variance $\sigma^2$ by $N(\mu,\sigma^2).$ We use $N(\mu,\sigma^2)^d$ to denote
the distribution over $d$-dimensional vectors with i.i.d. coordinates sampled from
$N(\mu,\sigma^2).$ We write $X\sim D$ to indicate that a
variable $X$ is distributed according to a distribution~$D.$ We note
the following useful fact about the Gaussian distribution.
\begin{fact}\factlabel{gaussian-sum}
If $g_i\sim N(\mu_i,\sigma_i^2),$ then
$\sum g_i \sim N\left(\sum_i\mu_i,\sum_i\sigma_i^2\right)\mper$
\end{fact}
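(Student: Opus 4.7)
The plan is to prove the statement by moment generating functions (equivalently, characteristic functions), which turn independent sums into products and make the Gaussian closure property essentially a computation with exponentials. Note that the statement implicitly assumes the $g_i$ are independent, since otherwise the variance formula is false; I would state this assumption explicitly at the start of the proof.

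First I would recall (or derive in one line by completing the square) that for $X \sim N(\mu,\sigma^2)$ the moment generating function is
\[
M_X(t) \;=\; \E\brac{e^{tX}} \;=\; \exp\!\left(\mu t + \tfrac{1}{2}\sigma^2 t^2\right)\mper
\]
Since MGFs of Gaussians are finite on all of $\R$ and uniquely determine the distribution, it suffices to compute the MGF of $S \defeq \sum_i g_i$ and check it matches that of $N\!\left(\sum_i \mu_i,\sum_i \sigma_i^2\right)$.

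Next, by independence of the $g_i$, the expectation of the product factorizes:
\[
M_S(t) \;=\; \E\brac{e^{t \sum_i g_i}} \;=\; \prod_i \E\brac{e^{t g_i}} \;=\; \prod_i \exp\!\left(\mu_i t + \tfrac{1}{2}\sigma_i^2 t^2\right) \;=\; \exp\!\left(\Paren{\textstyle\sum_i \mu_i} t + \tfrac{1}{2}\Paren{\textstyle\sum_i \sigma_i^2} t^2\right)\mper
\]
This is exactly the MGF of a $N\!\left(\sum_i \mu_i, \sum_i \sigma_i^2\right)$ distribution, and uniqueness of MGFs (in a neighborhood of $0$, which holds here since the MGF is everywhere finite) yields the claim.

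I don't expect any real obstacle: the only delicate point is the appeal to the uniqueness theorem for MGFs, which is a standard fact and can be replaced by the analogous argument with characteristic functions if one wants to avoid any integrability hypothesis. For the sake of a completely self-contained write-up one could alternatively prove the two-variable case by computing the convolution $\int f_{g_1}(s) f_{g_2}(z-s)\rd s$ explicitly (completing the square in $s$ collapses the integral to a Gaussian with the stated parameters) and then iterate by induction on the number of summands.
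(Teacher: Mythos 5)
Your proof is correct and standard: the MGF factorization argument (or equivalently characteristic functions, or direct convolution) is the canonical way to establish closure of the Gaussian family under independent sums. The paper states this as a Fact without proof, treating it as folklore, so there is no authorial argument to compare against; your write-up would serve as a complete justification, and you are right to flag that independence of the $g_i$ must be assumed explicitly, since the paper's statement omits it even though the variance formula requires it.
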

The following theorem is well known folklore.
\begin{theorem}[Gaussian Mechanism]
\label{thm:gaussian}
Let $\epsilon>0,$ $\delta\in(0,1/2).$
Let $u, v \in \mathbb{R}^d$ be any two vectors such that $\|u-v\|_2 \leq c$.
Put $\sigma = 4c\epsilon^{-1}\sqrt{\log(2/\delta)}$. Then, for every
measurable set $A\subseteq\R^d$ and $g\sim N(0,\sigma^2)^d,$ we have
$\exp(-\epsilon)\Pr\{v + g \in A\} - \delta
\le
\Pr\Set{u + g \in A} \leq \exp(\epsilon)\Pr\{v + g \in A\} + \delta\mper$
\end{theorem}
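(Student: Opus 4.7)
The plan is the standard reduction of a group privacy statement to a tail bound on the log-likelihood ratio of two shifted Gaussians. The only bit of work is verifying that the prescribed $\sigma$ is large enough to make the bad tail probability at most $\delta$.

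First I would reduce to a one-sided inequality. Fix any measurable $A \subseteq \R^d$ and let $p_w(x) = \prod_{i} (2\pi\sigma^2)^{-1/2}\exp(-(x_i-w_i)^2/2\sigma^2)$ denote the density of $w + g$. Define the ``bad set''
\[
B \defeq \set{x\in\R^d \from \log\!\Paren{\tfrac{p_u(x)}{p_v(x)}} > \epsilon}\mper
\]
Splitting the integral $\Pr\{u+g \in A\}$ over $A\cap B$ and $A\setminus B$, on $A\setminus B$ the densities satisfy $p_u \le e^\epsilon p_v$ pointwise, so
\[
\Pr\{u+g \in A\} \le \Pr\{u+g \in B\} + e^\epsilon \Pr\{v+g \in A\}\mper
\]
Thus it suffices to prove $\Pr\{u+g\in B\} \le \delta$; by symmetry (swap $u,v$) the other direction then follows.

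Next I would compute the log-likelihood ratio explicitly. Expanding the Gaussian densities and letting $w = u-v$,
\[
\log\Paren{\frac{p_u(x)}{p_v(x)}} = \frac{1}{2\sigma^2}\Paren{\|x-v\|^2 - \|x-u\|^2} = \frac{1}{2\sigma^2}\Paren{2\iprod{w,x} - \|u\|^2 + \|v\|^2}\mper
\]
Evaluating at $x = u+g$ and simplifying gives
\[
Z \defeq \log\Paren{\frac{p_u(u+g)}{p_v(u+g)}} = \frac{\|w\|^2 + 2\iprod{w,g}}{2\sigma^2}\mper
\]
Since $g\sim N(0,\sigma^2)^d$, the linear combination $\iprod{w,g}$ is distributed as $N(0,\|w\|^2\sigma^2)$, so $Z$ is a Gaussian with mean $\mu_Z = \|w\|^2/(2\sigma^2)$ and variance $\sigma_Z^2 = \|w\|^2/\sigma^2$.

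Finally I would apply a standard Gaussian tail bound $\Pr\{N(0,1) > t\} \le \exp(-t^2/2)$ to bound $\Pr\{Z > \epsilon\}$. Using $\|w\| \le c$ and substituting $\sigma = 4c\epsilon^{-1}\sqrt{\log(2/\delta)}$, the mean $\mu_Z \le c^2/(2\sigma^2) = \epsilon^2/(32\log(2/\delta))$ is negligible compared to $\epsilon$, and the standardized threshold is at least
\[
t \defeq \frac{\epsilon - \mu_Z}{\sigma_Z} \ge \frac{\epsilon/2}{c/\sigma} = \frac{\epsilon\sigma}{2c} = 2\sqrt{\log(2/\delta)}\mper
\]
Hence $\Pr\{Z > \epsilon\} \le \exp(-t^2/2) \le \delta/2 \le \delta$, finishing the proof. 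The main (minor) obstacle is simply choosing the constant $4$ in $\sigma$ large enough so that after absorbing the $\mu_Z$ shift and the factor $1/2$ in the tail exponent, the final bound is at most $\delta$; the calculation above shows $4$ is sufficient (and somewhat slack).
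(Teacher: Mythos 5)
Your proof is correct; note that the paper itself gives no proof of this theorem, citing it as folklore, and your argument (split over the bad set where the log-likelihood ratio exceeds $\epsilon$, compute that the ratio at $x=u+g$ is Gaussian with mean $\|w\|^2/2\sigma^2$ and variance $\|w\|^2/\sigma^2$, then apply a Chernoff-type tail bound) is exactly the standard one. The only caveat is the step $\mu_Z\le\epsilon/2$, which uses $\epsilon\le 16\log(2/\delta)$; this holds throughout the regime where the theorem is used (and indeed the folklore statement is typically qualified by $\epsilon\le 1$ or similar), but strictly speaking the theorem as stated allows arbitrary $\epsilon>0$, for which your bound on the standardized threshold would need an extra remark.
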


\paragraph{Vector and matrix norms.} We denote by $\|\cdot\|_p$ the
$\ell_p$-norm of a vector and sometimes use $\|\cdot\|$ as a shorthand for the
Euclidean norm. Given a real $m\times n$ matrix $A,$ we will work with the
\emph{spectral norm} $\|A\|_2$ and the Frobenius norm~$\|A\|_F$ defined as
\begin{equation}
\|A\|_2  \defeq \max_{\|x\|=1}\|Ax\|\qquad\text{and}
\qquad \|A\|_F \defeq \sqrt{\sum_{i,j}a_{ij}^2}\mper
\end{equation}
For any $m\times n$ matrix $A$ of rank $r$ we have
$\|A\|_2\le\|A\|_F\le\sqrt{r}\cdot\|A\|_2\mper$
%
%
%
%
\paragraph{Singular Value Decomposition.} Given a matrix $A\in\R^{m\times n}.$
The \emph{right singular vectors} of $A$ are the eigenvectors of $A^TA.$ The
\emph{left singular vectors} of $A$ are the eigenvectors of $AA^T.$ The singular
values of $A$ are denoted by $\sigma_i(A)$ and defined as the square root of
the $i$-th eigenvalue of $A^TA.$ The singular value decomposition is any
decomposition of $A$ satisfying $A=U\Sigma V^T$ where $U\in\R^{m\times
m},V\in\R^{n\times n}$ are unitary matrices and $\Sigma\in\R^{m\times n}$
satisfies $\Sigma_{ii}=\sigma_i(A)$ and $\Sigma_{ij}=0$ for $i\ne j.$ The
colums of $U$ are the left singular vectors of $A$ and the columns of $V$ are
the right singular vectors of $A.$

\subsection{Matrix coherence}
We will work with the following standard notion of \emph{coherence} throughout the
paper.
\begin{definition}[$\mu$-Coherence]
Let $A\in\mathbb{R}^{m\times n}$ with $m\le n$ be a symmetric real matrix with
a given singular value decomposition $A=U\Sigma V^t.$
We define the \emph{$\mu$-coherence} of $A$ with respect to $U$ and $V$ as
\[
\mu(A) \defeq \max\Set{m\|U\|_\infty^2,n\|V\|_\infty^2}\mper
\]
Note that $1\le\mu(A)\le n.$
\end{definition}
We remark that the coherence of $A$ is defined with respect to a particular singular value
decomposition since the SVD is in general not unique.

\subsection{Reduction to symmetric matrices}
\sectionlabel{symmetric}

Throughout our work we will restrict our attention real symmetric $n\times
n$ matrices. All of our results apply, however, more generally to asymmetric
matrices. Indeed, given $A\in\R^{m\times n}$ with SVD $A=\sum_{i=1}^r \sigma_i
u_iv_i^T$ and rank $r,$ we can instead consider the symmetric $(m+n)\times(m+n)$ matrix
\[
B=\left[
\begin{array}{cc}
0 & A \\
A^T & 0
\end{array}
\right]\mper
\]
The next fact summarizes all properties of $B$ that we will need.
\begin{fact}
The matrix $B$ has the following properties:
$B$ has a rank $2r$ and singular values $\sigma_1,\dots,\sigma_r$ each
occuring with multiplicity two.
The singular vectors corresponding to a singular value $\sigma$ are spanned
by the vectors $\Set{(u_i,0), (0,v_i)\colon \sigma_i=\sigma}.$
An entry change in $A$ corresponds to two entry changes in $B.$
Furthermore, $\mu(B)=\mu(A).$
\end{fact}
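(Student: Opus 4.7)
The plan is to verify each of the four claims by directly diagonalizing $B$ using the SVD of $A = \sum_{i=1}^r \sigma_i u_i v_i^T$. The entire fact hinges on a single algebraic observation: for each singular triple $(\sigma_i, u_i, v_i)$ (so $A v_i = \sigma_i u_i$ and $A^T u_i = \sigma_i v_i$), the vectors
\[
w_i^{\pm} \defeq \frac{1}{\sqrt{2}}\begin{pmatrix} u_i \\ \pm v_i\end{pmatrix}
\]
are eigenvectors of $B$ with eigenvalues $\pm \sigma_i$, as a one-line block multiplication shows. Orthonormality of the $2r$ vectors $\{w_i^{\pm}\}$ follows from orthonormality of $\{u_i\}$ and $\{v_i\}$, and a short dimension count gives $\mathrm{range}(B) \subseteq \mathrm{span}(u_1, \dots, u_r) \oplus \mathrm{span}(v_1, \dots, v_r)$; together these yield $\mathrm{rank}(B) = 2r$ and, since $B$ is symmetric, singular values $\sigma_1, \dots, \sigma_r$ each with multiplicity two.

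The second bullet follows because $\{w_i^+, w_i^-\}$ spans the same two-dimensional subspace as $\{(u_i, 0), (0, v_i)\}$ (each of $(u_i,0)$ and $(0,v_i)$ is an equal-weight combination of $w_i^\pm$); taking unions over all $i$ with $\sigma_i = \sigma$ yields the claim. The third bullet is immediate from the block form of $B$: changing $A_{st}$ by $\alpha$ changes exactly the symmetric pair of entries $B_{s, m+t}$ and $B_{m+t, s}$ by $\alpha$, i.e., two entries.

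For the coherence equality I would read off the entries of the eigenvector matrix $U_B$ of $B$ directly. The top block of each $w_i^\pm$ contains the $i$-th column of $U$ scaled by $1/\sqrt{2}$ and the bottom block contains the $i$-th column of $V$ scaled by $1/\sqrt{2}$; the extra null-space columns of $U_B$ can be chosen of the form $(u, 0)$ or $(0, v)$ from zero-singular-value columns of $U$ and $V$, so they inherit their coordinate bounds from $U$ and $V$. Plugging into $\mu(B) = (m+n)\|U_B\|_\infty^2$ and matching against $\mu(A) = \max(m\|U\|_\infty^2, n\|V\|_\infty^2)$ yields the claim in the symmetric regime $m=n$ used throughout the rest of the paper. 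The only step requiring any care is tracking the $1/\sqrt{2}$ normalization against the dimension factor $m+n$ in the coherence formula; I do not expect any substantial obstacles.
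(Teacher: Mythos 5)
The paper states this fact without proof (it is presented as a summary of standard properties of the symmetric dilation $B$), so there is no authorial argument to compare against; your proof is the standard one and is correct in all essentials: the block computation showing $B w_i^{\pm} = \pm\sigma_i w_i^{\pm}$, the dimension count for $\mathrm{rank}(B)=2r$, the span identity $\mathrm{span}\{w_i^+,w_i^-\}=\mathrm{span}\{(u_i,0),(0,v_i)\}$, and the two-entry observation are all right. One small caveat on the coherence claim, which is exactly the step you flagged: with your choice of null-space completion by unscaled vectors $(u,0)$ and $(0,v)$, the largest entry of $U_B$ may be attained at one of these columns, which lacks the $1/\sqrt{2}$ factor, so you only obtain $\mu(A)\le\mu(B)\le 2\mu(A)$ rather than equality. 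Since coherence is defined relative to a chosen SVD, the clean fix for $m=n$ is to complete the null space with the paired vectors $\tfrac{1}{\sqrt{2}}(u_j,\pm v_j)$ for $j>r$ (these lie in $\ker B$ because $Av_j=A^Tu_j=0$), after which every column of $U_B$ carries the $1/\sqrt{2}$ and $(m+n)\|U_B\|_\infty^2 = n\max\bigl(\|U\|_\infty^2,\|V\|_\infty^2\bigr)=\mu(A)$ exactly. For $m\neq n$ the unpaired columns make the stated equality hold only up to a factor of $2$, but this is a looseness of the fact as stated, not of your argument, and it is immaterial to the paper's results.
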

In particular, this fact implies that an algorithm to find the singular vectors of $B$ will
also recover the singular vectors of $A$ up to small loss in the parameters.
Moreover, an algorithm that achieves $(\epsilon/2,\delta/2)$-differential privacy on $B$
is also $(\epsilon,\delta)$-differentially private with respect to $A.$


%

\section{Robust convergence of power iteration}

In this section we analyze a generic variant of power iteration in which a
perturbation is added to the computation at each step. The noise vector can be
chosen adaptively and adversarially in each round.
We will derive general conditions under which power iteration converges.

\begin{figure}[h]
\begin{boxedminipage}{\textwidth}
\noindent \textbf{Input:} Matrix $A\in\mathbb{R}^{n\times n},$ number of
iterations $T\in\mathbb{N},$ parameter $\beta\in(0,1),$
\begin{enumerate}
\item Let $x_0$ be unit vector.
\item For $t = 1$ to $T$:
\begin{enumerate}
  \item Let $g_t$ be an arbitrary perturbation.
  \item Let $x'_t = Ax_{t-1} + g_t$,
  \item If $\|x'_t\|\ge(1-\beta)\sigma_1,$ then terminate and output $x_{t-1}.$
  \item Otherwise let $x_t = \frac{x'_t}{\|x'_t\|_2},$ and continue.
\end{enumerate}
\end{enumerate}
\noindent \textbf{Output:} Vector $x_T\in\mathbb{R}^n$ unless the algorithm
terminated previously.
\end{boxedminipage}
\caption{Power iteration with adversarial noise}
\figurelabel{robust-powering}
\end{figure}

\begin{lemma}[Robust Convergence]
\label{lem:robust}
Let $A$ be a matrix such that $\sigma_{k+1}(A)\le(1-\gamma)\sigma_k(A)$ for
some $k<n$ and $\gamma>0.$ Let $U$ be the space spanned by the top $k$
singular vector of $A,$ let $V$ be the space spanned by the last $n-k$ singular
vectors.  Further assume that there are numbers $\Delta,\Delta_U,\Delta_V>0$
such that the following conditions are met:
\begin{enumerate}
\item \itemlabel{upper}
For all $t,$ $\|g_t\|\le\Delta,$
$\|P_Ug_t\|\le\Delta_U$ and $\|P_Vg_t\|\le\Delta_V.$
\item \itemlabel{lower}
$\|P_Ux_0\| \ge \frac{8\Delta_U}{\gamma\sigma_k(A)}$
and
$\|P_Vx_0\| \ge \frac{8\Delta_V}{\gamma\sigma_k(A)}$
\item \itemlabel{sigma}
$\sigma_k(A)\ge 9\Delta/\beta\gamma,$ for some $0<\beta<1.$
\end{enumerate}
Then, for $T=4\log(\sigma_k(A)),$
the algorithm outputs a vector $x\in\R^n$
such that
\[
\frac{\|Ax\|}{\|x\|}\ge(1-\beta)\sigma_k(A)\mper
\]
\end{lemma}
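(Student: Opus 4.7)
The plan is to track the decomposition of each iterate along the two $A$-invariant subspaces. Set $u_t := P_U x_t$, $v_t := P_V x_t$, $\alpha_t := \|u_t\|$, $\beta_t := \|v_t\|$, so $\alpha_t^2 + \beta_t^2 = 1$. Because $P_U$ and $P_V$ commute with $A$, the unnormalized update $x'_t = A x_{t-1} + g_t$ decouples cleanly:
\[
\|P_U x'_t\| \ge \sigma_k(A)\,\alpha_{t-1} - \Delta_U, \qquad
\|P_V x'_t\| \le (1-\gamma)\sigma_k(A)\,\beta_{t-1} + \Delta_V,
\]
using the spectral gap $\sigma_{k+1} \le (1-\gamma)\sigma_k$ and condition~\itemref{upper}. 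The key observation is that normalization by $\|x'_t\|$ scales both components identically, so the ratio $\tau_t := \beta_t/\alpha_t$ is invariant under the normalization step.

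The core of the proof is to show geometric contraction of $\tau_t$ by induction on $t$. Assume inductively that $\alpha_{t-1} \ge \alpha_0 \ge 8\Delta_U/(\gamma\sigma_k)$ from condition~\itemref{lower}. Then $\sigma_k\alpha_{t-1} - \Delta_U \ge (1-\gamma/8)\sigma_k\alpha_{t-1}$, so
\[
\tau_t \;\le\; \frac{(1-\gamma)\sigma_k\beta_{t-1} + \Delta_V}{(1-\gamma/8)\sigma_k\alpha_{t-1}} \;\le\; (1-\gamma/2)\,\tau_{t-1} + O\!\left(\frac{\Delta_V}{\gamma\sigma_k\alpha_{t-1}}\right).
\]
Iterating $T = 4\log\sigma_k(A)$ times shrinks the geometric part to a negligible factor, leaving $\tau_T$ at essentially the noise floor $O(\Delta_V/(\gamma\sigma_k))$, which by condition~\itemref{sigma} (together with $\Delta_V \le \Delta$) is at most $O(\beta)$. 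Closing the induction is short: $\alpha_t = 1/\sqrt{1+\tau_t^2}$ is monotone decreasing in $\tau_t$, so the contraction $\tau_t \le \tau_{t-1}$ already implies $\alpha_t \ge \alpha_{t-1} \ge \alpha_0$, which validates the inductive hypothesis at step $t+1$.

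From $\tau_T = O(\beta)$ we obtain $\alpha_T \ge 1 - O(\beta^2)$, and
\[
\|A x_T\|^2 \;=\; \|A u_T\|^2 + \|A v_T\|^2 \;\ge\; \sigma_k(A)^2\,\alpha_T^2 \;\ge\; (1-\beta)^2\sigma_k(A)^2,
\]
as desired. The early-termination branch is handled by $\|A x_{t-1}\| \ge \|x'_t\| - \|g_t\| \ge (1-\beta)\sigma_1(A) - \Delta$, and the slack $\Delta \le \beta\gamma\sigma_k/9$ from condition~\itemref{sigma} together with $\sigma_1 \ge \sigma_k$ absorbs the additive error into the $(1-\beta)$ factor.

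The main obstacle is the subtle interplay between the inductive lower bound on $\alpha_{t-1}$ and the noise floor in the $\tau_t$ recurrence: all three conditions must enter together, with condition~\itemref{lower} providing the initial slack for the induction to launch, condition~\itemref{sigma} bounding the steady-state value of $\tau_t$ so that it stays below the target accuracy $\beta$, and condition~\itemref{upper} controlling the per-step noise growth in both subspaces. The fact that $\tau_t$ is insensitive to the normalization step is what keeps the analysis cleanly one-dimensional and avoids having to track the normalizing constant $\|x'_t\|$ directly.
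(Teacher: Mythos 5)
Your overall strategy is the same as the paper's: track the ratio $\tau_t=\|P_Vx_t\|/\|P_Ux_t\|$ (the paper's potential $\Psi_t$), note it is invariant under normalization, and show geometric contraction. However, there is a genuine gap in how you handle the additive noise term $\Delta_V$. Your recurrence reads $\tau_t\le(1-\gamma/2)\tau_{t-1}+O\bigl(\Delta_V/(\sigma_k\alpha_{t-1})\bigr)$, and you then assert the iteration settles at a ``noise floor'' of $O(\Delta_V/(\gamma\sigma_k))$. That conclusion silently replaces $1/\alpha_{t-1}$ by a constant. But the only lower bound you have on $\alpha_{t-1}$ is $\alpha_0\ge 8\Delta_U/(\gamma\sigma_k)$, which in the intended application is of order $\sqrt{k/n}$ — far from constant. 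Plugging that in, the floor your recurrence actually yields is $O(\Delta_V/(\gamma\sigma_k\alpha_0))=O(\Delta_V/\Delta_U)$, which is not $O(\beta)$ (in the application $\Delta_V/\Delta_U\approx\sqrt{n/k}$). The argument is not unsalvageable — one can write $1/\alpha_{t-1}=\sqrt{1+\tau_{t-1}^2}\le 1+\tau_{t-1}$ and fold the extra $\tau_{t-1}$ term back into the contraction factor using condition~\itemref{sigma}, then do a fixed-point analysis — but as written the key quantitative step does not follow. Relatedly, your induction ``$\tau_t\le\tau_{t-1}$ hence $\alpha_t\ge\alpha_{t-1}$'' only holds while the additive term is dominated by $\tfrac{\gamma}{2}\tau_{t-1}$, i.e.\ while $\tau_{t-1}$ is \emph{above} the floor; near the floor $\tau_t$ can increase and the induction hypothesis is not automatically restored.

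The paper avoids this entirely by keeping the contraction purely multiplicative: it maintains the invariant $\sigma\|P_Vx_{t-1}\|\ge 8\Delta_V/\gamma$ so that $\Delta_V\le\tfrac{\gamma}{8}\sigma\|P_Vx_{t-1}\|$ can be absorbed into the numerator, giving $\Psi_t\le(1-\gamma/2)\Psi_{t-1}$ with no additive term. The price is a two-case analysis: if that invariant ever fails at some round $t^*$, then $\|P_Vx_{t^*-1}\|$ is already so small that $\|P_Ux_{t^*-1}\|\ge 1-8\Delta_V/(\gamma\sigma)$, whence $\|x'_{t^*}\|\ge(1-\beta)\sigma$ and the algorithm's early-termination check fires, outputting a vector that already satisfies the conclusion. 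This case split is exactly the mechanism your proposal is missing; you treat early termination only as a routine triangle-inequality afterthought, whereas in the paper it is the essential escape hatch for the regime where the multiplicative contraction of the $V$-component can no longer be sustained.
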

\begin{proof}
Put $\sigma = \sigma_k(A)$ and note that by assumption
$\sigma_{k+1}=(1-\gamma)\sigma_k$ for some $\gamma>0.$
We will consider the potential function
\[
\Psi_t = \frac{\|P_V x_t\|}{\|P_U x_t\|}.
\]
Suppose that in some round $t,$ we have
\begin{equation}\equationlabel{cond}
\sigma\|P_Vx_{t-1}\| \ge \frac{8\Delta_V}\gamma\quad\text{and}\quad
\sigma\|P_Ux_{t-1}\| \ge \frac{8\Delta_U}\gamma\mper
\end{equation}
We note that by our assumption on the matrix, these conditions are met in the
first round $t=1$ as a consequence of \itemref{lower}.
Let us derive an expression for the potential drop in round $t$ under the
above assumption. We have, using \itemref{upper},
\[
\frac{\|P_V x_t\|}{\|P_U x_t\|}
= \frac{\|P_V (Ax_{t-1} + g_t)\|}{\|P_U (A x_{t-1} + g_t)\|}
\le
\frac{\|P_V Ax_{t-1}\| + \|P_Vg_t\|}{\|P_UAx_{t-1}\| - \| P_Ug_t\|}
\le \frac{(1-\gamma)\sigma\|P_V x_{t-1}\| + \Delta_V}{\sigma\|P_Ux_{t-1}\| -
\Delta_U}
\]
By the assumption in \equationref{cond}, we have
\[
\frac{(1-\gamma)\sigma\|P_V x_{t-1}\| +\Delta_V}{\sigma\|P_Ux_{t-1}\|
-\Delta_U}
\le
\frac{(1-7\gamma/8)\sigma\|P_V x_{t-1}\|}{(1-\gamma/8)\sigma\|P_Ux_{t-1}\|}
\le \left(1-\frac\gamma2\right)
\frac{\|P_V x_{t-1}\|}{\|P_Ux_{t-1}\|}
= \left(1-\frac\gamma2\right)\Psi_{t-1}
\]
We furthermore claim that if the conditions in \equationref{cond} hold true in
round $t,$ then we must have
$\|P_Ux_t\|\ge\|P_Ux_{t-1}\|\mper$
This follows from our previous analysis, because $\Psi_t\le\Psi_{t-1}$
but
\[
1=\|x_t\|=\sqrt{\|P_Ux_t\|^2+\|P_Vx_t\|^2}\mper
\]
This in particular means that if
the conditions are true in round $t,$ then the second condition in
\equationref{cond} continues to be true in round $t+1,$ and only the first
condition can fail. At this point we distinguish two cases.

\paragraph{Case 1.}
Suppose there is a round where the $t\le T,$ where the first condition
fails to hold. Let $t^*$ be the smallest such round and put $x=x_{t^*-1}.$
By the previous argument, in this round we must have
\begin{equation}\equationlabel{fail}
1=\|x\|^2=\|P_Vx\|^2+\|P_Ux\|^2
 \le \left(\frac{8\Delta_V}{\gamma\sigma}\right)^2+\|P_Ux\|^2\mper
\end{equation}
From this we conclude that
$\|P_Ux\|\ge\sqrt{1-(8\Delta_V/\gamma\sigma)^2}\ge1-8\Delta_V/\gamma\sigma.$
Hence,
\[
\|Ax_{t^*-1}+g_{t^*}\|\ge\|AP_Ux\|-\|g_t\|
\ge \left(1-\frac{8\Delta_V}{\gamma\sigma}\right)\sigma - \Delta
\ge \left(1-\frac{8\Delta}{\gamma\sigma}-\frac{\Delta}\sigma\right)\sigma
\ge \left(1-\frac{9\Delta}{\gamma\sigma}\right)\sigma
\]
Here we used that $\Delta_V\le\Delta$ which is without loss of generality.
Therefore, using \itemref{sigma},
\[
\|x_{t^*}'\|
\ge \left(1-\frac{9\Delta}{\gamma\sigma}\right)\sigma
\ge (1-\beta)\sigma_1\mcom
\]
This means that the algorithm terminates in round $t^*$ and outputs
$x_{t^*-1},$ which satisfies the conclusion of the lemma.

\paragraph{Case 2.}
Suppose there is no round $t\le T,$ where
\equationref{cond} fails. By our potential argument and the choice of $T,$
this means that
\[
\Psi_T\le\left(1-\frac\gamma2\right)^T\Psi_0
\le \frac{\exp(-\gamma T/2)}{\|P_Ux_0\|}
\le \frac{\gamma\sigma}{8\Delta_U}\exp(-\gamma T/2)
= \frac{\gamma}{8\Delta_U\sigma}
\le \beta
\]
In particular, $x+T$ satisfies $\|P_Vx_T\|\le \beta\|P_Ux_T\|\le
\beta.$ Thus, $\|P_Ux_T\|\ge \sqrt{1-\beta^2}$ and
$\|Ax_T\|\ge(1-\beta)\sigma.$ This show that $x_T$ satisfies
the conclusion of the lemma.
\end{proof}

The next corollary states a variant of \lemmaref{robust} where we express all
conditions in terms of $\sigma_1(A)$ rather than $\sigma_k(A).$

\begin{corollary}
\corollarylabel{robust}
Let $\alpha\in(0,1).$
Let $A$ be a matrix such that $\sigma_{k+1}(A)\le(1-\gamma/2)\sigma_1(A)$ for
some $k<n.$ Let $U$ be the space spanned by the top $k$
singular vector of $A,$ let $V$ be the space spanned by last $n-k$ singular
vectors. Further assume that there are numbers $\Delta,\Delta_U,\Delta_V>0$
such that the following conditions are met:
\begin{enumerate}
\item
For all $t,$ $\|g_t\|\le\Delta,$
$\|P_Ug_t\|\le\Delta_U$ and $\|P_Vg_t\|\le\Delta_V.$
\item
$\|P_Ux_0\| \ge \frac{32k\Delta_U}{\gamma(1-\gamma)\sigma_1(A)}$
and
$\|P_Vx_0\| \ge \frac{32k\Delta_V}{\gamma(1-\gamma)\sigma_1(A)}$
\item
$\sigma_1(A)\ge \frac{72k\Delta}{\beta\gamma(1-\gamma)}.$
\end{enumerate}
Then, for $T=4\log(\sigma_1(A)),$
the algorithm outputs a vector $x\in\R^n$
such that
\[
\frac{\|Ax\|}{\|x\|}\ge(1-\beta)\sigma_1(A)\mper
\]
\end{corollary}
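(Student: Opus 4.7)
The plan is to reduce the corollary to \lemmaref{robust} by locating an index $j^* \in \{1,\ldots,k\}$ at which $A$ has a nontrivial spectral gap while $\sigma_{j^*}(A)$ remains close to $\sigma_1(A)$. From the assumption $\sigma_{k+1}(A) \le (1 - \gamma/2)\sigma_1(A)$, the telescoping identity
\[
\frac{\sigma_{k+1}(A)}{\sigma_1(A)} = \prod_{i=1}^{k} \frac{\sigma_{i+1}(A)}{\sigma_i(A)} \le 1 - \frac{\gamma}{2}
\]
combined with Bernoulli's inequality rules out the possibility that every factor exceeds $1 - \gamma/(4k)$; otherwise the product would be at least $(1-\gamma/(4k))^k \ge 1 - \gamma/4$. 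Let $j^*$ be the smallest index in $\{1,\dots,k\}$ with $\sigma_{j^*+1}/\sigma_{j^*} \le 1 - \gamma/(4k)$. A second Bernoulli bound applied to the earlier (large) ratios gives $\sigma_{j^*}(A) \ge (1-\gamma/4)\sigma_1(A) \ge (1-\gamma)\sigma_1(A)$, so $\sigma_{j^*}$ is within a $(1-\gamma)$ factor of $\sigma_1$.

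I would then invoke \lemmaref{robust} on $A$ with $k$ replaced by $j^*$ and $\gamma$ replaced by $\gamma' \defeq \gamma/(4k)$. Writing $U'$, resp.\ $V'$, for the span of the top $j^*$, resp.\ bottom $n - j^*$, singular vectors, we have $U' \subseteq U$ and $V \subseteq V'$, so the perturbation hypothesis transfers: $\|P_{U'} g_t\| \le \|P_U g_t\| \le \Delta_U$, and $\|P_{V'} g_t\| \le \|g_t\| \le \Delta$, so $\Delta$ plays the role of $\Delta_V$. Substituting $\gamma' = \gamma/(4k)$ and $\sigma_{j^*}(A) \ge (1-\gamma)\sigma_1(A)$ into the two quantitative conditions $\|P_{U'} x_0\| \ge 8\Delta_U/(\gamma'\sigma_{j^*})$ and $\sigma_{j^*} \ge 9\Delta/(\beta\gamma')$ of the lemma reproduces precisely the bounds $32k/(\gamma(1-\gamma)\sigma_1(A))$ and $72k/(\beta\gamma(1-\gamma))$ appearing in the corollary's conditions~2 and~3. \lemmaref{robust} then outputs $x$ with $\|Ax\|/\|x\| \ge (1-\beta)\sigma_{j^*}(A) \ge (1-\beta)(1-\gamma)\sigma_1(A)$, which after absorbing the $(1-\gamma)$ factor into a constant-factor rescaling of $\beta$ (or simply running the lemma with $\beta/2$ at the outset) produces the stated bound.

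The one delicate point — and the main obstacle — is matching the corollary's assumption on $\|P_U x_0\|$ to the lemma's requirement on $\|P_{U'} x_0\|$: since $U' \subseteq U$, the inequality $\|P_{U'} x_0\| \le \|P_U x_0\|$ goes in the wrong direction. The analogous condition on $\|P_{V'} x_0\|$ is automatic because $V \subseteq V'$. This asymmetry is what forces the extra factor of $k$ in condition~2 of the corollary, rather than only the $8/\gamma$ scaling in the lemma: for the random initialization used by the algorithm, the top-$j^*$ component of $x_0$ is concentrated around a $\sqrt{j^*/k}$ fraction of its top-$k$ component, and inflating the corollary's hypothesis by a factor of $k$ absorbs this loss with room to spare. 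Verifying this final bookkeeping step is the only part of the argument that is not a direct substitution into \lemmaref{robust}.
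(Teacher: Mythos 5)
Your reduction is essentially the paper's own proof: locate an index $k'\le k$ at which consecutive singular values drop by a relative factor of at least $\gamma/4k$ (forced by telescoping the hypothesis $\sigma_{k+1}\le(1-\gamma/2)\sigma_1$), note that $\sigma_{k'}$ is still within a $(1-\gamma/2)$ factor of $\sigma_1$, and invoke \lemmaref{robust} with $k'$ and $\gamma'=\gamma/4k$; the arithmetic converting the lemma's constants $8$ and $9$ into the corollary's $32k$ and $72k$ is exactly the substitution you describe.

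The ``delicate point'' you flag is genuine, and it is worth noting that the paper does not resolve it either --- its proof simply asserts that ``it is easy to verify'' the conditions of \lemmaref{robust}, even though the corollary's hypothesis bounds $\|P_Ux_0\|$ for the \emph{top-$k$} subspace $U$, while the lemma needs a lower bound on $\|P_{U'}x_0\|$ for the smaller subspace $U'$ spanned by the top $k'$ singular vectors, and $\|P_{U'}x_0\|\le\|P_Ux_0\|$ goes the wrong way. Your proposed repair, however, is not available inside the corollary as stated: the corollary concerns the adversarial-noise algorithm of \figureref{robust-powering}, where $x_0$ is an arbitrary unit vector, so there is no random initialization to appeal to, and no concentration statement of the form ``the top-$k'$ component is a $\sqrt{k'/k}$ fraction of the top-$k$ component'' can hold for a worst-case $x_0$ (take $x_0$ supported entirely on singular vectors $k'+1,\dots,k$). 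The honest fix is either to strengthen hypothesis~2 of the corollary to require the lower bound for $P_{U'}x_0$ for every $k'\le k$ (which is what \lemmaref{utility} actually verifies downstream, since Gaussian anti-concentration gives $\|P_{U'}x_0\|\gtrsim\sqrt{k'/n}$ for the random $x_0$ used by PPI), or to state the corollary only for that random initialization. So: same route as the paper, you correctly identified the one soft spot, but your patch changes the hypotheses rather than proving the statement as written --- exactly the same defect the paper's own one-line verification hides. A second, smaller mismatch shared with the paper: \lemmaref{robust} yields $\|Ax\|/\|x\|\ge(1-\beta)\sigma_{k'}(A)$, and passing from $\sigma_{k'}$ back to $\sigma_1$ costs a $(1-\gamma/4)$ factor that must be absorbed by rescaling $\beta$, as you note; this is a constant-factor issue only when $\gamma=O(\beta)$, so it should be stated as running the lemma with a suitably smaller $\beta$ rather than waved off as a constant.
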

\begin{proof}
We claim that there exists a $k'\le k$ such that $\sigma_{k'}(A)\le
(1-\gamma/4k)\sigma_1(A).$
Indeed, if this is not the case then
\[
\sigma_k(A)\ge \prod_{i=1}^k\left(1-\frac{\gamma}{4k}\right)\sigma_1(A) >
\left(1-\frac{\gamma}{2}\right)\sigma_1(A),
\]
thus violating the assumption of the lemma.
Moreover, $k'$ satisfies $\sigma_{k'}(A)\ge(1-\frac{\gamma}{2})\sigma_1(A).$
We will thus apply \lemmaref{robust} to this $k'$ setting
$\gamma'=\gamma/4k.$ It is easy to verify that by our assumptions above, the
conditions of \lemmaref{robust} are satisfied. Hence, the output $x$ of the
algorithm satisfies
\[
\frac{\|Ax\|}{\|x\|} \ge (1-\frac{\gamma}{2})\sigma_{k'}(A)\ge
(1-\gamma/2)^2\sigma_1(A)\ge(1-\gamma)\sigma_1(A).
\]
\end{proof}

\begin{remark}
We will typically need $k$ in \corollaryref{robust} to be relatively small compared to $n.$
We think of this as a mild assumption even when $k$ and $\alpha$ are constant.
In particular, it is implied by the assumption that
$A$ has a good low-rank approximation for small $k.$ Indeed, if $\sigma_{k+1} >
(1-\alpha)\sigma_1,$ then the best rank $k$ approximation to $A$ has spectral
error $(1-\alpha)\|A\|_2.$
\end{remark}

\subsection{Privacy-Preserving Power Iteration}
\sectionlabel{PPI}

We will next turn the robust power iteration algorithm from the
previous section into a privacy-preserving version. The algorithm is outlined
below.

\begin{figure}[h]
\begin{boxedminipage}{\textwidth}
\noindent \textbf{Input:} Matrix $A\in\mathbb{R}^{n\times n},$ number of
iterations $T\in\mathbb{N},$ privacy parameters $\epsilon,\delta>0,$ upper bound on
coherence $C>0.$
\begin{enumerate}
\item
Let $\sigma = 2\epsilon^{-1}{\sqrt{4T\log(1/\delta)}}.$
\item  Let $x_0=g_0\sim N(0, 1/n)^n.$
\item For $t = 1$ to $T$:
\begin{enumerate}
  \item If $\|x_{t-1}\|_\infty^2>C/n,$ terminate and output ``fail''.
  \item Let $g_t \sim  N\left(0, \frac {C\sigma^2}{n}\right)^n$
  \item Let $x'_t = Ax_{t-1} + g_t$
  \item Put $x_t = \frac{x'_t}{\|x'_t\|_2}$
\end{enumerate}
\end{enumerate}
\noindent \textbf{Output:} Vector $x_T\in\mathbb{R}^n$
\end{boxedminipage}
\caption{Private power iteration (PPI)}
\figurelabel{powering}
\end{figure}

\begin{lemma}
The algorithm PPI satisfies $(\epsilon,\delta)$-differential privacy.
\end{lemma}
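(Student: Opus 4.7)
The plan is to exhibit PPI as the adaptive composition of $T$ Gaussian-mechanism releases and then invoke \theoremref{composition}. First I would isolate the only point in each iteration where the private matrix $A$ actually enters the computation: steps~(a), (b), and~(d) depend only on previously released quantities and fresh public randomness, while step~(c) computes the matrix--vector product $A x_{t-1}$ and adds the noise vector $g_t$. If $A'=A+\alpha e_s e_j^T$ is a neighboring matrix with $|\alpha|\le 1$, then $(A'-A)x_{t-1}=\alpha(x_{t-1})_j\, e_s$, so the $\ell_2$-sensitivity of the map $A\mapsto A x_{t-1}$ is exactly $|\alpha|\cdot|(x_{t-1})_j|\le \|x_{t-1}\|_\infty$.

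The second step is to use the abort check~(a) to bound this sensitivity a priori by a quantity that does \emph{not} depend on $A$. Whenever the algorithm actually reaches step~(c) at iteration $t$, the check must have passed, so $\|x_{t-1}\|_\infty\le\sqrt{C/n}$. Combined with the per-coordinate noise standard deviation $\sigma\sqrt{C/n}$ of $g_t$, \theoremref{gaussian} applied with sensitivity $c=\sqrt{C/n}$ implies that releasing $x_t'$ in round $t$ is $(\epsilon_0,\delta_0)$-differentially private for any pair satisfying $\sigma\ge 4\epsilon_0^{-1}\sqrt{\log(2/\delta_0)}$. I would then set $\delta_0=\delta/(2T)$ and apply \theoremref{composition} with $\delta'=\delta/2$, so that the total slack is $T\delta_0+\delta'=\delta$. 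Plugging the algorithm's choice $\sigma=2\epsilon^{-1}\sqrt{4T\log(1/\delta)}$ into the constraint yields $\epsilon_0=O(\epsilon/\sqrt{T})$, and the direct calculation $\sqrt{2T\log(1/\delta')}\,\epsilon_0+2T\epsilon_0^2\le \epsilon$ shows that the full transcript $(x_1',\dots,x_T')$ is $(\epsilon,\delta)$-differentially private. Since the final output (either $x_T$ or ``fail'') is a deterministic post-processing of the transcript, the post-processing property of differential privacy finishes the proof.

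The one subtlety is that the abort decision is data-dependent, so the number of noisy releases can differ between neighboring inputs $A$ and $A'$; I do not expect this to be the main obstacle, but it must be addressed. It is handled transparently by the standard adaptive-composition model: at each round the decision to continue is a deterministic function of $x_{t-1}$, i.e.\ of the transcript already released, so composition applies to the full (possibly variable-length) transcript without modification. With that observation in place the remainder is routine parameter chasing through the Gaussian-mechanism and composition bounds.
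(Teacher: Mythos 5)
Your proposal is correct and follows essentially the same route as the paper's (very terse) proof: per-round privacy via the Gaussian mechanism, with the abort check bounding the $\ell_2$-sensitivity of $A\mapsto Ax_{t-1}$ by $\|x_{t-1}\|_\infty\le\sqrt{C/n}$, followed by advanced composition over the $T$ rounds. In fact you supply details the paper omits, including the observation that the data-dependent abort is post-processing of the already-released transcript.
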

\begin{proof}
By \theoremref{gaussian}, the algorithm satisfies
$(\epsilon',\delta)$-differential privacy in each round.
Here, $\epsilon'$ was chosen small enough so that \theoremref{composition}
implies $(\epsilon,\delta)$-differential privacy for the algorithm over all.
\end{proof}

The next lemma states the guarantees of the algorithm assuming that it
successfully terminates.

\begin{lemma}\lemmalabel{utility}
Let $\alpha>0.$
Let $A$ be a matrix satisfying $\sigma_k\le (1-\gamma/2)\sigma_1$ for some
$k\ge1.$
Put $T=4\log(\sigma_1(A)).$
Further assume that for some $\beta \geq 0$, $A$ satisfies
\begin{equation}
\equationlabel{lower}
\|A\|_2 =  \frac{\Theta Tk \sqrt{C\log(n)\log(1/\delta)}}{\epsilon\gamma\beta}\mper
\end{equation}
for some sufficiently large constant $\Theta>0.$
Assume that PPI terminates successfully and outputs $x_T$ on input of $A,$
$T,$ and $C.$ Then, with probability $9/10,$
\[
\|Ax_T\|\ge (1-\beta)\|A\|_2\mper
\]
\end{lemma}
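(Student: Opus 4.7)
The plan is to apply \corollaryref{robust} directly to the iterates of PPI, treating the Gaussian perturbations $g_1,\ldots,g_T$ as the adversarial noise vectors in the robust powering template. Since PPI is structurally the noisy power iteration of \figureref{robust-powering}, it suffices to show that, with constant probability, the three hypotheses of \corollaryref{robust} are simultaneously satisfied for the random initial vector $x_0=g_0$ and the random perturbations $g_1,\ldots,g_T.$ I would use the hypothesis \equationref{lower} on $\|A\|_2$ to absorb the required constants, and union-bound the Gaussian tail events over the $T=4\log\sigma_1(A)$ rounds.

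First, I would control the noise magnitudes using standard Gaussian concentration. Since $g_t\sim N(0,C\sigma^2/n)^n,$ its squared norm is distributed as $(C\sigma^2/n)\chi^2_n,$ so with probability $1-1/\poly(n)$ we have $\|g_t\|_2\le \Delta := O(\sigma\sqrt{C\log(nT)}).$ Next, if $P_U$ is the orthogonal projection onto the $k$-dimensional top singular subspace, then $P_Ug_t$ is, in the basis of $U,$ a $k$-dimensional isotropic Gaussian with per-coordinate variance $C\sigma^2/n,$ giving $\|P_Ug_t\|_2\le \Delta_U := O(\sigma\sqrt{kC\log(nT)/n}).$ A similar bound with $n-k\le n$ coordinates yields $\|P_Vg_t\|_2\le \Delta_V := O(\sigma\sqrt{C\log(nT)}).$ Union-bounding over $t=1,\ldots,T$ then establishes hypothesis~\itemref{upper} of \corollaryref{robust} outside of an event of probability at most $1/30,$ say.

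Second, I would handle the initial vector $x_0=g_0\sim N(0,1/n)^n.$ By $\chi^2$ anti-concentration, with probability $\ge 29/30$ both $\|P_Ux_0\|_2\ge c\sqrt{k/n}$ and $\|P_Vx_0\|_2\ge c$ hold for some absolute constant $c>0.$ To verify hypothesis~\itemref{lower}, it then suffices that
\[
c\sqrt{k/n}\ \ge\ \frac{32k\Delta_U}{\gamma(1-\gamma)\sigma_1(A)}
\qquad\text{and}\qquad
c\ \ge\ \frac{32k\Delta_V}{\gamma(1-\gamma)\sigma_1(A)}\mper
\]
Plugging in $\Delta_U=O(\sigma\sqrt{kC\log(nT)/n})$ reduces the first inequality to $\sigma_1(A)\ge O(k\sigma\sqrt{C\log(nT)}/\gamma),$ and the second reduces to the same bound. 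Likewise hypothesis~\itemref{sigma} becomes $\sigma_1(A)\ge O(k\Delta/(\beta\gamma(1-\gamma)))=O(k\sigma\sqrt{C\log(nT)}/(\beta\gamma)).$ Substituting $\sigma=2\epsilon^{-1}\sqrt{4T\log(1/\delta)}$ and using $T=O(\log\sigma_1(A))=O(\log n),$ both conditions are implied by the assumed lower bound $\|A\|_2\ge \Theta Tk\sqrt{C\log n\log(1/\delta)}/(\epsilon\gamma\beta)$ provided $\Theta$ is a sufficiently large absolute constant.

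With all three hypotheses of \corollaryref{robust} satisfied simultaneously (with probability at least $9/10$ after union-bounding the Gaussian concentration events for the $T$ noise vectors and the single anti-concentration event for $x_0$), the corollary guarantees that the output $x_T$ satisfies $\|Ax_T\|/\|x_T\|\ge(1-\beta)\sigma_1(A)=(1-\beta)\|A\|_2,$ which is the desired conclusion. The main obstacle I anticipate is getting the $k$-dependence in $\Delta_U$ right: it is the only bound that benefits from the $k$-dimensionality of $U,$ and it is precisely this improvement that makes the initial-vector condition \itemref{lower} consistent with the $\sigma_1(A)$ lower bound in hypothesis~\itemref{sigma}, rather than forcing a strictly stronger requirement on $\|A\|_2.$
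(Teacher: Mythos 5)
Your proposal is correct and follows essentially the same route as the paper: invoke \corollaryref{robust} with $\Delta=O(\sigma\sqrt{C\log n})$, $\Delta_U=\sqrt{k/n}\,\Delta$, $\Delta_V\le\Delta$ from Gaussian concentration, use anti-concentration of the Gaussian start vector to get $\|P_Ux_0\|\gtrsim\sqrt{k/n}$ and $\|P_Vx_0\|\gtrsim 1$, and then check that all three hypotheses reduce to the assumed lower bound on $\|A\|_2$ after substituting $\sigma=2\epsilon^{-1}\sqrt{4T\log(1/\delta)}$. Your observation that the $\sqrt{k/n}$ gain in $\Delta_U$ is exactly what makes the initialization condition compatible with the $\sigma_1(A)$ bound is precisely the point the paper's proof relies on as well.
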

\begin{proof}
Our goal is to apply \corollaryref{robust}. For this we need to verify that $A$
and $g_t$ satisfy various assumptions of the lemma.
Put $\Delta=\sqrt{4C\log(n)}\sigma.$
With this choice of $\Delta,$ we have by basic Gaussian concentration bounds
(see \lemmaref{gaussian-norm}):
\begin{enumerate}
\item $\Pr\Set{\|g_t\|>\Delta}\le 1/n^2.$
\item $\Pr\Set{\|P_Ug_t\|>\sqrt{\frac kn}\Delta}\le 1/n^2.$
\item $\Pr\Set{\|P_Vg_t\|>\sqrt{\frac {n-k}n}\Delta}\le 1/n^2.$
\end{enumerate}
Hence, with probability $1-1/n,$ none of these events occur for any $t\in[T].$
This verifies that the first assumption of \corollaryref{robust}
holds with high probability for this setting of $\Delta.$
Further note that, by Gaussian anti-concentration bounds (as stated in
\lemmaref{gaussian-norm}) the following claims are true:
\begin{enumerate}
\item $\Pr\Set{\|P_Ux_0\| \ge \sqrt{\frac{k}{50en}}}\ge 98/100$
\item $\Pr\Set{\|P_Vx_0\| \ge \sqrt{\frac{n-k}{50en}}}\ge 98/100$
\end{enumerate}
Hence, both of these events occur with probability $96/100.$
On the other hand the second condition of \corollaryref{robust} requires that
$\|P_Ux_0\|\ge O(k\Delta_U/\gamma\sigma_1(A)).$
Assuming the event $\|P_Ux_0\|\ge\sqrt{k/100n}$ occurred
this corresponds to a lower bound of the form $\sigma_1(A)\ge
O(k\Delta/\gamma)$ which is satisfied by \equationref{lower}. The analogous
argument applies to $\|P_Vx_0\|.$
Finally, the third condition of \lemmaref{robust} follows by comparison with
\equationref{lower}. Hence, the lemma follows.
\end{proof}

\section{Power Iteration and Incoherence}
\label{sec:incoherence}

We will next establish an important symmetry property of the algorithm.
Specifically, we will show that for any of the eigenvectors $u$ of $A$ (assuming
$A$ is symmetric), the sign of the correlation between $u$ and ny intermediate
vector $x_t,$ i.e. $\sign(\langle u,x_t\rangle)$ is unbiased and independent
$\sign(\langle v,x_t\rangle)$ for any other eigenvector $v.$ This property is
rather obvious in the noise-free case where $x_t$ is simply proportional to
$A^tx_0.$ Hence, the sign of $\langle u,x_t\rangle$ is determined by the sign
of $\langle u,x_0\rangle.$ Intuitively, the property continues to hold in the noisy case, because
the noise that we add is symmetric.

\begin{lemma}[Sign Symmetry]
\lemmalabel{signs}
Let $A$ be a symmetric matrix given in its eigendecomposition as
$A=\sum_{i=1}^n\sigma_i u_i u_i^T.$
Let $t\ge 0$ and put $X_i = \sign(\langle u_i,x_t\rangle)$ for $i\in[n].$
Then $(X_1,\dots,X_n)$ is uniformly distributed in $\signs^n.$
\end{lemma}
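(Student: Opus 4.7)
The plan is to prove by induction on $t$ the stronger statement that the joint distribution of the eigenbasis projection vector $Y^{(t)} \defeq (\langle u_1, x_t\rangle, \ldots, \langle u_n, x_t\rangle)$ is invariant under independent sign flips of any subset of coordinates. Since each coordinate is almost surely nonzero, taking coordinate-wise signs is well-defined, and this invariance immediately yields that $(X_1,\ldots,X_n)$ is uniform on $\{-1,+1\}^n$.

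For the base case $t=0$, the initial vector $x_0 \sim N(0,(1/n)I_n)$ is a rotation-invariant Gaussian, so its projections onto the orthonormal eigenbasis $\{u_i\}$ are i.i.d.\ $N(0,1/n)$ and trivially sign-flip symmetric. For the inductive step, the eigenequation $Au_i=\sigma_i u_i$ gives
\[
\langle u_i, x'_t\rangle \;=\; \sigma_i\,\langle u_i, x_{t-1}\rangle + \langle u_i, g_t\rangle\mper
\]
Because $g_t\sim N(0,(C\sigma^2/n)I_n)$ is rotation-invariant and drawn independently of $x_{t-1}$, the noise projections $(\langle u_i, g_t\rangle)_i$ are i.i.d.\ centered Gaussians, independent of $Y^{(t-1)}$; in particular, their joint distribution is invariant under arbitrary coordinate sign flips.

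The only subtlety is the normalization $x_t = x'_t/\|x'_t\|$, which at first glance couples all coordinates. But
\[
\|x'_t\|^2 \;=\; \sum_j \bigl(\sigma_j\,\langle u_j, x_{t-1}\rangle + \langle u_j, g_t\rangle\bigr)^2
\]
depends only on the squared magnitudes of the eigenbasis projections, and is therefore itself invariant under any sign flips in that basis. Consequently, flipping $(\langle u_i, x_t\rangle)_i$ on any $S\subseteq[n]$ is equivalent to flipping the corresponding numerators $\sigma_i\langle u_i, x_{t-1}\rangle+\langle u_i, g_t\rangle$ on $S$; by the inductive hypothesis applied to $Y^{(t-1)}$ together with the coordinate-wise symmetry of the fresh Gaussian noise (which is independent of $Y^{(t-1)}$), this transformation preserves the joint distribution, closing the induction. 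The main obstacle I anticipate is exactly this normalization step, and its resolution — the observation that $\|x'_t\|$ depends only on magnitudes — is what makes the whole argument go through.
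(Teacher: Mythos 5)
Your proof is correct and follows essentially the same route as the paper's: induction on $t$, using that the fresh Gaussian noise is independent of $x_{t-1}$ and symmetric in the eigenbasis, and that the normalization $\|x'_t\|$ depends only on the magnitudes of the eigenbasis coefficients $D_j=\sigma_j\langle u_j,x_{t-1}\rangle+\langle u_j,g_t\rangle$. Your inductive invariant --- joint invariance of $(\langle u_i,x_t\rangle)_{i}$ under arbitrary coordinate sign flips --- is a cleaner and slightly stronger packaging than the paper's two separate conditions (marginal symmetry of each $Y_i(t)$ plus independence of $\sign(Y_i(t))$ from the remaining coordinates), and it makes the final deduction that $(X_1,\dots,X_n)$ is uniform on $\signs^n$ immediate.
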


\begin{proof}
We will establish by induction on~$t$ that the following two conditions
hold for every $t\ge0:$
\begin{enumerate}
\item $Y_i(t) = \langle u_i,x_t\rangle$ is a symmetric random variable
\item $\sign(Y_i(t))$ is independent of $Y_j(t)$ for all $j\ne i.$
\end{enumerate}
Observe that these two conditions imply the statement of the lemma.
In the base case notice that $Y_i(0)$ is just a random Gaussian variable
$N(0,1/n)$ and hence symmetric. Now, let $t\ge 1$ and consider
\[
Y_i(t) = \frac{\langle u_i, Ax_{t-1} + g_t\rangle }{\|Ax_{t-1}+g_t\|}
= \frac{\sigma_i \langle u_i,x_{t-1}\rangle + \langle u_i, g_t\rangle }{\|Ax_{t-1}+g_t\|}
= \frac{\sigma_i Y_i(t-1) + \langle u_i, g_t\rangle }{\|Ax_{t-1}+g_t\|}\mper
\]
Let $D_i=\sigma_i Y_i(t-1)+\langle u_i,g_t\rangle.$ Notice that $D_i$ is a
symmetric random variable, since it is the sum of two independent symmetric
random variable. Here we used the induction hypothesis on $Y_i(t-1).$ We can
see that $Y_i(t)$ is a rescaling of a symmetric random variable, but we also
need to show that the rescaling is independent of $\sign(D_i).$
Note that
\[
\|Ax_{t-1}+g_t\|
= \left\|\sum_{j=1}^n u_j\left(\sigma_i\langle u_j,x_{t-1}\rangle + \langle
u_j,g_t\rangle\right)\right\|
= \sqrt{\sum_{i=1}^n D_i^2}
\]
This shows that the normalization term can be computed from $D_i^2$ and
$\sigma_j Y_j(t-1)+\langle u_j,g_t\rangle$ for $j\ne i.$ Note that each of
these terms is independent of $\sign(D_i).$ Here we used the induction
hypothesis on $Y_j(t-1)$ and the fact that $\langle u_j,g_t\rangle$ are
independent Gaussians for all $j\in[n].$ We conclude that $Y_i(t)=
\frac{D_i}{\sqrt{D_1^2+\cdots+D_n^2}}$ is a symmetric random variable.

It remains to show that $\sign(Y_i(t))$ is independent of $Y_j(t),$ for all
$j\ne[i].$ We have already shown that the normalization term appearing in
$Y_j(t)$ is statistically independent of $\sign(Y_i(t)).$ Moreover, by
induction hypothesis, the numerator $\sigma_j Y_j(t-1)+\langle u_j,g_t\rangle$
is statistically independent of $\sign(Y_i(t-1))$ and statistically
independent of $\langle u_i,g_t\rangle.$ In particular, conditioning on any
subset of the variables $Y_j(t),j\ne i$ leaves the two variables
$\sign(Y_i(t-1))$ and $\sign(\langle u_i,g_t\rangle)$ unbiased. This implies
that no matter what the value of $|Y_i(t-1)|$ and $|\langle u_i,g_t\rangle|$
is, the variable $\sign(Y_i(t))$ is unbiased.
\end{proof}

We will use the previous lemma to bound the $\ell_\infty$-norm of the
intermediate vectors $x_t$ arising in power iteration in terms of the
coherence of the input matrix. We need the following large deviation bound.

\begin{lemma}
\lemmalabel{sign-deviation}
Let $\alpha_1,\dots,\alpha_n$ be scalars such that $\sum_{i=1}^n\alpha_i^2=1$
and $u_1,\dots,u_n$ are unit vectors in $\R^n.$ Put
$B=\max_{i=1}^n\|u_i\|_\infty.$
Further let $(s_1,\dots,s_n)$ be chosen
uniformly at random in $\signs^n.$
Then,
\[
\Pr\Set{ \left\|\sum_{i=1}^ns_i\alpha_iu_i\right\|_\infty >
4B\sqrt{\log n}} \le 1/n^3\mper
\]
\end{lemma}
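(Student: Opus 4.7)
The plan is to fix a coordinate $j \in [n]$, express the $j$-th entry of $\sum_i s_i \alpha_i u_i$ as a sum of independent bounded random variables, apply Hoeffding's inequality, and then union bound over the $n$ coordinates.

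First I would fix $j$ and write
\[
\left(\sum_{i=1}^n s_i \alpha_i u_i\right)_j \;=\; \sum_{i=1}^n s_i \alpha_i u_i(j),
\]
where each summand is an independent mean-zero random variable bounded in absolute value by $|\alpha_i u_i(j)| \le B|\alpha_i|$. The key observation is that the sum of squared bounds telescopes nicely: using $|u_i(j)| \le \|u_i\|_\infty \le B$ together with the normalization $\sum_i \alpha_i^2 = 1$, we get
\[
\sum_{i=1}^n \alpha_i^2 u_i(j)^2 \;\le\; B^2 \sum_{i=1}^n \alpha_i^2 \;=\; B^2.
\]

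Next I would apply a standard sub-Gaussian tail bound (Hoeffding) to the Rademacher sum, yielding
\[
\Pr\left[\left|\sum_{i=1}^n s_i \alpha_i u_i(j)\right| > t\right] \;\le\; 2\exp\!\left(-\frac{t^2}{2\sum_i \alpha_i^2 u_i(j)^2}\right) \;\le\; 2\exp\!\left(-\frac{t^2}{2B^2}\right).
\]
Choosing $t = 4B\sqrt{\log n}$ drives the right-hand side down to $2 e^{-8 \log n} = 2/n^8$, which is comfortably smaller than $1/n^4$.

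Finally I would union bound over the $n$ coordinates: the probability that any coordinate of $\sum_i s_i \alpha_i u_i$ exceeds $4B\sqrt{\log n}$ is at most $n \cdot (1/n^4) \le 1/n^3$, completing the proof. I do not anticipate any genuine obstacle here; the only thing to be mindful of is making sure the per-coordinate variance proxy is bounded by $B^2$ rather than something that scales with $n$, which is exactly why replacing $u_i(j)^2$ by $B^2$ before summing (rather than after) is the right order of operations.
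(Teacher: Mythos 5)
Your proof is correct and follows essentially the same route as the paper's: fix a coordinate, observe that the variance proxy $\sum_i \alpha_i^2 u_i(j)^2$ is at most $B^2$ by the normalization $\sum_i \alpha_i^2 = 1$, apply a Chernoff/Hoeffding tail bound to the Rademacher sum, and union bound over the $n$ coordinates. The only difference is the particular sub-Gaussian constant in the exponent, which is immaterial to the stated bound.
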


\begin{proof}
Let $X = \sum_{i=1}^n X_i$ where $X_i = s_i\alpha_iu_i.$ We will bound the
deviation of $X$ in each entry and then take a union bound over all entries.
Consider $Z=\sum_{i=1}^n Z_i$ where $Z_i$ is the first entry of $X_i$. The
argument is identical for all other entries of $X.$
We have $\E Z = 0$ and $\E Z^2 = \sum_{i=1}^n \E Z_i^2 \le
B^2\sum_{i=1}^n\alpha_i^2=B^2.$  Hence, by \theoremref{chernoff} (Chernoff
bound),
\[
\Pr\Set{ \left|Z\right| > 4B\sqrt{\log(n)}} \le
\exp\left(-\frac{16B^2\log(n)}{4B^2}\right)
\le\exp(-4\log(n))=\frac1{n^4} \mper
\]
The claim follows by taking a union bound over all $n$ entries of $X.$
\end{proof}

\begin{lemma}
\lemmalabel{termination}
Let $A\in\R^{n\times n}.$ Suppose PPI is invoked on $A,$ $T\le n,$ and $C\ge
16\mu(A)\log(n)$ and any choice of $\epsilon,\delta>0.$ Then, with
probability $1-1/n,$ the algorithm terminates successfully after round~$T.$
\end{lemma}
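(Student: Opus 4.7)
The goal is to show that for each $t = 1, \ldots, T$ the failure check $\|x_{t-1}\|_\infty^2 > C/n$ does not trigger, and then union bound over $T \le n$ rounds. By the reduction of \sectionref{symmetric} we may assume $A$ is symmetric with eigendecomposition $A = \sum_i \sigma_i u_i u_i^T$, and expand $x_{t-1} = \sum_i \alpha_i u_i$ with $\alpha_i = \langle u_i, x_{t-1}\rangle$. For $t \ge 2$ the vector $x_{t-1}$ is normalized, so $\sum_i \alpha_i^2 = 1$; the $t = 1$ case, where $x_0 \sim N(0,1/n)^n$ is an unnormalized Gaussian, is handled separately by a direct Gaussian $\ell_\infty$-tail bound together with $C \ge 16\log n$.

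The incoherence hypothesis $\mu(A) = n\|U\|_\infty^2$ gives $\|u_i\|_\infty \le \sqrt{\mu(A)/n}$ for every $i$, so the plan is to invoke \lemmaref{sign-deviation} with $B = \sqrt{\mu(A)/n}$. That lemma treats the coefficients as fixed and requires the signs to be uniform and independent of them, so I first need to argue that the joint distribution of $(\alpha_1, \ldots, \alpha_n)$ is invariant under arbitrary coordinate sign flips, which in turn implies that conditional on the magnitudes $(|\alpha_i|)$ the signs $\sign(\alpha_i)$ are uniform and independent on $\signs^n$. I will prove this sign-flip symmetry by induction on $t$, paralleling the induction inside \lemmaref{signs}: the base case uses that $x_0$ is a spherical Gaussian (hence coordinate-sign-invariant in any orthonormal basis), and the inductive step uses that $U^T g_t$ has i.i.d.\ symmetric Gaussian coordinates and that adding it to the vector $(\sigma_i \alpha_i(t-1))_i$ followed by normalization preserves coordinate sign invariance.

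With conditional sign uniformity in hand, \lemmaref{sign-deviation} applied conditionally on the magnitudes with $B = \sqrt{\mu(A)/n}$ gives $\|x_{t-1}\|_\infty \le 4\sqrt{\mu(A)\log n/n}$ except with probability $1/n^3$; squaring and invoking $C \ge 16\mu(A)\log n$ yields $\|x_{t-1}\|_\infty^2 \le C/n$, so the failure check does not trigger in round $t$. Union bounding over $T \le n$ rounds produces total failure probability at most $T/n^3 \le 1/n^2 \le 1/n$, as required. The main subtlety in the whole argument is exactly the promotion from ``signs are unconditionally uniform'' (as literally stated in \lemmaref{signs}) to ``signs are uniform conditional on magnitudes'', which is what makes \lemmaref{sign-deviation} applicable with the $\alpha_i$ treated as arbitrary fixed coefficients of unit norm.
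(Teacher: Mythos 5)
Your proposal is correct and follows essentially the same route as the paper's proof: expand $x_{t-1}$ in the eigenbasis, bound $\max_i\|u_i\|_\infty$ by $\sqrt{\mu(A)/n}$, invoke the sign symmetry of \lemmaref{signs} to apply the deviation bound of \lemmaref{sign-deviation}, and union bound over the $T\le n$ rounds. You are in fact somewhat more careful than the paper on two points it elides --- the round $t=1$ check on the unnormalized Gaussian $x_0$, and the promotion of \lemmaref{signs} to sign uniformity \emph{conditional on the magnitudes} (via sign-flip invariance of the joint law), which is what \lemmaref{sign-deviation} actually requires.
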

\begin{proof}
The only way for the algorithm to terminate prematurely in step $t+1$
is if the vector $x_t$ satisfies $\|x_t\|_\infty\ge 4\sqrt{\mu(A)\log(n)/n}.$
We will argue that this happens only with probability $1/n^2.$ Hence, by
taking a union bound over all rounds $T\le n,$ we conclude that the algorithm
must terminate with probability $1-1/n.$

Indeed, let $A=\sum_{i=1}^n\sigma_i u_i u_i^T$ be given in its
eigendecomposition. Note that
$B=\max_{i=1}^n\|u_i\|_\infty\le\sqrt{\mu(A)/n}.$
On the other hand, we can write $x_t=\sum_{i=1}^ns_i\alpha_iu_i$ where
$\alpha_i$ are non-negative scalars such that $\sum_{i=1}^n\alpha_i^2=1,$ and
$s_i\in\signs.$ Notice that $s_i=\sign(\langle x_t,u_i\rangle).$ Hence, by
\lemmaref{signs}, the signs $(s_1,\dots,s_n)$ are distributed uniformly at
random in $\signs^n.$ Hence, by \lemmaref{sign-deviation}, it follows that
\[
\Pr\Set{ \left\|x_t\right\|_\infty > 4B\sqrt{\log n}} \le 1/n^3\mper
\]
Hene, a union bound over all $t\in[T]$ completes the proof.
\end{proof}

Finally, we can combine \lemmaref{utility} and \lemmaref{termination}
to conclude that private power iteration converges does not terminate
prematurely and the output vector gives the desired error bound.
\begin{theorem}\theoremlabel{powering}
Let $\gamma,\beta>0.$ Let $A$ be a matrix satisfying $\sigma_k\le (1-\gamma/2)\sigma_1$ for some
$k\ge1.$ Put $T=4\log(\sigma_1(A)).$
Further assume $A$ satisfies
\begin{equation}
\equationlabel{lower}
\|A\|_2= \frac{\Theta Tk
\sqrt{\mu(A)\log(1/\delta)}\log(n)}{\epsilon\gamma\beta}\mper
\end{equation}
for some sufficiently large constant $\Theta>0.$
Then, with probability $8/10,$ on input of $A,$ $T,$ $(\epsilon,\delta)$ and
$C\ge16\mu(A)\log(n),$ the algorithm PPI outputs a vector $x,$ such that
\[
\frac{\|Ax\|}{\|x\|}\ge (1-\beta)\|A\|_2\mper
\]
Equivalently:
\[
\|Ax_T\|\ge \sigma_1(A) - \frac{\Theta Tk \sqrt{C\log(n)\log(1/\delta)}}{\epsilon\gamma}\mper
\]
\end{theorem}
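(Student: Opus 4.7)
The theorem is essentially a bookkeeping combination of \lemmaref{utility} and \lemmaref{termination}, so the plan is to apply both and union bound over the failure events.

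First I would invoke \lemmaref{termination}: since the hypothesis requires $C\ge 16\mu(A)\log(n)$ and $T \le n$ (which holds provided $\sigma_1(A) \le e^{n/4}$, a very mild condition implicit in the setup), the algorithm PPI terminates successfully (i.e., never outputs ``fail'') with probability at least $1-1/n$. Call this event $E_{\mathrm{term}}$.

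Next I would verify that the spectral norm hypothesis \equationref{lower} of the theorem implies the spectral norm hypothesis \equationref{lower} of \lemmaref{utility} with the same $C$. This is a direct substitution: plugging $C \ge 16\mu(A)\log(n)$ into the utility lemma's lower bound $\|A\|_2 = \Theta Tk\sqrt{C\log(n)\log(1/\delta)}/(\epsilon\gamma\beta)$ yields $\|A\|_2 = \Theta' Tk\sqrt{\mu(A)\log(1/\delta)}\log(n)/(\epsilon\gamma\beta)$, which is precisely the theorem's hypothesis (absorbing the factor $4$ into the constant $\Theta$). Conditional on $E_{\mathrm{term}}$, \lemmaref{utility} then gives that $\|Ax_T\| \ge (1-\beta)\|A\|_2$ with probability at least $9/10$.

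Taking a union bound over the failure of $E_{\mathrm{term}}$ and the failure of the utility guarantee, the output vector satisfies $\|Ax_T\|/\|x_T\| \ge (1-\beta)\|A\|_2$ with probability at least $9/10 - 1/n \ge 8/10$ for $n$ at least a small constant (and $\|x_T\|=1$ by construction). This gives the first conclusion.

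For the ``equivalently'' restatement, I would simply write
\[
\|Ax_T\| \ge (1-\beta)\sigma_1(A) = \sigma_1(A) - \beta\sigma_1(A),
\]
and then bound $\beta\sigma_1(A)$ using \equationref{lower}: after multiplying both sides of the hypothesis by $\beta$, we have $\beta\|A\|_2 = \Theta Tk\sqrt{\mu(A)\log(1/\delta)}\log(n)/(\epsilon\gamma)$, which is dominated by $\Theta Tk\sqrt{C\log(n)\log(1/\delta)}/(\epsilon\gamma)$ whenever $C \ge 16\mu(A)\log(n)$ (since then $\mu(A)\log^2(n) \le C\log(n)/16$). No real ``hard part'' arises here; the only mild subtlety is making sure the constant $\Theta$ in the theorem statement is large enough to absorb both the constant $16$ from the termination condition and the constant from \lemmaref{utility}, which I would state explicitly at the start of the proof.
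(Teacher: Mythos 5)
Your proposal is correct and matches the paper's own proof, which likewise combines \lemmaref{utility} (instantiated with $C=16\mu(A)\log(n)$) with \lemmaref{termination} and takes a union bound over the two failure events to get probability $9/10-o(1)\ge 8/10$. Your additional bookkeeping for the ``equivalently'' restatement and the absorption of constants into $\Theta$ is consistent with what the paper leaves implicit.
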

\begin{proof}
The proof follows directly by combining \lemmaref{utility} applied with
$C=16\mu(A)\log(n)$ with \lemmaref{termination}. The latter lemma implies
that for this setting the algorithm terminates with probability $1-1/n.$ The
former lemma implies that the stated error bound holds in this case with
probability $9/10.$ Both event occur simultaneously with probability
$9/10-o(1).$
\end{proof}

\begin{remark}[On choosing $T$ and $C$]
As stated \theoremref{powering} requires the input to the algorithm to depend
on two sensitive quantities, i.e., $\sigma_1(A)$ and $\mu(A).$ It is easy to
get rid of this using standard techniques. We can upper bound $\sigma_1(A)$ by
$\|A\|_1=\sum_{ij}|A_{ij}|$ which can be computed efficiently and privately
(as it is $1$-sensitive). Since the dependence on $\sigma_1(A)$ in the choice
of $T$ is only logarithmic, this can change the error bounds only by constant
factors. To get rid of $\mu(A),$ we can try all choices of $C=2^{i},$
$i\in\{0,1,\dots,\log(n)\}.$ Since $\mu(A)\le n,$ this process will eventually
find a setting of $C$ that gives the right upper bound up to an overestimate
of at most a factor $2.$ As we need to scale down $(\epsilon,\delta)$ by a
$\log(n)$ factor in each execution, the error bounds deteriorate by a
$O(\log(n))$-factor. This loss can be replaced by $O(\log\log n)$ using the
exponential mechanism~\cite{McSherryT07}. We omit the details as they are
standard.
\end{remark}

\section{Rank $k$ approximations and Deflation}
\label{sec:deflation}
In this section, we show how to successively call our algorithm for obtaining
rank $1$ approximations to obtain a rank $k$ approximation. To do this, we
need to argue two things. First, we must argue that approximately optimal rank
$1$ approximations to successively `deflated' versions of our original matrix
can be combined to yield an approximately optimal rank $k$ approximation.
Second, we must argue that incoherence is propagated throughout the deflation
process, so that we can in fact obtain good rank $1$ approximations to the
deflated matrices.

\begin{figure}[h]
\begin{boxedminipage}{\textwidth}
\noindent \textbf{Input:} Matrix $A\in\mathbb{R}^{n\times n},$ target rank $k$, number of
iterations $T\in\mathbb{N},$ privacy parameters $\epsilon,\delta>0,$ upper bound on
coherence $C>0.$
\begin{enumerate}
\item Let $\epsilon' = \epsilon/(\sqrt{4k \ln(1/\delta)})$, $\delta' = \delta/k$
\item Let $A_0 \leftarrow A$, $B_0 \leftarrow 0$.
\item For $i = 1$ to $k$:
\begin{enumerate}
  \item Let $v_{i} \leftarrow \textrm{PPI}(A_{i-1}, T, \epsilon', \delta', C)$
  \item Let $\hat{\sigma_{i}} = \|A_{i-1}v_i\|_2 + \textrm{Lap}(1/\epsilon')$
  \item Let $A_i \leftarrow A_{i-1} - \hat{\sigma_{i}}v_iv_i^T$, $B_i \leftarrow B_{i-1} + \hat{\sigma_{i}}v_iv_i^T$.
\end{enumerate}
\end{enumerate}
\noindent \textbf{Output:} Matrix $A_k$
\end{boxedminipage}
\caption{Rank $k$ approximation (rank-k).}
\figurelabel{rankk}
\end{figure}

Our analysis will be based on a useful lemma of Kapralov and Talwar, that
shows that the standard ``matrix deflation'' method can be applied even given
only approximate eigenvectors. The lemma here is actually an easy modification
of the lemma from \cite{KapralovT13}. The details can be found in Appendix \ref{sec:deflationappendix}

\begin{lemma}[Deflation Lemma \cite{KapralovT13}]
\label{cor:modified}
Let $A$ be a symmetric matrix with eigenvalues $\lambda_1 \geq \ldots \geq \lambda_n$. There exists a constant $C > 0$ so that the following holds. Let $x$ be any unit vector such that $\|Ax\| \geq (1-\alpha/C)\lambda_1$, where $\alpha \in (0,1)$. Let $A' = A - t v\cdot v^T$, where $t \in (1\pm \alpha/C)\|A x\|$. Denote the eigenvalues of $A'$ by $\lambda_1' \geq \ldots \geq \lambda_n'$.
\begin{enumerate}
\item $\lambda_k \leq \lambda'_{k-1} \leq \min(\lambda_{k-1}, \lambda_k+\alpha\lambda_1)$ for each $k \in \{1,\ldots,n\}$.
\end{enumerate}
\end{lemma}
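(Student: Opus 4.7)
The plan is to establish the three inequalities bundled inside item (1) separately:
$\lambda_k \le \lambda'_{k-1}$, $\lambda'_{k-1}\le \lambda_{k-1}$, and the main content $\lambda'_{k-1}\le \lambda_k+\alpha\lambda_1$. The first two are one-line consequences of classical linear algebra: since $A'=A-t\,vv^{T}$ with $t>0$, subtracting a rank-one PSD matrix, Weyl's inequality gives $\lambda'_{i}\le \lambda_i$ for every $i$ (so in particular $\lambda'_{k-1}\le\lambda_{k-1}$), while the standard Cauchy interlacing theorem for rank-one perturbations yields $\lambda_i(A)\ge\lambda'_i\ge\lambda_{i+1}(A)$, whose specialization at $i=k-1$ is exactly $\lambda'_{k-1}\ge\lambda_k$. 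Both steps are independent of the approximation assumption on $v$ and of $t$.

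For the main inequality $\lambda'_{k-1}\le\lambda_k+\alpha\lambda_1$, I would first dispense with two trivial subcases. If $\lambda_{k-1}-\lambda_k\le\alpha\lambda_1$, the bound follows from $\lambda'_{k-1}\le\lambda_{k-1}$ already shown. If $\lambda_k\ge(1-\alpha)\lambda_1$, it follows from $\lambda'_{k-1}\le\lambda_1\le\lambda_k+\alpha\lambda_1$. So I can assume $\lambda_k<(1-\alpha)\lambda_1$, which produces a real spectral gap separating $V_{<k}=\operatorname{span}(u_1,\dots,u_{k-1})$ from $V_{\ge k}=\operatorname{span}(u_k,\dots,u_n)$. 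The key structural consequence of the hypothesis $\|Av\|\ge(1-\alpha/C)\lambda_1$ is that $v$ must concentrate in $V_{<k}$: writing $v=\sum_i c_i u_i$ and using $\sum_i c_i^2\lambda_i^2\ge(1-\alpha/C)^2\lambda_1^2$ together with $\lambda_i\le\lambda_k<(1-\alpha)\lambda_1$ for $i\ge k$, a short algebra shows $c^{2}:=\|(I-P_{\ge k})v\|^2\ge 1-O(1/C)$, provided $C$ is chosen large relative to $1/\alpha$.

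Next I apply the Courant--Fischer min-max characterization with the $(n-k+2)$-dimensional test subspace $W:=V_{\ge k}+\operatorname{span}(v)$. Writing an arbitrary unit $y\in W$ orthogonally as $y=w+s\hat\xi$ with $w\in V_{\ge k}$ and $\hat\xi=(I-P_{\ge k})v/c$, the identity $w^T A \hat\xi=0$ (because $A$ preserves both $V_{<k}$ and $V_{\ge k}$) yields $y^T A y\le \lambda_k\|w\|^2+\lambda_1 s^2$, while $y^T v = w^T\pi + sc$ with $\pi=P_{\ge k} v$. Putting these together gives $y^T A'y\le \lambda_k+s^2(\lambda_1-\lambda_k)-t(w^T\pi+sc)^2$. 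I would then split according to whether $s^2\le 1-c^2$ (in which case one can choose $w$ orthogonal to $\pi$ plus the right sign to make $w^T\pi=-sc$, killing the deflation term and leaving $\lambda_k+s^2(\lambda_1-\lambda_k)\le\lambda_k+O(\lambda_1/C)$) or $s^2>1-c^2$ (in which case $|y^T v|\ge |sc|-\sqrt{(1-s^2)(1-c^2)}>0$, and the $-t(y^Tv)^2$ term, combined with $t\ge (1-\alpha/C)^2\lambda_1$ and $c^2\ge 1-O(1/C)$, cancels the residual $(\lambda_1-\lambda_k)s^2$ up to an additive $O(\lambda_1/C)$). Taking $C$ sufficiently large then gives the desired $\lambda_k+\alpha\lambda_1$ bound.

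The main obstacle is the second subcase ($s^2+c^2>1$): here one cannot force $y^Tv=0$, and one must extract enough cancellation from the two tight inequalities $t\approx \|Av\|\approx \lambda_1$ and $c\approx 1$ simultaneously, so the constants involved in $C$ need to be tracked carefully to beat the large term $s^2(\lambda_1-\lambda_k)\le \lambda_1-\lambda_k$ that otherwise appears. A secondary subtlety is that the clean bound $y^T A y\le \lambda_k\|w\|^2+\lambda_1 s^2$ tacitly uses $\lambda_k\ge 0$, which is automatic in the deflation context of \sectionref{deflation} because one iterates on a PSD residual (or equivalently on the symmetrization of \sectionref{symmetric}); without that, the right-hand side $\lambda_k+\alpha\lambda_1$ in the lemma should really be read against $\|A\|_2$ rather than $\lambda_1$.
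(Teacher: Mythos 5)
Your proposal takes a fundamentally different route from the paper's, and it is worth being clear about what the paper actually does: it does \emph{not} prove the eigenvalue-perturbation statement itself. The deflation lemma is imported as a black box from \cite{KapralovT13} in its original form, whose hypothesis is on the Rayleigh quotient $x^TAx \geq (1-\alpha/C)\lambda_1$, and the paper's entire contribution in the appendix is a short conversion lemma showing that $\|Ax\| \geq (1-\alpha)\lambda_1$ implies $x^TAx \geq (1-5\alpha)\lambda_1$ (a few lines in the eigenbasis, splitting the spectrum at magnitude $(1-4\alpha)\lambda_1$ and lower-bounding the mass on the top part). So you are attempting strictly more than the paper: a self-contained proof of the underlying perturbation bound. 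Your easy steps are fine --- the rank-one interlacing inequality $\lambda_{k-1}(A)\ge\lambda'_{k-1}\ge\lambda_k(A)$ gives both outer bounds at once --- and Courant--Fischer with the test subspace $V_{\ge k}+\mathrm{span}(v)$ is indeed the standard engine for such deflation lemmas.

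However, the main inequality is not actually proved, for two reasons. First, your concentration step is false as stated: from $\sum_i c_i^2\lambda_i^2 \ge (1-\alpha/C)^2\lambda_1^2$ and $\lambda_i\le\lambda_k<(1-\alpha)\lambda_1$ for $i\ge k$ you cannot conclude $\|P_{<k}v\|^2\ge 1-O(1/C)$, because an eigenvalue with $i\ge k$ may be large and \emph{negative}, so that $\lambda_i^2$ exceeds $\lambda_1^2$. Concretely, for $A=\mathrm{diag}(1,-10)$ and $v=e_2$ the hypothesis $\|Av\|\ge(1-\alpha/C)\lambda_1$ holds, $v$ has no mass on $V_{<2}$, and the conclusion $\lambda'_1\le\lambda_2+\alpha\lambda_1$ fails outright; so the statement with the $\|Ax\|$ hypothesis genuinely requires an additional spectral assumption (nonnegativity of the large-magnitude eigenvalues), which is why \cite{KapralovT13} phrase the hypothesis via $x^TAx$ and why the paper routes through the Rayleigh quotient (the same implicit assumption lurks in the paper's conversion lemma, but it is satisfied in the deflation context). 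Your closing caveat gestures at this but attaches it to the wrong step --- the bound $y^TAy\le\lambda_k\|w\|^2+\lambda_1 s^2$ is fine without $\lambda_k\ge0$; it is the concentration of $v$ that breaks. Second, even granting concentration, the subcase $s^2>1-c^2$ is where all the content lies, and you state yourself that the required cancellation between $-t(y^Tv)^2$ and $s^2(\lambda_1-\lambda_k)$ has not been carried out; as written this part is a plan, not a proof. The economical fix is to do what the paper does: cite the perturbation bound and prove only the $\|Ax\|$-to-$x^TAx$ conversion.
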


We now argue that deflation preserves incoherence. Here, we make use of two
lemmas from \cite{HardtR12}.

\begin{definition}[$\mu_0$-coherence]
\definitionlabel{mu0}
Let $U$ be an $n\times r$ matrix with orthonormal columns and $r\le n.$
Recall, that $P_U=UU^T.$ The \emph{$\mu_0$-coherence} of $U$ is defined as
\begin{equation}
\mu_0(U)
= \frac nr\max_{1\le j\le n} \|P_Ue_j\|^2
= \frac nr\max_{1\le j\le n} \|U_{(j)}\|^2\mper
\end{equation}
Here, $e_j$ denotes the $j$-th $n$-dimensional standard basis vector and
$U_{(j)}$ denotes the $j$-th row of $U.$
The \emph{$\mu_0$-coherence} of an $n\times n$ matrix $A$ of rank $r$ given in
its singular value decomposition $U\Sigma V^T$ where $U\in\R^{n\times r}$ is
defined as $\mu_0(U).$
\end{definition}
Observe that we always have $\mu_0(A) \leq \mu(A).$

\begin{lemma}[\cite{HardtR12}]
\lemmalabel{mu0-perturb}
Let $u_1,\dots,u_r\in\R^n$ be orthonormal vectors. Pick unit vectors
$n_1,\dots,n_k\in\mathbb{S}^{n-1}$ uniformly at random. Assume that
\begin{equation}\equationlabel{mlarge}
n\ge c_0 k(r+k)\log(r+k)
\end{equation}
where $c_0$ is a sufficiently large constant.
Then, there exists a
set of orthonormal vectors $v_1,\dots,v_{r+k}\in\R^n$ such that
$\mathrm{span}\{v_1,\dots,v_{r+k}\}=\mathrm{span}\{u_1,\dots,u_r,n_1,\dots,n_k\}$ and
furthermore, with probability $99/100,$
\[
\mu_0([v_1\mid \dots\mid v_{r+k}])
\le
2\mu_0([u_1\mid\dots\mid u_k])+ O\left( \frac{k\log n}{r}\right)
\]
\end{lemma}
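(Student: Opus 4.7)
The plan is to take $v_i := u_i$ for $i = 1,\dots,r$ and to produce $v_{r+1},\dots,v_{r+k}$ by Gram--Schmidt orthogonalizing $n_1,\dots,n_k$ against $U := \mathrm{span}\{u_1,\dots,u_r\}$ and then against each other. Almost surely these span a $k$-dimensional subspace $W \subseteq U^\perp$; set $V := U \oplus W$. The crucial observation is that since each $n_j$ is uniform on $\mathbb{S}^{n-1}$ and the distribution is rotationally invariant, $W$ is distributed as a uniformly random $k$-dimensional subspace of the $(n-r)$-dimensional space $U^\perp$.

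\textbf{Projection decomposition.} Because $W \perp U$, one has $P_V = P_U + P_W$, and hence for each standard basis vector $e_j$,
\[
\|P_V e_j\|^2 = \|P_U e_j\|^2 + \|P_W e_j\|^2 \leq \frac{r\,\mu_0(U)}{n} + \|P_W y_j\|^2,
\]
where $y_j := (I-P_U)e_j$ satisfies $\|y_j\|\leq 1$. The problem therefore reduces to controlling $\max_j \|P_W y_j\|^2$ for fixed vectors $y_j$ and a uniformly random $k$-dimensional subspace $W$ of an $(n-r)$-dimensional ambient space.

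\textbf{Concentration for random subspace projections.} For any fixed unit vector $y$, $\|P_W y\|^2$ has the same distribution as $\sum_{i=1}^k Z_i^2$ where $(Z_1,\dots,Z_{n-r})$ is uniform on $\mathbb{S}^{n-r-1}$; its mean is $k/(n-r)$ and Levy-type concentration (or standard Beta/chi-squared tails) gives
\[
\Pr\!\left[\|P_W y\|^2 > \frac{k}{n-r} + C\sqrt{\tfrac{k\log n}{n-r}} + C\,\tfrac{\log n}{n-r}\right] \leq n^{-10}
\]
for an absolute constant $C$. A union bound over $j \in [n]$ ensures that, with probability at least $99/100$, the bound $\|P_W y_j\|^2 = O(k\log n/(n-r))$ holds simultaneously for every $j$ (using $k \geq 1$ and $k\log n /(n-r) \leq 1$, both implied by \equationref{mlarge}).

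\textbf{Putting it together.} Taking the maximum over $j$ and multiplying by $n/(r+k)$,
\[
\mu_0(V) \leq \frac{r}{r+k}\,\mu_0(U) + O\!\left(\frac{n}{r+k}\cdot\frac{k\log n}{n-r}\right).
\]
The hypothesis $n \geq c_0 k(r+k)\log(r+k)$ yields $n\geq 2r$ for $c_0$ large enough, so $n/(n-r) \leq 2$ and the error term simplifies to $O(k\log n/(r+k)) \leq O(k\log n/r)$. Together with $r/(r+k)\leq 1 \leq 2$, this gives the claimed bound $\mu_0(V) \leq 2\mu_0(U) + O(k\log n/r)$. The main technical nuisance, rather than any deep obstacle, is verifying that after projecting out $U$ and renormalizing, the resulting Gram--Schmidt basis really does span a \emph{uniformly} random subspace of $U^\perp$---this is where rotational invariance of the spherical distribution is essential---and arranging the concentration constants so that the single Gram--Schmidt construction succeeds with the required probability $99/100$.
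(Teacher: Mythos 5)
Your proposal is correct, and it is worth noting that the paper itself does not prove this lemma --- it imports it from \cite{HardtR12}, whose argument analyzes the Gram--Schmidt process vector by vector, bounding the $\ell_\infty$-norm of each new basis vector $v_{r+i}$ and using the hypothesis $n\ge c_0k(r+k)\log(r+k)$ to keep the Gram--Schmidt residual norms (the normalizing denominators) bounded away from zero. Your route is genuinely different at the key step: rather than tracking individual coordinates of individual Gram--Schmidt vectors, you observe that by rotational invariance the span $W$ of the projected random vectors is a Haar-uniform $k$-dimensional subspace of $U^\perp$, so that $\|P_Ve_j\|^2=\|P_Ue_j\|^2+\|P_Wy_j\|^2$ reduces everything to a single Beta/chi-squared tail bound for $\|P_Wy\|^2$ with a fixed $y$, plus a union bound over the $n$ coordinates. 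This is cleaner and in fact stronger: it yields the constant $1$ (not $2$) in front of $\mu_0(U)$, and it uses only $n-r\ge n/2$ and $n-r\gtrsim\log n$ rather than the full strength of \equationref{mlarge} --- the heavy hypothesis is an artifact of the coordinate-wise analysis in the cited source. The only points to make explicit are (i) that $\mu_0$ of the extended matrix depends only on the span (the diagonal of $VV^T$ is basis-independent), so choosing any orthonormal basis of $U\oplus W$ is legitimate, and (ii) the reduction from $y_j=(I-P_U)e_j$ with $\|y_j\|\le1$ to the unit-vector case via $\|P_Wy_j\|^2\le\|P_W\hat y_j\|^2$; both are routine. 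Modulo the usual constant-chasing in the tail bound (one needs $n-r\ge c\log n$, which \equationref{mlarge} guarantees for all but constantly many $n$), the argument is complete.
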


\begin{lemma}[\cite{HardtR12}]
\lemmalabel{mu0-infty}
Let $U$ be an orthonormal $n\times r$ matrix. Suppose $w\in\mathrm{range}(U)$ and
$\|w\|=1.$ Then,
\[
\|w\|_\infty^2\le \frac rn\cdot\mu_0(U)\mper
\]
\end{lemma}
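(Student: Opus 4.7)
The plan is to prove this by expressing $w$ in terms of the orthonormal basis given by the columns of $U$ and then bounding each coordinate of $w$ via Cauchy-Schwarz. Since $w \in \mathrm{range}(U)$ and $U$ has orthonormal columns, I can write $w = U\alpha$ for a unique $\alpha\in\R^r$, and the orthonormality of the columns gives $\|\alpha\| = \|w\| = 1$.

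Next, I would fix an arbitrary index $j\in\{1,\dots,n\}$ and write the $j$-th coordinate as $w_j = e_j^T w = e_j^T U\alpha = \langle U^T e_j,\alpha\rangle$. Applying Cauchy-Schwarz together with $\|\alpha\|=1$ yields $|w_j| \le \|U^T e_j\|$, and hence $w_j^2 \le \|U^T e_j\|^2 = e_j^T U U^T e_j = \|P_U e_j\|^2$. The quantity $\|P_U e_j\|^2$ is exactly $\|U_{(j)}\|^2$, the squared Euclidean norm of the $j$-th row of $U$, which is precisely what appears in the definition of $\mu_0(U)$ (\definitionref{mu0}).

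Finally, by the definition of $\mu_0$-coherence we have $\|P_U e_j\|^2 \le \frac{r}{n}\mu_0(U)$ for every $j$, so taking the maximum over $j$ gives $\|w\|_\infty^2 \le \frac{r}{n}\mu_0(U)$ as desired.

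There is essentially no obstacle here; the proof is a one-line consequence of Cauchy-Schwarz and the definition of $\mu_0$. The only thing to be careful about is the identification $\|U^T e_j\|^2 = \|P_U e_j\|^2$ (which uses the fact that $U^T U = I_r$ so that $U$ embeds $\R^r$ isometrically), and the fact that $\|\alpha\| = \|w\|$ which requires $U$ to have orthonormal columns.
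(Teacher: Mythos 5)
Your proof is correct: writing $w = U\alpha$ with $\|\alpha\|=\|w\|=1$, applying Cauchy--Schwarz coordinatewise, and identifying $\|U^T e_j\|^2 = \|P_U e_j\|^2$ is exactly the standard argument, and it matches the definition of $\mu_0$ used here. The paper itself states this lemma as a citation to \cite{HardtR12} without reproducing a proof, so there is nothing further to compare against.
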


\begin{lemma}
\label{lem:deflationcoherence}
Let $A \in \mathbb{R}^{n\times n}$ be a matrix. Define a set of vectors $s_1,\ldots,s_k$ and matrices $A'_1,\ldots,A'_k$ as follows. Let $A'_0 = A$. For each $i$, $s_i = A_{i-1}'t_i + c_i n_i$, where $t_i \in \mathbb{R}^n$ is an arbitrary vector, $c_i$ is an arbitrary real number, and $n_i \in \mathbb{S}^{n-1}$ is selected uniformly at random. Let $A_i' = A_{i-1}' - d_i s_i s_i^T$, where $d_i$ is an arbitrary real number. Then for all $i$:
$$\mu(A'_i) \leq 2r \mu(A) + O\left(i\log n\right)$$
\end{lemma}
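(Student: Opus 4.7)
The plan is to trace the range of each $A'_i$ and then combine the two coherence lemmas from \cite{HardtR12} that have just been stated.

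First I would establish by induction on $i$ that
\[
\mathrm{range}(A'_i) \subseteq \mathrm{range}(A) + \mathrm{span}(n_1,\dots,n_i)\mper
\]
The base case is trivial. For the inductive step, $s_i = A'_{i-1} t_i + c_i n_i$ lies in $\mathrm{range}(A'_{i-1}) + \mathrm{span}(n_i)$, and $A'_i = A'_{i-1} - d_i s_i s_i^T$ has range contained in $\mathrm{range}(A'_{i-1}) + \mathrm{span}(s_i)$. Applying the inductive hypothesis completes the step.

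Second, let $u_1,\dots,u_r$ be an orthonormal basis of $\mathrm{range}(A)$. Since the $n_j$ are independent uniform unit vectors, \lemmaref{mu0-perturb} applies: assuming $n$ is large enough to satisfy \equationref{mlarge} (for $k=i$), there exist orthonormal $v_1,\dots,v_{r+i}$ that span $\mathrm{range}(A)+\mathrm{span}(n_1,\dots,n_i)$ and, with probability $99/100$, satisfy
\[
\mu_0([v_1\mid\dots\mid v_{r+i}]) \le 2\mu_0([u_1\mid\dots\mid u_r]) + O\Paren{\frac{i\log n}{r}} \le 2\mu(A) + O\Paren{\frac{i\log n}{r}}\mcom
\]
where the final bound uses the remark $\mu_0(A)\le \mu(A)$ that appears in the excerpt.

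Third, I would push this coherence bound through to $A'_i$ via \lemmaref{mu0-infty}. Working in the reduction to symmetric matrices of \sectionref{symmetric}, let $w$ be any (left) singular vector of $A'_i$. Then $w\in\mathrm{range}(A'_i)\subseteq \mathrm{span}(v_1,\dots,v_{r+i})$, so
\[
\|w\|_\infty^2 \le \frac{r+i}{n}\cdot \mu_0([v_1\mid\dots\mid v_{r+i}])\mper
\]
Taking the maximum over all columns of the singular-vector matrix $W$ of $A'_i$ and multiplying by $n$,
\[
\mu(A'_i) = n\|W\|_\infty^2 \le (r+i)\Paren{2\mu(A) + O\Paren{\frac{i\log n}{r}}} = 2(r+i)\mu(A) + O\Paren{\frac{(r+i)\, i\log n}{r}}\mcom
\]
which, after absorbing the $i\le r$ case into constants (or writing $r+i \leq 2r$ plus an additive $i$ slack), yields the stated bound $\mu(A'_i)\le 2r\mu(A) + O(i\log n)$. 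A union bound over the $k$ steps controls the overall failure probability.

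The main obstacle I anticipate is bookkeeping rather than a conceptual difficulty: the $s_i$ that drive deflation are not random themselves, but depend on the already-revealed $n_1,\dots,n_i$ through arbitrary $t_i,c_i,d_i$, so I need to be careful that \lemmaref{mu0-perturb} is invoked on the \emph{fresh} random directions $n_1,\dots,n_i$ and not on the $s_j$. The range-containment argument isolates this dependence cleanly. The secondary subtlety is ensuring that the hypothesis $n\ge c_0 i(r+i)\log(r+i)$ of \lemmaref{mu0-perturb} is respected throughout the $k$ deflation steps, and that the union bound still gives a useful probability.
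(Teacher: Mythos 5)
Your proposal is correct and follows essentially the same route as the paper: an induction showing that the (left) singular vectors of $A'_i$ lie in $\mathrm{span}(u_1,\dots,u_r,n_1,\dots,n_i)$, followed by \lemmaref{mu0-perturb} applied to the fresh random directions and \lemmaref{mu0-infty} to convert the $\mu_0$ bound into a bound on $\mu(A'_i)$. Your range-containment phrasing of the induction is a slightly cleaner packaging of the paper's explicit SVD expansion, and your handling of the $(r+i)$ versus $r$ dimension factor is, if anything, more careful than the paper's.
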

\begin{proof}
We write $A = \sum_{j=i}^r \sigma_j u_j v_j^T$. The proof will follow easily from \lemmaref{mu0-perturb} and the following claim.

\begin{claim}
For $i \in \{0,\ldots,k\}$, let $w_1,\ldots, w_{r+i}$ denote the left singular vectors of $A'_i$. Then, $w_1,\ldots,w_{r+i} \in \mathrm{span}(u_1,\ldots,u_r,n_1,\ldots,n_i)$.
\end{claim}
\begin{proof}
We prove this by induction. The claim is immediate when $i = 0$, which forms
the base case. For the inductive case, consider $A'_i = A_{i-1}' - d_i s_i
s_i^T$. Write the singular value decomposition of $A'_{i-1}$ as: $A'_{i-1} =
\sum_{j=1}^{r+i-1} \sigma_j' y_j z_j^T$, and write the singular value
decomposition of $A'_i$ as: $A'_i = \sum_{j=1}^{r+i}\lambda_j w_j x_j^T$.   We
can also write $A'_i = \sum_{j=1}^{r+i-1} \sigma_j' y_j z_j^T - d_is_is_i^T$.
Therefore, for all $j,$
\begin{align*}
\lambda_j w_j = A'_i x_j
= \sum_{\ell = 1}^{r+i-1} \left(\sigma'_\ell \langle x_j, z_\ell \rangle\right)y_\ell - \left(d_i\langle x_j, s_i \rangle\right) s_i
\end{align*}
Therefore, $w_j \in \mathrm{span}(y_1,\ldots,y_{r+i-1}, s_i)$. But $s_i = A_{i-1}'t_i + c_i n_i$, so $s_i \in \mathrm{span}(y_1,\ldots,y_{r+i-1},n_i)$, and by our inductive assumption, $y_1,\ldots,y_{r+i-1} \in \mathrm{span}(u_1,\ldots,u_r,n_1,\ldots,n_{i-1})$. Therefore, we can conclude that $w_j \in \mathrm{span}(u_1,\ldots,u_r,n_1,\ldots,n_{i})$ for all $j$.
\end{proof}
By \lemmaref{mu0-perturb}, we can conclude that for all $j$, $w_j \in \mathrm{span}(v'_1,\ldots,v'_{r+i})$ such that $v'_1,\ldots,v'_{r+i}$ are orthonormal with:
$$\mu_0(v_1'\mid\ldots\mid v_{r+i}') \leq  2\mu_0(A) +  O\left( \frac{i\log n}{r}\right) \leq 2\mu(A) + O\left( \frac{i\log n}{r}\right).$$
Therefore, we have:
\begin{align*}
\mu(A_i') = n\cdot \max_{j \in [r+i]}\|w_j\|_\infty^2
\leq r \mu_0(v_1'\mid\ldots\mid v_{r+i}')
\leq  2r\mu (A) + O\left(i\log n\right)
\end{align*}
where the first inequality follows from \lemmaref{mu0-infty}.
\end{proof}

We are now ready to state our results for obtaining good rank $k$ approximations in the spectral norm. First, we translate our bounds from Section \ref{sec:incoherence} into a statement about rank-1 matrix approximation.

\begin{theorem}
Let $\gamma,\beta>0.$ Let $A$ be a matrix satisfying $\sigma_c\le (1-\gamma/2)\sigma_1$ for some
$c\ge1.$ Put $T=4\log(\sigma_1(A)).$
Further assume $A$ satisfies
\begin{equation}
\equationlabel{lower}
\|A\|_2 = \frac{\Theta T c
\sqrt{\mu(A)\log(1/\delta)}\log(n)}{\epsilon\gamma\beta}\mper
\end{equation}
for some sufficiently large constant $\Theta>0.$
Then, with probability $7/10,$ on input of $A,$ $T,$ $(\epsilon,\delta)$ and
$C\ge9\mu(A)\log(n),$ the algorithm rank-k$(A,T,\epsilon,\delta,1)$ outputs a rank $1$ matrix $A_1$ such that:
\[
\|A - A_1\|_2 \leq \sigma_2(A) + \beta\sigma_1(A) \mper
\]
\end{theorem}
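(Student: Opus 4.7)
The plan is to combine the accuracy guarantee of PPI (\theoremref{powering}) with the Deflation Lemma (\lemmaref{cor:modified}). The rank-$k$ algorithm with $k=1$ simply computes $v_1 \leftarrow \textrm{PPI}(A, T, \epsilon', \delta', C)$ and then outputs $A_1 = \hat{\sigma}_1 v_1 v_1^T$ with $\hat{\sigma}_1 = \|A v_1\| + \textrm{Lap}(1/\epsilon')$, so the proof naturally factors into three steps: (i) argue $v_1$ is a near-optimal singular vector, (ii) argue $\hat{\sigma}_1$ is a good approximation to $\|A v_1\|$, and (iii) invoke the Deflation Lemma to convert this into the desired spectral norm bound on $A - A_1$.

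First I would instantiate \theoremref{powering} on $A$ with gap index $c$ and target accuracy $\beta' := \beta/C_0$, where $C_0$ is the absolute constant appearing in the Deflation Lemma. For $k=1$, the privacy parameters fed to PPI are $\epsilon' = \epsilon/(2\sqrt{\log(1/\delta)})$ and $\delta' = \delta$ (so the advanced composition collapses to a single invocation plus the Laplace release). Substituting these into the hypothesis of \theoremref{powering} shows that the required lower bound on $\|A\|_2$ is of the form $\Theta' T c \sqrt{\mu(A)\log(1/\delta)}\log(n)/(\epsilon\gamma\beta)$ for some constant $\Theta'$, which is exactly the hypothesis of the present theorem (after enlarging $\Theta$). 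Consequently, with probability at least $8/10$, $v_1$ satisfies $\|A v_1\|/\|v_1\| \geq (1-\beta/C_0)\,\sigma_1(A)$.

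Second, I would control the Laplace noise. With probability at least $9/10$, $|\hat{\sigma}_1 - \|A v_1\|| = |\textrm{Lap}(1/\epsilon')| \leq O(1/\epsilon')$, which by the hypothesized size of $\|A\|_2$ is far smaller than $(\beta/C_0)\sigma_1(A)$. Hence $\hat{\sigma}_1 \in (1 \pm \beta/C_0)\,\|A v_1\|$. Third, I invoke the Deflation Lemma with $x = v_1$, $t = \hat{\sigma}_1$, and $\alpha = \beta$: its two hypotheses are precisely what the previous two steps established. The conclusion at $k=2$ yields
\[
\lambda_1\!\left(A - \hat{\sigma}_1 v_1 v_1^T\right) \;\leq\; \lambda_2(A) + \beta\,\lambda_1(A) \;=\; \sigma_2(A) + \beta\,\sigma_1(A),
\]
and a union bound over the two failure events gives probability $\geq 7/10$.

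The main obstacle is translating the eigenvalue bound supplied by the Deflation Lemma into a bound on the spectral norm $\|A-A_1\|_2 = \max_i |\lambda_i(A - A_1)|$, since in the symmetric reduction of \sectionref{symmetric} the matrix $B$ has both $+\sigma_i$ and $-\sigma_i$ as eigenvalues. One must rule out the possibility that subtracting $\hat{\sigma}_1 v_1 v_1^T$ inflates the most negative eigenvalue. This can be handled by a second, symmetric application of the Deflation Lemma (equivalently, applying it to $-A$): the chain $\lambda_k \leq \lambda'_{k-1}$ gives the matching lower bound $\lambda_n(A-A_1) \geq \lambda_n(A) - \beta \sigma_1(A)$, which together with $|\lambda_n(A)| \leq \sigma_2(A)$ (in the symmetric-reduction setting, negative eigenvalues below the top are bounded by $\sigma_2$) yields the claim. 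All other steps are routine parameter matching.
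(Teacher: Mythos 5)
Your proposal is correct and follows essentially the same route as the paper: the paper's proof likewise just combines \theoremref{powering}, the Modified Deflation Lemma (Corollary \ref{cor:modified}), and a tail bound on the Laplace noise in $\hat{\sigma}_1$, then union-bounds the failure events. Your additional care about the most negative eigenvalue under the symmetric reduction is a detail the paper silently elides, but it does not change the argument.
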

\begin{proof}
This follows directly from \theoremref{powering}, together with Corollary \ref{cor:modified}, and the observation that:
$$\Pr\Set{|\hat{\sigma_{1}} - \sigma_1| \geq c\cdot \beta\sigma_1}
= \exp\left(-\epsilon' \beta\sigma_1\right) = O\left(n^{-cT\sqrt{\mu(A)}/\gamma}\right) = o(1)$$
Therefore, with probability at least $7/10$, both of the hypotheses of Corollary \ref{cor:modified} are satisfied.
\end{proof}
Our rank $k$ approximation result follows similarly, but we lose a factor of $\sqrt{r}$, where $r$ is the rank of the initial matrix to be approximated, due to the potential degradation in matrix coherence during the deflation process. It is not clear whether this factor of $r$ is necessary, or whether it is an artifact of our analysis.

\begin{theorem}
Let $\gamma, \beta >0.$ Fix a Let $A$ be a rank $r$ matrix such that there exist indices $c_1,\ldots,c_k$ such that for each $i$: $\sigma_{c_i} \le (1-\gamma/2)(\sigma_i - (i-1)\beta\sigma_1)$. Put $T=4\log(\sigma_1(A)).$
Further assume $A$ satisfies
\begin{equation}
\sigma_i(A) \ge \frac{\Theta Tc_k
\sqrt{(r\mu(A) + \sqrt{k}\log n)\log(k/\delta)}\log(n)}{\epsilon\gamma\beta}\mper
\end{equation}
for each $i \in [k]$.
for some sufficiently large constant $\Theta>0.$
Then, with probability $7/10,$ on input of $A,$ $T,$ $(\epsilon,\delta)$ and
$C\ge9\mu(A)\log(n),$ the algorithm rank-k$(A,k,T,\epsilon,\delta)$ outputs a rank $k$ matrix $A_k$ such that:
\[
\|A - A_k\|_2 \leq \sigma_{k+1}(A) +k\beta \sigma_1(A) \mper
\]
\end{theorem}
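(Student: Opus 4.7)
The plan is to argue by induction on the iteration index $i = 0, 1, \dots, k$, maintaining three invariants on the current deflated matrix $A_i$: (i) a coherence bound $\mu(A_i) \le 2r\mu(A) + O(i\log n)$, furnished by \lemmaref{deflationcoherence}; (ii) an upper bound $\sigma_1(A_i) \le \sigma_{i+1}(A) + i\beta\sigma_1(A)$; and (iii) a matching lower bound $\sigma_1(A_i) \ge \sigma_{i+1}(A) - O(i\beta\sigma_1(A))$. Privacy is handled separately and cleanly: the rescaling $\epsilon' = \epsilon/\sqrt{4k\ln(1/\delta)}$, $\delta' = \delta/k$ combined with \theoremref{composition} makes the composition of the $k$ PPI invocations and $k$ Laplace-based releases $(\epsilon,\delta)$-differentially private overall, so the remaining work is entirely on the utility side.

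The base case $A_0 = A$ is immediate. For the inductive step from $A_i$ to $A_{i+1}$, I would first use invariant (i) together with the supplied $C$ (effectively sized to accommodate the $r\mu(A) + O(k\log n)$ term appearing in the theorem's lower bound on $\sigma_i(A)$) so that \lemmaref{termination} guarantees that PPI on $A_i$ does not terminate prematurely. Next, the hypothesized gap $\sigma_{c_{i+1}}(A) \le (1-\gamma/2)(\sigma_{i+1}(A) - i\beta\sigma_1(A))$, combined with invariants (ii) and (iii), transfers via repeated application of the Deflation Lemma to a gap condition for $A_i$ at some index no larger than $c_{i+1}$, and the magnitude hypothesis on $\sigma_{i+1}(A)$ matches the precondition of \theoremref{powering} for $A_i$ after absorbing the $O(i\beta\sigma_1)$ slack. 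Applying \theoremref{powering} therefore yields $v_{i+1}$ with $\|A_i v_{i+1}\| \ge (1-\beta/C_0)\sigma_1(A_i)$, where $C_0$ is the absolute constant from the Deflation Lemma, and a Laplace tail bound gives $|\hat\sigma_{i+1} - \|A_i v_{i+1}\|| = O(\log(k)/\epsilon')$, which is dominated by $\beta\sigma_1(A_i)$ under the assumed scale of the singular values.

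With both preconditions of the Deflation Lemma (Lemma~\ref{cor:modified}) satisfied, I would then conclude that $\sigma_1(A_{i+1}) \le \min(\sigma_1(A_i),\, \sigma_2(A_i) + O(\beta)\sigma_1(A_i))$, and iterating the resulting eigenvalue interlacing propagates invariants (ii) and (iii) to $A_{i+1}$. Invariant (i) at the next step follows from \lemmaref{deflationcoherence}: the update is $\hat\sigma_{i+1} v_{i+1} v_{i+1}^T$, and the very first step of PPI injects a fresh Gaussian perturbation that plays the role of the uniformly random unit direction required by that lemma (rotational invariance identifies a normalized Gaussian with a uniform point on the sphere). After $k$ rounds, invariant (ii) reads $\|A - B_k\|_2 = \|A_k\|_2 = \sigma_1(A_k) \le \sigma_{k+1}(A) + k\beta\sigma_1(A)$, and a union bound across rounds over the failure events of \lemmaref{termination}, \theoremref{powering}, the Laplace tail, and the randomness behind \lemmaref{deflationcoherence} keeps the total success probability above $7/10$.

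The main obstacle is propagating the incoherence invariant in lockstep with the magnitude and gap preconditions of \theoremref{powering}. Because coherence can degrade by a factor of $r$ across the deflation chain, every invocation of \theoremref{powering} must be analyzed with a larger effective $\mu$; this is exactly why the theorem's lower bound on $\sigma_i(A)$ is calibrated to $r\mu(A) + \sqrt{k}\log n$ rather than to $\mu(A)$ itself, and it is the only place where the $\sqrt{r}$ blow-up noted in the introduction is forced upon us. A secondary technical point worth verifying carefully is that the Gaussian direction supplied by PPI at each round is genuinely independent of the history in the sense required by \lemmaref{deflationcoherence}; this should follow because each fresh iterate of PPI is initialized from a new Gaussian drawn independently of the past.
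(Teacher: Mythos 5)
Your proposal is correct and follows essentially the same route as the paper, whose own proof is a one-line citation of \theoremref{powering}, the Modified Deflation Lemma, and Lemma~\ref{lem:deflationcoherence}; you have simply filled in the induction and bookkeeping that the paper leaves implicit. The only quibble is that the fresh random direction needed for Lemma~\ref{lem:deflationcoherence} comes from the Gaussian added in the \emph{last} PPI iteration (so that $v_i \propto A_{i-1}x_{T-1} + g_T$ has the required form $A'_{i-1}t_i + c_i n_i$), not the first, but this does not affect the argument.
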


\begin{proof}
This follows directly from \theoremref{powering}, together with Corollary \ref{cor:modified}, and our bound on the degradation of the coherence of $A$ under deflation, Lemma \ref{lem:deflationcoherence}
\end{proof}

\section{Lower Bound}
\sectionlabel{lowerbound}

In this section, we prove a lower bound showing that our dependence on the
coherence $\mu$ is tight. For every value of $\mu \in [2,n]$, there is a
family of $n\times n$ matrices such that no $\epsilon$-differentially private
algorithm $A$ can compute a vector $A(M) = v$ with the guarantee that $\|Av\|
\geq \sigma_1 - o\left(\frac{\sqrt{\mu}}{\epsilon}\right)$.

\begin{theorem}
For every value of $C \in [2,n]$, there is a family of $n\times n$ matrices $\mathcal{M}_C$ such that:
\begin{enumerate}
\item For every $M \in \mathcal{M}_C$, $\mu(M) = C$
\item For any $\delta = \Omega(1/n)$, no $(\epsilon,\delta)$-differentially private algorithm $A:\mathbb{R}^{n\times n}\rightarrow \mathbb{R}^n$ has the guarantee that for every $M \in \mathcal{M}_C$, with constant probability, $A(M) = v$ such that $\|M v\|/\|v\| \geq \sigma_1(M) - o\left(\frac{\sqrt{C}}{\epsilon}\right)$.
\end{enumerate}
\end{theorem}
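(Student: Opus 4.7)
The plan is to reduce from the Dinur--Nissim reconstruction lower bound: for a bit-valued database $D$ of size $d$, no $(\epsilon,\delta)$-differentially private mechanism with $\delta=\Omega(1/n)$ can answer all sum queries on $D$ to accuracy $o(\sqrt{d}/\epsilon)$ with constant probability. For every target coherence $C\in[2,n]$, I would construct a family $\mathcal{M}_C$ of $n\times n$ matrices, indexed by a database $D\in\{\pm 1\}^{d}$ with $d=\Theta(C)$, so that (i) every $M_D\in\mathcal{M}_C$ has $\mu(M_D)=C$; (ii) neighboring databases yield matrices that are neighbors in the single-entry sensitivity model of the paper; and (iii) any $v$ with $\|M_Dv\|/\|v\|\ge\sigma_1(M_D)-\alpha$ determines $D$ so accurately that $\alpha=o(\sqrt{C}/\epsilon)$ contradicts Dinur--Nissim.

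For the construction, I would take $M_D$ to be (essentially) rank one with top singular vector $v_D\in\mathbb{R}^n$ chosen as a unit vector with $\|v_D\|_\infty=\sqrt{C/n}$, supported on a fixed set $S\subseteq[n]$ of size $n/C$. This makes $\mu(M_D)=C$ immediate from the definition. To embed a database of size $\Theta(C)$ rather than $\Theta(n/C)$ into the sign pattern of $v_D$, I would, when $C\le\sqrt{n}$, partition $S$ into $\Theta(C)$ equal blocks and let a single bit of $D$ control the sign of each block; when $C>\sqrt{n}$ (where $n/C<C$), I would use a block-diagonal variant in which several coherence-$C$ rank-one blocks are stacked so that the total bit budget reaches $\Theta(C)$ without changing the overall coherence. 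The matrix entries themselves are then placed sparsely and rescaled so that a bit flip of $D$ corresponds to a single-entry change of magnitude at most one, which preserves the paper's notion of neighboring matrices.

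The reconstruction step uses the rank-one structure. With $\sigma:=\sigma_1(M_D)$, the identity $\|M_Dv\|=\sigma\cdot|\langle v/\|v\|,v_D\rangle|$ and the accuracy hypothesis give $|\langle v/\|v\|,v_D\rangle|\ge 1-\alpha/\sigma$. Choosing $\sigma$ polynomial in $n$, coordinate-wise thresholding of $v$ on the known support $S$ recovers the sign pattern of $v_D$ with Hamming error that is bounded linearly in $\alpha/\sigma$, and hence yields an estimator for all sum queries on $D$ with error $O(\alpha)$. If $\alpha=o(\sqrt{C}/\epsilon)$, this reconstruction beats Dinur--Nissim for a size-$\Theta(C)$ database, the desired contradiction. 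The condition $\delta=\Omega(1/n)$ enters here only to invoke the standard $(\epsilon,\delta)$-version of the DN attack.

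The main obstacle I anticipate is step (iii): making the effective database size genuinely $\Theta(C)$ inside a coherence-$C$ matrix. The natural rank-one support has only $n/C$ coordinates, which caps the number of bits encoded in the sign pattern at $n/C$; this is at most $C$ precisely when $C\ge\sqrt{n}$, and so the regimes $C\le\sqrt{n}$ and $C>\sqrt{n}$ must be handled by different variants of the encoding (the latter via stacking multiple coherence-$C$ blocks or additional rank-one layers whose singular vectors remain bounded in $\ell_\infty$ by $\sqrt{C/n}$). Getting the scaling $\sigma$, the sensitivity normalization, and the accounting in the reconstruction step all to line up so that the final bound is exactly $\Omega(\sqrt{C}/\epsilon)$ rather than a polynomially weaker quantity is where I expect the bulk of the technical work to go.
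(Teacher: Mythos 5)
Your high-level architecture (rank-one matrix whose coherence is set by a spiky singular vector, plus a reconstruction argument against a hidden bit-vector) is the right family of ideas, but the execution has a genuine gap at exactly the point you flag as the ``main obstacle,'' and the fix is not the one you propose. The problem is your claim that the construction can be ``rescaled so that a bit flip of $D$ corresponds to a single-entry change of magnitude at most one.'' For a rank-one matrix $\sigma u v_D^T$ with $\mu = C$, the left singular vector $u$ must satisfy $n\|u\|_\infty^2 \le C$, so $u$ is supported on at least $n/C$ rows; consequently any change to a coordinate of $v_D$ (let alone a whole sign block) changes at least $n/C$ matrix entries. No rescaling makes this a single-entry change without collapsing $u$ onto one coordinate and driving the coherence to $n$. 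This is not a technicality: the $n/C$-fold blow-up in sensitivity is precisely where the $\sqrt{C}$ dependence comes from in the paper, and omitting it breaks the accounting. Relatedly, your insistence on an effective database of size $\Theta(C)$ (so that Dinur--Nissim's $\sqrt{d}$ becomes $\sqrt{C}$) is what forces the awkward case split at $C = \sqrt{n}$ and the unanalyzed ``stacking'' construction; that whole difficulty is self-inflicted and unnecessary. You also invoke a $o(\sqrt{d}/\epsilon)$ form of Dinur--Nissim for arbitrary $\epsilon$; the standard statement is for $\epsilon = O(1)$, and the $1/\epsilon$ scaling has to be earned separately.

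The paper's proof avoids all of this by putting the $n$-bit database in the \emph{incoherent} singular vector and letting the coherent one do the sensitivity amplification. Concretely, $M_D$ consists of $s(C) = n/C$ identical rows equal to $D \in \{0,1\}^n$ (with $\|D\|_0 = n/2$) and zero rows elsewhere; then $\mu(M_D) = n/s(C) = C$, $\sigma_1 = n/\sqrt{2C}$, and $\|M_D v\| = \sigma_1 \langle \bar D, v\rangle$ for unit $v$. A direct (not DN-counting-query) argument shows that any $(1,\delta)$-DP algorithm must, with probability $1/2$, output $v$ with $\langle \bar D, v\rangle \le 1 - \Omega(1)$: a good inner product lets you recover all but an $O(\sqrt{\alpha})$ fraction of the support of $D$ by thresholding, and a single swap of an in-support and out-of-support index then contradicts privacy. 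Since one bit of $D$ touches $n/C$ entries of $M_D$, an $(\epsilon,\delta)$-entry-private algorithm is $((n/C)\epsilon, (n/C)\delta)$-private with respect to $D$, so taking $\epsilon = C/n$ forces an error of $\Omega(\sigma_1) = \Omega(n/\sqrt{C}) = \Omega(\sqrt{C}/\epsilon)$. Note that the reconstruction loss here is a \emph{constant fraction} of $\sigma_1$; the $\sqrt{C}$ enters only through $\sigma_1$ and the group-privacy choice of $\epsilon$, not through the size of the database. If you want to salvage your write-up, the shortest path is to adopt exactly this division of labor: incoherent vector carries the data, coherent vector carries the sensitivity.
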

\begin{remark}
Note that this theorem shows that our upper bound for computing rank $1$ approximations to incoherent matrices is tight along the entire curve of values $\mu$, up to logarithmic factors.
\end{remark}
\begin{proof}
For each $C \in [2,n]$, we define our family of matrices $\mathcal{M}_C$ as follows. Let $\mathcal{D} \subset \mathbb{R}^n$ be the set of boolean valued vectors with exactly $n/2$ non-zero entries: $\mathcal{D} = \{D \in \{0,1\}^n : \|D\|_0 = n/2\}$. We will intuitively think of each $D \in \mathcal{D}$ as a private bit-valued database, whose entries we are protecting with a guarantee of differential privacy. For each $D \in \mathcal{D}$, let $\bar{D} = D/\|D\|_2$ be the rescaling of $D$ to a unit vector. Note that $\bar{D} \in \{0,\sqrt{2}/\sqrt{n}\}^n$. Define $s(C) = n/C$, and $u \in \mathbb{R}^d$ to be the vector such that $u_i = 1/\sqrt{s(C)}$ for $i \in \{1,\ldots,s(C)\}$ and $u_i = 0$ for $i > s(C)$. Finally, we define our class of matrices $\mathcal{M}_C$ to be:
\[
\mathcal{M}_C = \Set{M_D : M_D = \left(\sqrt{n s(C)/2}\right)u \cdot \bar{D}^T : D \in \mathcal{D}}
\]
Note that each $M_D \in \mathcal{M}$ is a matrix in which the first $s(C)$
rows are identical copies of the database $D \in \{0,1\}^n$, and the remaining
$n - s(C)$ rows are the zero vector. Moreover, for each $M \in \mathcal{M}_C$,
we have $$\sigma_1(M) = \left(\frac{\sqrt{n s(C)}}{\sqrt{2}}\right) =
\left(\frac{n}{\sqrt{2C}}\right),\quad\textrm{ and }\quad\mu(M) =
n\cdot\max\left(\frac{1}{s(C)},\frac{2}{n}\right) = \frac{n}{s(C)} = C$$ Now
consider any unit vector $v$. For each $M_D \in \mathcal{M}_C$, we have:
\begin{equation}
\label{eq:lowerboundapprox}
\|M_D v\|_2 = \sqrt{\frac{\sigma_1(M_D)^2\cdot \langle \bar{D}, v \rangle^2}{s(C)}\cdot s(C)} = \sigma_1(M_D)\cdot \langle \bar{D}, v\rangle
\end{equation}
Therefore, any unit vector $v$ such that $\|M_Dv\|_2 \geq \frac{999}{1000}\sigma_1(M_D)$ must be such that $\langle \bar{D}, v\rangle \geq \frac{999}{1000}$
However, if we view $D$ as being a private database, it is clear that it is not possible to privately approximate it well:

\begin{lemma}
\label{lem:reconstruction}
 For $\delta \leq 1/5$, Let $B:\mathbb{R}^n\rightarrow \mathbb{R}$ be a $(1,\delta)$-differentially private algorithm with respect to the entries of its input vector. Let $D \in \mathcal{D}$ be chosen uniformly at random. Then with probability $\geq 1/2$ $B(D) = v$ such that $\langle \bar{D},v/\|v\|_2\rangle \leq 1-\frac{2}{1000}$.
\end{lemma}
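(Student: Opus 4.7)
My plan is to reduce the lemma to a Dinur--Nissim style reconstruction bound. First I will translate the inner-product guarantee into a Hamming-distance reconstruction statement for the underlying $0/1$ vector $D$, and then I will rule out such Hamming recovery via differential privacy.

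For the translation, write $\hat{v}=v/\|v\|_2$. The putative success condition $\langle \bar{D},\hat{v}\rangle>1-2/1000$ is equivalent to $\|\hat{v}-\bar{D}\|_2^2<4/1000$ since both vectors are unit. Multiplying by $n/2$ and using $\sqrt{n/2}\,\bar{D}=D$, the rescaled vector $\tilde{v}:=\sqrt{n/2}\,\hat{v}$ satisfies $\|\tilde{v}-D\|_2^2<n/500$. I set $\hat{D}_i=1$ if $\tilde{v}_i\ge 1/2$ and $\hat{D}_i=0$ otherwise; since every bit of disagreement with $D$ contributes at least $1/4$ to the squared $\ell_2$ error, we conclude $d_H(\hat{D},D)\le n/125$. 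Rounding is deterministic post-processing, so the induced algorithm $B'\colon D\mapsto \hat{D}$ inherits the $(1,\delta)$-differential privacy of $B$.

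The second step is to show that no $(1,\delta)$-DP algorithm $B'$ with $\delta\le 1/5$ can output $\hat{D}$ satisfying $d_H(\hat{D},D)\le n/125$ for uniform $D\in\mathcal{D}$ with probability $\ge 1/2$. Two ingredients drive the argument. First, a volumetric observation: $|\mathcal{D}|=\binom{n}{n/2}\ge 2^{n-O(\log n)}$, whereas a Hamming ball of radius $n/125$ contains at most $2^{0.06 n}$ points by standard binomial bounds, so any fixed $\hat{D}$ is a valid reconstruction for only a $2^{-\Omega(n)}$ fraction of $\mathcal{D}$. Second, $2$-group privacy (valid because two databases in $\mathcal{D}$ at Hamming distance $2$ differ in exactly two entries and are therefore $2$-neighbors in the DP sense) yields per-bit correlation bounds of the form $p_1\le e^2 p_0+O(\delta)$ and $1-p_0\le e^2(1-p_1)+O(\delta)$, where $p_b=\Pr[B'_i(D)=1\mid D_i=b]$; these force the per-bit error probability to be at least an absolute positive constant. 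Combining this per-bit error lower bound with the volume deficit upgrades the expectation bound into the claimed tail bound.

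The main obstacle will be this last upgrade from mean to tail. Per-bit DP alone only bounds $\E[d_H]$ from below by a constant fraction of $n$, which exceeds $n/125$ but is not enough via Markov or reverse Markov to rule out probability $\ge 1/2$ of $d_H$ falling below $n/125$. On the other hand, naive group privacy across Hamming distance $\Theta(n)$ is useless, incurring an $e^{\Theta(n)}$ blow-up that prevents a direct comparison of $B'(D)$ and $B'(D')$ for generic pairs. The proper resolution either invokes fingerprinting-code machinery (in the spirit of Bun--Ullman--Vadhan) or executes a careful packing argument that simultaneously exploits the per-swap DP constraint and the exponentially small ball volume; the constants $2/1000$ and $1/125$ are tuned so that the target Hamming radius lies comfortably below the unavoidable expected error of any $(1,\delta)$-DP reconstruction.
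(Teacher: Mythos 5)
Your first step---converting the inner-product guarantee into a Hamming-distance reconstruction guarantee by rescaling and rounding at threshold $1/2$---is correct and essentially identical to the paper's (the paper runs the same computation through the $\ell_1$ norm rather than $\ell_2$; your version is if anything cleaner). The gap is in the second step: you never actually prove the reconstruction impossibility. You set up the right per-coordinate privacy relation ($p_1 \le e^{O(1)} p_0 + O(\delta)$ via the swap of a $1$-coordinate with a $0$-coordinate) and the right volumetric facts, but then you explicitly concede that these only bound $\E\bigl[d_H(\hat D, D)\bigr]$ from below and defer the actual argument to ``fingerprinting-code machinery'' or an unspecified ``careful packing argument,'' carrying out neither. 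Since the impossibility statement is the entire content of the lemma, the proposal as written asserts the conclusion rather than proving it.

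The idea you are missing is that the accuracy hypothesis is itself a per-coordinate statement, so no mean-to-tail upgrade is ever needed. Conditioned on the success event $S$ (which you assume has probability at least $1/2$ for contradiction), at least a $1-6\sqrt{\alpha}$ fraction of the coordinates with $D_i=1$ satisfy $\hat D_i = 1$ and at most a $6\sqrt{\alpha}$ fraction of the coordinates with $D_j=0$ satisfy $\hat D_j = 1$. Because $D$ is uniform over balanced vectors, a uniformly random $1$-index $i$ of $D$ and a uniformly random $0$-index $j$ of $D$ have identical marginals, and the swapped database $D' = D - e_i + e_j$ is again uniform on $\mathcal{D}$; hence $\Pr\Set{B(D')_j = 1}$ equals the true-positive probability $\Pr\Set{B(D)_i = 1}$, while differential privacy forces $\Pr\Set{B(D')_j = 1} \le e^{O(1)}\Pr\Set{B(D)_j = 1} + O(\delta)$. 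Comparing these two bounds yields a direct numerical contradiction for $\alpha < 1/1000$ and $\delta \le 1/5$---this is exactly the paper's argument, and it uses no Hamming-ball counting and no fingerprinting codes. (You are right that tracking the success probability of $1/2$ rather than $1$ through this inequality requires care with the constants, and the paper is itself terse there; but the resolution is this conditioning-plus-exchangeability argument, not a tail bound on $d_H$.) Your volumetric observation about $|\mathcal{D}|$ versus ball volumes plays no role and, as you present it, never interacts with the privacy constraint.
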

\begin{proof}
Let $D \in \mathcal{D}$ be a randomly chosen database $D \in \mathbb{R}^n$ with $\|D\|_0 = n/2$ entries. Let $\bar{D} = D/\|D\|$ be its normalization to a unit vector. Suppose that $v \in \mathbb{R}^n$ is a unit vector such that $\langle \bar{D}, v\rangle \geq 1-\alpha$. We may therefore write:
$$v = (1-\alpha)\bar{D} + \sqrt{1-(1-\alpha)^2}\bar{D}^\bot$$
where $\bar{D}^\bot$ is some unit vector orthogonal to $\bar{D}$. We therefore have:
\begin{eqnarray*}
\|\sqrt{\frac{n}{2}} v - D\|_1 &=& \|\alpha D + \sqrt{\frac{n}{2}}\sqrt{1-(1-\alpha)^2}\bar{D}^\bot\|_1 \\
&\leq& \alpha\|D\|_1 + \sqrt{\frac{n}{2}}\sqrt{1-(1-\alpha)^2}\|\bar{D}^\bot\|_1 \\
&\leq& \alpha\|D\|_1 + \sqrt{\frac{n}{2}}\sqrt{2\alpha}\|\bar{D}^\bot\|_1 \\
&\leq& \alpha\left(\frac{n}{2}\right) + \sqrt{2\alpha} \left(\frac{n}{\sqrt{2}}\right) \\
&\leq& \frac{n}{2} (3\sqrt{\alpha})
\end{eqnarray*}

Let $D^*$ denote the vector that results from setting $D^*_i = 1$ in each coordinate $i$ in which $\sqrt{\frac{n}{2}} v_i \geq 1/2$, and setting $D^*_i = 0$ in all other coordinates. Since $\|\sqrt{\frac{n}{2}} v - D\|_1 \leq \frac{n}{2}(3\sqrt{\alpha})$, it follows that $\|D^* - D\|_0 \leq \frac{n}{2} (6\sqrt{\alpha})$. Now consider the probability that a randomly chosen index $i \in \{i : D_i > 0\}$ is such that $D^*_i > 0$. This occurs with probability at least $1- (6\sqrt{\alpha})$. On the other hand, consider the probability that a randomly chosen index $j^* \in \{i : D_i = 0\}$ is such that $D^*_{j} \geq 0$. This occurs with probability at most $(6\sqrt{\alpha})$ Note also that because (over the random choice of $D$), each index $i \in D$ is set to $1$ uniformly at random, $i$ and $j$ are drawn from the same marginal distribution. Finally, consider the neighboring database $D' = D - \{i\} + \{j\}$, and note that $D'$ is also uniformly distributed among the set of databases $\mathcal{D}$. We therefore have that by differential privacy: $(1- (6\sqrt{\alpha})) \leq e\cdot(6\sqrt{\alpha}) + \delta$. If $\alpha < \frac{1}{1000}$ and $\delta < 1/5$, this is a contradiction.

\end{proof}
It remains to observe that changing a single entry of $D$ results in changing $s(C) = n/C$ entries of $M_D$. Therefore, by the composition properties of differential privacy, any algorithm $A:\mathbb{R}^{n\times n}$ which is $(\epsilon,\delta)$-differentially private with respect to entry changes in its input is $((n/C)\epsilon, (n/C)\delta)$-differentially private with respect to entry changes in $D$ when given $M_D$ as input. Therefore, Lemma \ref{lem:reconstruction} taken together with equation \ref{eq:lowerboundapprox} implies that if $\epsilon \leq C/n$ and $\delta \leq (C/(5n))$, then no $(\epsilon,\delta)$-differentially private algorithm, when given as input a uniformly randomly chosen matrix $M_D \in \mathcal{M}_C$ can with probability greater than $1/2$ return a vector $A(M_D) = v$ such that
$$\|M_D v\|_2 \geq \sigma_1(1-1/1000) = \sigma_1 - \left(\frac{n}{1000\sqrt{2C}}\right)$$

Finally, for point of contradiction, suppose that there was an $\epsilon$-differentially private algorithm that for every matrix $M$, with probability greater than $1/2$ returned a vector $A(M) = v$ such that $\|Mv\| \geq \sigma_1(M) - o\left(\frac{\sqrt{C}}{\epsilon}\right)$. Letting $\epsilon = C/n$, and letting $M = M_D$ be chosen from $\mathcal{M}_C$, we would have that:
$$\|Mv\|_2 \geq \sigma_1(M)- o\left(\frac{n}{\sqrt{C}}\right)$$
which is a direct contradiction. This completes the proof.

\end{proof}

\section{Conclusions and Open Problems}
We have shown nearly optimal \emph{data dependent} bounds for privately computing the top singular vector of a matrix, in terms of its $\mu$-coherence. We conclude with several specific open problems, as well as a general research agenda.

Specifically, it would be nice to resolve the following technical questions:
\begin{enumerate}
\item We have shown that our dependence on $\mu$-coherence is tight, but it remains possible that there might be a weaker notion of coherence that this or other algorithms could take advantage of. One candidate is $\mu_k$-coherence, which only bounds the magnitude of the entries in the top $k$ singular vectors, and leaves the others unconstrained. We do not know how to show that $\mu_k$-coherence is sufficient to bound the quality of the approximation to the top singular vector. However, as evidence of this conjecture, in Appendix \ref{section:localsensitivity}, we show that the \emph{local sensitivity} of the powering operation can be bounded in terms of $\mu_k$ coherence.
\item When we ``deflate'' $A$ so as to recurse and compute an approximation to the higher singular vectors, we lose a 1-time factor of $\sqrt{r}$ in the coherence, where $r$ is the rank of the matrix. As a result, our bounds for rank $k$ approximation for $k \geq 2$ have a dependence on the matrix rank. Can this factor of $\sqrt{r}$ be removed, or is it inherent?
\end{enumerate}

More generally, this paper is an instance of a broader research agenda: overcoming worst-case lower bounds in differential privacy by giving data-dependent accuracy bounds. In many settings (especially if the data set is small), the worst case bounds necessary to achieve differential privacy can be prohibitive. However, natural data sets tend to have structural properties (like low coherence) that can potentially be taken advantage of in a variety of settings. It would be interesting to understand the relevant features of the data that allow more accurate private analyses in domains other than spectral analysis.

\appendix

\section{Deviation bounds}

\begin{lemma}[Gaussian Anti-Concentration]
\lemmalabel{anti}
Let $\xi_i\sim N(0,1)$ and let $a_i\ge0$ for $1\le i\le n.$
Then, for every $\gamma>0,$
\[
\Pr\Set{\sum_{i=1}^na_i\xi_i^2 \le \gamma\sum_{i=1}^na_i} \le \sqrt{e\gamma}\mper
\]
\end{lemma}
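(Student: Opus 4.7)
The plan is to use the standard Laplace transform (exponential Markov) approach for lower-tail bounds on sums of non-negative random variables. Let $S = \sum_{i=1}^n a_i \xi_i^2$ and $\mu = \sum_{i=1}^n a_i$, so that $\mathbb{E} S = \mu$. For any $t > 0$,
\[
\Pr\{S \le \gamma \mu\} = \Pr\{e^{-tS} \ge e^{-t\gamma\mu}\} \le e^{t\gamma\mu}\, \mathbb{E} e^{-tS}.
\]
By independence and the well-known moment generating function of a $\chi^2_1$ variable, $\mathbb{E} e^{-t a_i \xi_i^2} = (1+2ta_i)^{-1/2}$, so
\[
\Pr\{S \le \gamma\mu\} \le e^{t\gamma\mu} \prod_{i=1}^n (1+2ta_i)^{-1/2}.
\]

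The next step is the only nontrivial inequality: bound $\prod_{i=1}^n (1+2ta_i)$ from below. Since $a_i \ge 0$, the elementary inequality obtained by expanding the product (every cross term is non-negative) gives
\[
\prod_{i=1}^n (1 + 2t a_i) \;\ge\; 1 + 2t\sum_{i=1}^n a_i \;=\; 1 + 2t\mu.
\]
Substituting,
\[
\Pr\{S \le \gamma\mu\} \le \frac{e^{t\gamma\mu}}{\sqrt{1 + 2t\mu}}.
\]

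Finally, I optimize over $t > 0$. Setting $u = 2t\mu$, the right side becomes $e^{\gamma u/2}/\sqrt{1+u}$, whose logarithmic derivative $\gamma/2 - 1/(2(1+u))$ vanishes at $u^* = 1/\gamma - 1$. For $\gamma \ge 1$ the claimed bound $\sqrt{e\gamma}$ already exceeds $1$, so there is nothing to prove; for $\gamma < 1$, the optimum $u^* > 0$ is admissible, and plugging it in gives
\[
\frac{e^{\gamma u^*/2}}{\sqrt{1+u^*}} = e^{(1-\gamma)/2}\sqrt{\gamma} \le \sqrt{e\gamma},
\]
which is exactly the bound in the lemma. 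The main (mild) obstacle is recognizing that the product lower bound $\prod(1+2ta_i) \ge 1 + 2t\mu$ is used in the right direction here --- this keeps the final expression clean and avoids the temptation to use $\ln(1+x) \ge x - x^2/2$, which would introduce dependence on $\sum a_i^2$ that the statement does not need.
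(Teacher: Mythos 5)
Your proof is correct and follows essentially the same route as the paper: an exponential Markov (Laplace transform) bound, the $\chi^2_1$ moment generating function $(1+2ta_i)^{-1/2}$, the product lower bound $\prod_i(1+2ta_i)\ge 1+2t\sum_i a_i$, and the optimal choice of parameter (your $u^*=1/\gamma-1$ is exactly the paper's $\lambda=(\gamma^{-1}-1)/2$ after normalizing $\sum_i a_i=1$). The only additions are cosmetic --- you explicitly dispose of the $\gamma\ge 1$ case, which the paper handles by assuming $\gamma<1$ without loss of generality.
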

We thank George Lowther for pointing out the following proof.
\begin{proof}
We may assume without loss of generality that $\sum_{i=1}^na_i=1$ and $\gamma<1.$
For every $\lambda>0,$ we have
\begin{align*}
\Pr\Set{\sum_{i=1}^na_i\xi_i^2\le \gamma\sum_{i=1}^na_i}
\le \E e^{\lambda\left(\gamma - \sum_{i=1}^na_i\xi_i^2\right)}
= e^{\lambda\gamma}\prod_{i=1}^n\E e^{a_i\xi_i^2}
&= e^{\lambda\gamma}\prod_{i=1}^n(1+2\lambda a_i)^{-1/2}\\
&\le e^{\lambda\gamma}(1+2\lambda)^{-1/2}\mper
\end{align*}
The claim follows by setting $\lambda = (\gamma^{-1}-1)/2.$
\end{proof}

The following direct consequence was needed earlier.

\begin{lemma}\lemmalabel{gaussian-norm}
Let $U$ be a $k$-dimensional subspace of $\R^n$ and let $g\sim N(0,1)^n.$
Then,
\begin{enumerate}
\item for every $\gamma>0,$
$\Pr\Set{\|P_Ug\|\le\sqrt{\gamma k}}\le \sqrt{e\gamma}\mper$
\item for every $t\ge1,$ we have
$\Pr\Set{\|P_Ug\|>\sqrt{t k}}\le \exp(-t)\mper$
\end{enumerate}
\end{lemma}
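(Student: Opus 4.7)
The plan is to reduce both statements to chi-squared tail bounds via rotational invariance of the standard Gaussian. Pick any orthonormal basis $u_1,\dots,u_k$ of $U$ and write
\[
\|P_U g\|^2 \;=\; \sum_{i=1}^k \langle u_i, g\rangle^2\mper
\]
Because the $u_i$ are orthonormal and $g\sim N(0,1)^n$, the random variables $\xi_i := \langle u_i, g\rangle$ are i.i.d.\ standard Gaussians. Hence $\|P_U g\|^2$ has the same distribution as $\sum_{i=1}^k \xi_i^2$, i.e.\ it is chi-squared with $k$ degrees of freedom. This reduction is the one observation that drives both parts.

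For Part 1 (anti-concentration), the claim now follows by an immediate invocation of \lemmaref{anti} with $a_i = 1$ for every $i \le k$, so that $\sum_i a_i = k$. That lemma gives
\[
\Pr\Set{\sum_{i=1}^k \xi_i^2 \le \gamma k} \le \sqrt{e\gamma}\mcom
\]
which is exactly the stated bound after taking square roots.

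For Part 2 (upper tail), I would use the standard Chernoff method with the chi-squared moment generating function $\E e^{\lambda \xi^2} = (1-2\lambda)^{-1/2}$, valid for $\lambda \in (0,1/2)$. By Markov's inequality applied to $\exp\bigl(\lambda\sum_i \xi_i^2\bigr)$ together with independence,
\[
\Pr\Set{\sum_{i=1}^k \xi_i^2 > tk} \;\le\; e^{-\lambda tk}(1-2\lambda)^{-k/2}\mper
\]
Minimizing the right hand side at $\lambda = (1-1/t)/2$ yields the exponent $\tfrac{k}{2}(t-1-\ln t)$. Since $t - 1 - \ln t$ vanishes at $t=1$ and grows superlinearly in $t$, for $t \ge 1$ this exponent dominates $t$, giving the claimed bound $\exp(-t)$.

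There is no serious obstacle: once rotational invariance is invoked, Part 1 is a direct corollary of \lemmaref{anti} (already in the appendix), and Part 2 is the textbook chi-squared Chernoff computation. The one mildly fussy step is the final numerical inequality in Part 2 for $t$ close to $1$, but this is purely routine and can be carried out with the crude bookkeeping sketched above.
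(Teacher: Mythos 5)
Part~1 of your argument is correct and is exactly the paper's route: reduce $\|P_Ug\|^2$ to a $\chi^2_k$ variable by choosing an orthonormal basis of $U$, then invoke \lemmaref{anti} with $a_i=1$. No issues there.

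Part~2 has a genuine gap in the last step. The Chernoff computation is set up correctly and yields $\Pr\Set{\chi^2_k > tk}\le \exp\bigl(-\tfrac{k}{2}(t-1-\ln t)\bigr)$, but your claim that this exponent ``dominates $t$'' for $t\ge 1$ is false. At $t=1$ the exponent is exactly $0$, so the Chernoff bound gives $1$, not $e^{-1}$; and for $k=1$ the exponent is at most $t/2 < t$ for every $t$. This is not a bookkeeping issue that more care would fix: the stated inequality itself fails in these regimes. For instance $\Pr\Set{\chi^2_4 > 4} = 3e^{-2}\approx 0.41 > e^{-1}$, and $\Pr\Set{\chi^2_1 > 10}\approx 1.5\times 10^{-3} \gg e^{-10}$. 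So no proof can close the step as written; the honest conclusion of your computation is the bound $\exp\bigl(-\tfrac{k}{2}(t-1-\ln t)\bigr)$, which is what one should record. (The paper's own ``proof'' of Part~2 is the unelaborated phrase ``direct computation,'' so the defect is really in the lemma statement; in the only place the bound is used, namely \lemmaref{utility} with $t=\Theta(\log n)$, the corrected exponent still suffices, except that the $k=1$ case delivers roughly $n^{-2}$ rather than $n^{-4}$, which is exactly at the edge of what that lemma needs.) To submit a correct proof you should either restrict to $t\ge c$ for a suitable constant $c>1$ and $k\ge 2$, or replace the right-hand side by $\exp\bigl(-\tfrac{k}{2}(t-1-\ln t)\bigr)$ (equivalently, use the Laurent--Massart form $\Pr\Set{\chi^2_k \ge k+2\sqrt{kx}+2x}\le e^{-x}$) and propagate that through the application.
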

\begin{proof}
The first claim follows directly by \lemmaref{anti}. The second can be
verified by direct computation.
\end{proof}

\begin{theorem}[Chernoff bound]
\theoremlabel{chernoff}
Let the random variables $X_1,\dots,X_m$ be independent random variables.
Let $X = \sum_{i=1}^m X_i$ and let $\sigma^2=\Var X.$ Then, for any $t>0,$
\[
\Pr\Set{ \left|X-\E X\right| > t} \le
\exp\left(-\frac{t^2}{4\sigma^2}\right)\mper
\]
\end{theorem}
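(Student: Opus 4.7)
The plan is to use the standard exponential moment (Chernoff/Bernstein) method. For any $\lambda>0$, Markov's inequality applied to $e^{\lambda(X-\E X)}$ gives
\[
\Pr\{X-\E X > t\} \le e^{-\lambda t}\, \E e^{\lambda(X-\E X)}.
\]
Using independence of the $X_i$, the moment generating function factors:
\[
\E e^{\lambda(X-\E X)} = \prod_{i=1}^m \E e^{\lambda(X_i-\E X_i)}.
\]
I would then bound each centered factor by a subgaussian estimate of the form $\E e^{\lambda(X_i-\E X_i)} \le \exp(\lambda^2 \sigma_i^2)$, where $\sigma_i^2 = \Var X_i$. This step is the one real bit of content: the statement is only meaningful under an implicit boundedness/subgaussianity hypothesis on the summands. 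In the paper's actual applications (e.g.\ \lemmaref{sign-deviation}, where each $X_i$ is of the form $s_i\alpha_i u_i$ with $s_i\in\signs$, and in \lemmaref{termination}), each $X_i$ is bounded by a simple deterministic constant, so Hoeffding's lemma gives precisely the required MGF bound with variance proxy $\sigma_i^2$.

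Plugging this into the product and using that variance is additive for independent sums, $\sum_i \sigma_i^2 = \sigma^2$, yields
\[
\Pr\{X-\E X>t\} \le \exp\!\left(\lambda^2 \sigma^2 - \lambda t\right).
\]
The right-hand side is minimized at $\lambda^\star = t/(2\sigma^2)$, producing the bound $\exp(-t^2/(4\sigma^2))$. Applying the same argument to $-X$ (which has the same variance and whose summands are independent with the same boundedness property) gives the matching lower tail, and a union bound over the two one-sided events yields the two-sided inequality as stated.

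The main obstacle, and the reason one has to be a bit careful, is the subgaussian MGF bound for the individual summands: the theorem as literally written is not true for arbitrary independent $X_i$ with variance $\sigma_i^2$, since heavy-tailed summands can fail any Gaussian-type tail bound. The resolution is to read the statement in the context in which it is invoked: the summands are always bounded (a single-coordinate contribution of the form $s_i \alpha_i (u_i)_j$ with $|s_i|=1$), so the classical Hoeffding lemma $\E e^{\lambda Y} \le \exp(\lambda^2 b^2)$ for zero-mean $Y$ with $|Y|\le b$ supplies the needed subgaussian control with proxy variance at most $\sigma_i^2$ (up to the constant $4$ in the denominator, which is exactly what the stated bound absorbs). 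Once that bound is in place, the remainder of the argument is the routine Chernoff optimization described above.
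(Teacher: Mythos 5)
The paper states this as folklore and supplies no proof, so there is nothing to compare your argument against; the only question is whether your argument is sound and whether it supports the places the theorem is actually invoked. Your route --- Markov's inequality applied to $e^{\lambda(X-\E X)}$, factorization of the moment generating function by independence, a per-summand subgaussian bound, and optimization over $\lambda$ --- is the standard one, and you are right to flag that the statement is false for arbitrary independent summands and must be read with an implicit boundedness hypothesis. In the one place the paper uses it (\lemmaref{sign-deviation}), each summand is a symmetric two-point variable $Z_i = s_i a_i$ with $s_i\in\signs$ and $a_i$ deterministic, for which $\E e^{\lambda Z_i}=\cosh(\lambda a_i)\le e^{\lambda^2 a_i^2/2}=e^{\lambda^2\Var(Z_i)/2}$, so your MGF step holds exactly with the variance as proxy.

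Two small points deserve care. First, for a general bounded zero-mean $Y$ with $|Y|\le b$, Hoeffding's lemma gives $\E e^{\lambda Y}\le e^{\lambda^2 b^2/2}$ with $b^2$, not $\Var(Y)$, in the exponent; a variance-based MGF bound valid for \emph{all} $\lambda$ is special to symmetric two-point distributions and does not follow from boundedness alone, so strictly the theorem should be stated with $\sum_i b_i^2$ in place of $\sigma^2$ (the two coincide in the application, so nothing downstream is affected). Your remark that the constant $4$ ``absorbs'' this conflates it with the second issue: the union bound over the two tails produces a factor of $2$, so what the argument actually yields is $2\exp(-t^2/(2\sigma^2))$, which is at most $\exp(-t^2/(4\sigma^2))$ only when $t\ge 2\sigma\sqrt{\ln 2}$. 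For small $t$ the inequality as stated is literally false even for Rademacher sums (take $m=1$, $X_1=\pm1$, $t=1/2$). This is a defect of the theorem statement rather than of your proof, and it is harmless where the paper applies it, since there $t=4B\sqrt{\log n}$ while $\sigma\le B$, placing the application squarely in the large-deviation regime.
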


\section{Proofs from Section \ref{sec:deflation}}
\label{sec:deflationappendix}

\begin{lemma}[Deflation Lemma \cite{KapralovT13}]
Let $A$ be a symmetric matrix with eigenvalues $\lambda_1 \geq \ldots \geq \lambda_n$. There exists a constant $C > 0$ so that the following holds. Let $x$ be any unit vector such that $x^TAx \geq (1-\alpha/C)\lambda_1$, where $\alpha \in (0,1)$. Let $A' = A - t v\cdot v^T$, where $t \in (1\pm \alpha/C)x^TAx$. Denote the eigenvalues of $A'$ by $\lambda_1' \geq \ldots \geq \lambda_n'$.
\begin{enumerate}
\item $\lambda_k \leq \lambda'_{k-1} \leq \min(\lambda_{k-1}, \lambda_k+\alpha\lambda_1)$ for each $k \in \{1,\ldots,n\}$.
\end{enumerate}
\end{lemma}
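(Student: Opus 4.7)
The two-sided bound $\lambda_k \leq \lambda'_{k-1} \leq \lambda_{k-1}$ is an instance of Cauchy's interlacing theorem for rank-one perturbations: since $A' = A - t\,x x^T$ with $t > 0$, a standard application of the Courant--Fischer variational characterization yields $\lambda_k(A) \leq \lambda_{k-1}(A') \leq \lambda_{k-1}(A)$ for every $k$ (with the usual boundary conventions $\lambda'_0 = +\infty$, $\lambda_{n+1} = -\infty$). This disposes of one side of the $\min$ and of the lower bound, so the remaining work is to prove the single inequality $\lambda'_{k-1} \leq \lambda_k + \alpha\lambda_1$.

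The plan is to split into two regimes. In the easy regime $\lambda_k \geq (1-\alpha)\lambda_1$ the target bound is immediate from interlacing, since $\lambda'_{k-1} \leq \lambda_{k-1} \leq \lambda_1 \leq \lambda_k + \alpha\lambda_1$. So assume $\lambda_k < (1-\alpha)\lambda_1$, i.e.\ there is a real eigenvalue gap of size at least $\alpha\lambda_1$ between $\lambda_1$ and $\lambda_k$. The first step in this regime is to use the Rayleigh-quotient hypothesis to show that $x$ concentrates in the top part of the spectrum. Concretely, define the subspaces
\[
V_{\geq} = \mathrm{span}\{u_i : \lambda_i \geq \lambda_k + \alpha\lambda_1/2\},\qquad V_{<} = V_{\geq}^{\perp},
\]
and note that $\dim V_\geq \leq k-1$. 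Expanding $x$ in the eigenbasis and using $x^T A x \leq \lambda_1 \|P_{V_\geq} x\|^2 + (\lambda_k + \alpha\lambda_1/2)\|P_{V_<} x\|^2$ together with the assumption $x^T A x \geq (1-\alpha/C)\lambda_1$ yields, after dividing by the gap $\lambda_1 - \lambda_k - \alpha\lambda_1/2 \geq \alpha\lambda_1/2$, the mass bound $\|P_{V_<} x\|^2 \leq O(1/C)$.

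The second step invokes Courant--Fischer in the $\min$ form for $\lambda'_{k-1}$. Since $\dim V_< \geq n-k+1$, the subspace $S := V_< + \mathrm{span}(x)$ has dimension at most $n-k+2$, so it is a legitimate test subspace. For a unit vector $y \in S$ write $y = w + \gamma \hat x_{\perp}$ with $w \in V_<$, $\hat x_\perp = P_{V_\geq} x/\|P_{V_\geq}x\| \in V_\geq$ and $\|w\|^2 + \gamma^2 = 1$. Orthogonality across $V_\geq \oplus V_<$ kills the cross term in $y^T A y$, giving $y^T A y \leq (1-\gamma^2)(\lambda_k + \alpha\lambda_1/2) + \gamma^2 \lambda_1$; meanwhile $y^T x = w^T P_{V_<} x + \gamma s$ with $s = \|P_{V_\geq}x\|$. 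A routine Cauchy--Schwarz optimization over $w$ (at fixed $\gamma$) shows the maximum of $y^T A' y = y^T A y - t(y^T x)^2$ over $y \in S$ is bounded by $\lambda_k + \gamma^2(\lambda_1 - \lambda_k) - t(\gamma s - \sqrt{1-\gamma^2}\,\sigma)_+^2$, where $\sigma = \|P_{V_<}x\| = O(1/\sqrt C)$. Plugging in $t \in (1 \pm \alpha/C)\,x^T A x$ and $s^2 \geq 1 - O(1/C)$ allows the deflation penalty $t(\gamma s - \tau\sigma)^2$ to absorb the gain $\gamma^2(\lambda_1-\lambda_k)$ up to an error of $O(1/C + \alpha/C)\cdot \lambda_1$, which is bounded by $\alpha\lambda_1$ for a sufficiently large universal constant $C$.

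The main obstacle is the last step: the quantities $\sigma^2$, $1-s^2$, and $|\lambda_1 - t|$ are all controlled only by $O(1/C)$ terms, and one must track how they combine under the Cauchy--Schwarz optimization so that, after the ideal deflation direction $P_{V_\geq}x$ is subtracted off, the residual in the $V_\geq$ direction is $O(\alpha \lambda_1)$ rather than $O(\sqrt\alpha\,\lambda_1)$. The extra slack needed to avoid a $\sqrt\alpha$-loss is precisely the factor of $\alpha\lambda_1/2$ built into the definition of $V_\geq$, which provides the second-order improvement that makes the quadratic penalty dominate the linear gain. Once this calculation is carried out, every unit $y \in S$ satisfies $y^T A' y \leq \lambda_k + \alpha\lambda_1$, and Courant--Fischer yields $\lambda'_{k-1} \leq \lambda_k + \alpha\lambda_1$ as required.
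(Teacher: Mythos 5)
The paper never proves this lemma --- it is imported from \cite{KapralovT13} and used as a black box --- so the only comparison available is against the mathematics itself. Your interlacing step, the easy case, and the mass bound $\|P_{V_<}x\|^2\bigl(\lambda_1-\lambda_k-\alpha\lambda_1/2\bigr)\le(\alpha/C)\lambda_1$ are all correct (one small slip: for the Courant--Fischer upper bound on $\lambda'_{k-1}$ you need $\dim S\ge n-k+2$, not ``at most''; this does hold because $x\notin V_<$). The genuine gap sits exactly at the step you flag as the main obstacle, and it cannot be closed. Expanding the penalty, $t\bigl(|\gamma|s-\sqrt{1-\gamma^2}\,\sigma\bigr)_+^2=t\gamma^2s^2-2t|\gamma|\sqrt{1-\gamma^2}\,s\sigma+t(1-\gamma^2)\sigma^2$, the cross term is of order $t\sigma=\Theta\bigl(\lambda_1\sqrt{\alpha/C}\bigr)$ for $\gamma^2$ bounded away from $0$ and $1$: when $\lambda_1-\lambda_k=\Theta(\lambda_1)$ the mass bound still permits $\sigma^2=\Theta(\alpha/C)$, so $\sigma\lambda_1=\Theta(\sqrt{\alpha/C})\lambda_1\gg\alpha\lambda_1$ whenever $\alpha\ll 1/C$. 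The $\alpha\lambda_1/2$ slack built into the definition of $V_\ge$ controls the quadratic quantity $\sigma^2(\lambda_1-\lambda_k)$, not the linear quantity $\sigma\lambda_1$, so no universal constant $C$ rescues the calculation.

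In fact the statement as transcribed is false, so no proof can succeed. Take $A=\mathrm{diag}(1,\alpha,0,\dots,0)$, $x=\sqrt{1-\sigma^2}\,e_1+\sigma e_2$ with $\sigma^2=(\alpha/C)/(1-\alpha)$, and $t=x^TAx=1-\alpha/C$; all hypotheses hold, and consider $k=2$. The top-left $2\times 2$ block of $A'=A-txx^T$ gives $\lambda'_1\ge\frac{a_{11}+a_{22}}{2}+|a_{12}|\ge t\sigma\sqrt{1-\sigma^2}\ge\frac13\sqrt{\alpha/C}$, which exceeds $\lambda_2+\alpha\lambda_1=2\alpha$ once $\alpha<1/(36C)$. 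The $\sqrt{\alpha}$ loss is therefore intrinsic: the lemma is correct only if the hypothesis is strengthened to $x^TAx\ge(1-\alpha^2/C)\lambda_1$ and $t\in(1\pm\alpha^2/C)x^TAx$ (equivalently, with conclusion $\lambda'_{k-1}\le\lambda_k+O(\sqrt{\alpha})\lambda_1$ under the stated hypothesis), and with that change your outline does go through, since then $\sigma=O(\alpha/\sqrt{C})$. Separately, your final step also implicitly requires $\lambda_k\ge-O(\alpha\lambda_1)$: for indefinite $A$ (e.g.\ $A=\mathrm{diag}(1,-2)$, $x=e_1$, $t=1$, $k=2$, which gives $\lambda'_1=0>\lambda_2+\alpha\lambda_1$) the gain $\gamma^2(\lambda_1-\lambda_k)$ exceeds the maximal penalty $t\approx\lambda_1$ and the claimed inequality fails again, so a positivity (or near-positivity) assumption on the spectrum must be added for your case analysis to make sense.
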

Because our algorithm returns a vector $v$ with a guarantee on the quantity $\|A v\|$, rather than on the Rayleigh Quotient $v^T A v$, we must relate these two quantities, which we do presently.
\begin{lemma}
For any unit vector $x$, $|x^T A x| \leq \|Ax\|$
\end{lemma}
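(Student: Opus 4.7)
The plan is to recognize this as a one-line consequence of the Cauchy--Schwarz inequality. I would write $x^T A x = \langle x, Ax\rangle$, the standard inner product between the unit vector $x$ and the vector $Ax \in \mathbb{R}^n$. By Cauchy--Schwarz,
\[
|\langle x, Ax\rangle| \le \|x\|\cdot\|Ax\|,
\]
and since $\|x\|=1$ by hypothesis, the right-hand side equals $\|Ax\|$, which is exactly the claimed bound.

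There is essentially no obstacle here: the lemma is purely a linear-algebra identity and requires no properties of $A$ beyond its being a square matrix (in particular, symmetry is not needed for this inequality). The only thing worth a remark in the write-up is the direction of the implication needed by the deflation argument: the preceding deflation lemma is stated in terms of the Rayleigh quotient $x^T A x$, while the PPI algorithm outputs a vector with a guarantee on $\|Ax\|$. This lemma says that a guarantee on $\|Ax\|$ is at least as strong as the corresponding guarantee on $|x^T A x|$, so any lower bound on $\|A x\|$ obtained from PPI can be substituted for $x^T A x$ wherever the deflation lemma requires it (after taking absolute values, which is harmless because for symmetric positive semidefinite-like situations relevant here the Rayleigh quotient on the approximate top eigenvector will be positive).

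In summary, the proof would be a single display invoking Cauchy--Schwarz and the unit-norm hypothesis, followed by a brief remark explaining why this suffices to bridge the hypothesis of the deflation lemma with the output guarantee of PPI.
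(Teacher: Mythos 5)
Your proof is correct and is exactly the paper's argument: identify $x^TAx$ with $\langle x, Ax\rangle$, apply Cauchy--Schwarz, and use $\|x\|=1$. Your surrounding remark is also apt --- the paper indeed needs the (partial) converse direction for the deflation argument, which it establishes in the following lemma.
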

\begin{proof}
By the Cauchy-Schwarz inequality,
$|x^T A x| = |\langle x, Ax \rangle| \leq \|x\|\cdot \|Ax\| = \|A x\|.$
\end{proof}
We now prove a partial converse, for vectors $x$ such that $\|Ax\|$ is large.
\begin{lemma}
For any $0 \leq \alpha \leq 1/4$ and for any unit vector $x$ such that $\|Ax\| \geq (1-\alpha)\lambda_1$:
$$x^T A x \geq (1-5\alpha)\lambda_1$$
\end{lemma}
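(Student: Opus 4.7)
My plan is to diagonalize the symmetric matrix $A$ and read off both quantities in the eigenbasis. Writing the eigendecomposition as $A=\sum_i\lambda_iu_iu_i^T$ and expanding the unit vector $x=\sum_ic_iu_i$ (so $\sum_ic_i^2=1$), the two quantities in play become
\[
\|Ax\|^2=\sum_i\lambda_i^2c_i^2\qquad\text{and}\qquad x^TAx=\sum_i\lambda_ic_i^2.
\]
So the task reduces to lower-bounding a linear form in the weights $c_i^2$ given a lower bound on the corresponding quadratic form.

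The next step is to exploit the elementary pointwise inequality $\lambda_i^2\le\lambda_1\lambda_i$, which holds coordinatewise under the nonnegativity assumption on the eigenvalues that is natural here (the algorithm is applied to matrices of the form $A^TA$ or to the symmetric embedding after which one may assume PSDness). Multiplying by $c_i^2$ and summing yields the clean bound $\|Ax\|^2\le\lambda_1\,x^TAx$. This is essentially the reverse direction of the Cauchy--Schwarz bound proved in the previous lemma, which says that $|x^TAx|\le\|Ax\|$; here we pay a factor of $\lambda_1$ for the converse.

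Combining this with the hypothesis $\|Ax\|\ge(1-\alpha)\lambda_1$ and dividing through by $\lambda_1$ gives $x^TAx\ge(1-\alpha)^2\lambda_1$. Since $(1-\alpha)^2\ge 1-2\alpha\ge1-5\alpha$ for all $\alpha\ge0$, the conclusion follows, and in fact with room to spare. The restriction $\alpha\le 1/4$ is not needed for the arithmetic but is presumably included to make $\lambda_1$ unambiguously dominate in $x^TAx$ (in particular, to guarantee $x^TAx>0$).

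The one subtle point—and the only possible obstacle—is the implicit sign condition on the spectrum: if $A$ had an eigenvalue $\lambda_i$ with $|\lambda_i|$ close to $\lambda_1$ but of opposite sign, then $x$ could align with its eigenvector and make $\|Ax\|$ large while $x^TAx$ becomes negative, breaking the statement. In the applications in this paper (power iteration on PSD matrices, and on matrices whose negative spectrum has been kept negligible by controlled deflation) this case does not arise, so the argument above suffices; a fully general symmetric version would require a more refined split of $x$ into its positive and negative eigenspace components, which I would avoid for the present lemma.
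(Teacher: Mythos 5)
Your argument is correct whenever $A$ is positive semidefinite, and it is a genuinely different---and slicker---route than the paper's. The paper works in the same eigenbasis but splits the spectrum at the magnitude threshold $(1-4\alpha)\lambda_1$, bounds the mass $S_1$ carried by the large-magnitude eigenvectors by solving $(1-\alpha)\le\sqrt{S_1}+\sqrt{1-S_1}\sqrt{1-4\alpha}$ to get $S_1\ge 1-4\alpha^2$, and then lower-bounds the Rayleigh quotient group by group; this is where the constant $5$ and the restriction $\alpha\le 1/4$ come from. Your single inequality $\|Ax\|^2\le\lambda_1\,x^TAx$ replaces all of that and even yields the stronger conclusion $x^TAx\ge(1-\alpha)^2\lambda_1\ge(1-2\alpha)\lambda_1$.

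The caveat you raise at the end is the real issue, and your dismissal of it is not quite right. The pointwise inequality $\lambda_i^2\le\lambda_1\lambda_i$ fails for \emph{every} negative eigenvalue, however small its magnitude, so your argument as written does not cover ``matrices whose negative spectrum has been kept negligible by deflation''---which are precisely the matrices $A_i=A_{i-1}-\hat{\sigma}_i v_iv_i^T$ to which the paper applies this lemma, and which can acquire small negative eigenvalues because $\hat{\sigma}_i$ may overshoot $v_i^TA_{i-1}v_i$. The paper's two-group decomposition is built to tolerate exactly this: eigenvalues of magnitude below $(1-4\alpha)\lambda_1$ may have either sign, since their total contribution to $x^TAx$ is at least $-S_2(1-4\alpha)\lambda_1$ with $S_2\le 4\alpha^2$; only the large-magnitude eigenvalues must be positive (an assumption the paper's proof also makes implicitly, and which, as you correctly observe, cannot be dispensed with---the statement fails for $A=\mathrm{diag}(1,-1)$ with $x=e_2$). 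To make your proof serve the paper's application you would need the positive/negative split you propose to avoid: for instance, if every negative eigenvalue has magnitude at most $\eta\lambda_1$, applying your inequality to the positive part of the spectrum gives $x^TAx\ge\bigl((1-\alpha)^2-\eta^2-\eta\bigr)\lambda_1$, which recovers the claim when $\eta=O(\alpha)$.
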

\begin{proof}
Let $v_1,\ldots,v_n$ denote the eigenvectors of $A$ in order from largest to
smallest eigenvalue: $|\lambda_1| \geq \ldots \geq |\lambda_n|$. Then we may
write $x = \sum_{i=1}^n \alpha_i\cdot v_i.$ Likewise
$Ax = \sum_{i = 1}^n \alpha_i\lambda_i \cdot v_i,$
where $\sum_{i=1}^n \alpha_i^2 = 1$, since $x$ is a unit vector.
Hence,
$\|A x\| = \sqrt{\sum_{i=1}^n \alpha_i^2\lambda_i^2}$ and
$x^TAx = \langle x, Ax \rangle = \sum_{i=1}^n \alpha_i^2\lambda_i.$

We define:
$$i^* \triangleq \max \{1 \leq i \leq n : |\lambda_i| \geq \lambda_1(1-4\alpha) \}$$
to be the largest index such that the $i^*$'th eigenvalue has magnitude at least $(1-4\alpha)\lambda_1$. Now define the quantities:
$$S_1 \triangleq \sum_{i=1}^{i^*}\alpha_i^2\ \ \ \ \ S_2 \triangleq \sum_{i=i^*+1}^n \alpha_i^2$$
and note that $S_2 = 1-S_1$.
We can calculate:
\begin{eqnarray*}
(1-\alpha)\lambda_1 &\leq& \sqrt{\sum_{i=1}^n \alpha_i^2\lambda_i^2} \\
&\leq& \sqrt{S_1 \lambda_1^2 + S_2 (1-4\alpha)\lambda_1^2} \\
&\leq& \lambda_1\left(\sqrt{S_1} + \sqrt{1-S_1}\sqrt{1-4\alpha}\right) \\
&\leq& \lambda_1\left(\sqrt{S_1} +  \sqrt{1-S_1}(1-2\alpha)\right)
\end{eqnarray*}
Solving for $S_1$, we find:
$$S_1 \geq \frac{1-4 \alpha +8 \alpha ^2-10 \alpha ^3+6 \alpha ^4+(-1+\alpha ) (-1+2 \alpha ) \sqrt{1+\alpha  (-2+3 \alpha )}}{2 (1+2 (-1+\alpha ) \alpha )^2} \geq 1 - 4\alpha^2.$$
Therefore, we also have $S_2 \leq 4\alpha^2$.
Finally, we may calculate:
\begin{align*}
x^TAx = \sum_{i=1}^n \alpha_i^2\lambda_i
\geq S_1(1-4\alpha)\lambda_1 - S_2(1-4\alpha)\lambda_1
\geq (1-8\alpha^2)(1-4\alpha)\lambda_1
\geq (1-5\alpha)\lambda_1
\end{align*}
where the last inequality holds since $\alpha \leq 1/4$
\end{proof}
As a corollary, we get a modified version of the deflation lemma of
\cite{KapralovT13}

\begin{corollary}[Modified Deflation Lemma \cite{KapralovT13}]
\label{cor:modified}
Let $A$ be a symmetric matrix with eigenvalues $\lambda_1 \geq \ldots \geq \lambda_n$. There exists a constant $C > 0$ so that the following holds. Let $x$ be any unit vector such that $\|Ax\| \geq (1-\alpha/C)\lambda_1$, where $\alpha \in (0,1)$. Let $A' = A - t v\cdot v^T$, where $t \in (1\pm \alpha/C)\|A x\|$. Denote the eigenvalues of $A'$ by $\lambda_1' \geq \ldots \geq \lambda_n'$.
\begin{enumerate}
\item $\lambda_k \leq \lambda'_{k-1} \leq \min(\lambda_{k-1}, \lambda_k+\alpha\lambda_1)$ for each $k \in \{1,\ldots,n\}$.
\end{enumerate}
\end{corollary}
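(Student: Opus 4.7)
The plan is to reduce this modified corollary to the original Deflation Lemma of \cite{KapralovT13} stated earlier in the appendix, using the two helper lemmas just established that relate the Rayleigh quotient $x^T A x$ to the norm $\|Ax\|$. The original lemma has the \emph{same} conclusion; only its hypotheses are phrased in terms of $x^T A x$, so the entire work is in translating the two hypotheses.

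First I would handle the hypothesis on $x$. Given $\|Ax\| \geq (1-\alpha/C)\lambda_1$ with $C$ sufficiently large that $\alpha/C \leq 1/4$, the Rayleigh-quotient lemma immediately yields $x^T A x \geq (1 - 5\alpha/C)\lambda_1$. If we pick $C$ to be at least $5$ times the constant $C_0$ demanded by the original Deflation Lemma, this puts $x$ in the regime $x^T A x \geq (1-\alpha/C_0)\lambda_1$ that the original lemma requires.

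Next I would translate the hypothesis on $t$. By Cauchy--Schwarz we have $x^T A x \leq \|Ax\|$, and since $\lambda_1 \geq \|Ax\| \geq (1-\alpha/C)\lambda_1$ while $x^T A x \geq (1 - 5\alpha/C)\lambda_1$, the ratio satisfies
\[
1 \;\leq\; \frac{\|Ax\|}{x^T A x} \;\leq\; \frac{\lambda_1}{(1-5\alpha/C)\lambda_1} \;=\; 1 + O(\alpha/C).
\]
So $\|Ax\| \in (1 \pm O(\alpha/C))\,x^T A x$, and composing this with the given $t \in (1 \pm \alpha/C)\|Ax\|$ places $t$ in $(1 \pm O(\alpha/C))\,x^T A x$. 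Absorbing the constant into the choice of $C$ (take $C$ large enough that the combined multiplicative slack is at most $\alpha/C_0$), we obtain $t \in (1 \pm \alpha/C_0)\,x^T A x$, which is exactly the hypothesis of the original Deflation Lemma.

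With both hypotheses of the Deflation Lemma verified with its constant $C_0$, its conclusion gives $\lambda_k \leq \lambda'_{k-1} \leq \min(\lambda_{k-1}, \lambda_k + \alpha\lambda_1)$ for all $k$, which is the conclusion we want. There is no real obstacle here; the whole argument is bookkeeping with multiplicative factors, and the only point requiring care is ensuring the constant $C$ in the new hypothesis is chosen large enough that (i) $\alpha/C \leq 1/4$ so the Rayleigh-quotient lemma applies, and (ii) after incurring the two factors of $5$ (from the Rayleigh bound) and $O(1)$ (from composing the multiplicative errors on $t$) the resulting slack still fits inside the tolerance $\alpha/C_0$ of the original lemma.
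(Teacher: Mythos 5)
Your proposal is correct and follows exactly the route the paper takes: the paper states the original Kapralov--Talwar lemma (with hypotheses on $x^TAx$), proves the two translation lemmas $|x^TAx|\le\|Ax\|$ and $\|Ax\|\ge(1-\alpha)\lambda_1\Rightarrow x^TAx\ge(1-5\alpha)\lambda_1$, and then derives the corollary by absorbing the constant-factor slack into $C$, just as you do. Your write-up actually spells out the bookkeeping on the hypothesis for $t$ more explicitly than the paper does.
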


\section{Perturbation bounds for matrix powers}
\label{section:localsensitivity}
In this section we prove a perturbation bound for matrix powers. The result
be seen as bounding the so-called \emph{local sensitivity} of power iteration.
Notably, we can use following notion weaker form of $\mu$-coherence that
depends only on the top few singular singular vectors.
\begin{definition}
Let $M$ be a real valued $m\times n$ matrix with singular value decomposition
$M=\sum_{i=1}^n \sigma_i u_i v_i^T,$ where
$\sigma_1\ge\sigma_2\ge...\ge\sigma_n\ge0.$ We define the \emph{top-$k$
coherence} of $M$ as
\[
\mu_k(M) = \max_{i=1}^k\max\Set{m\|u_i\|_\infty^2,
n\|v_i\|_\infty^2}\mper
\]
Note that $1\le\mu_k(M)\le \max\{m,n\}.$
\end{definition}

\begin{theorem}\theoremlabel{perturbation}
Let $q\ge 1$ be a number.
Let $M$ be a real valued $n\times n$ matrix with singular value decomposition
$M=\sum_{i=1}^n \sigma_i u_i v_i^T,$ where
$\sigma_1\ge\sigma_2\ge...\ge\sigma_n\ge0.$
Assume that $\sigma_1\ge 4q,$ $\sigma_{k+1}\le\sigma_1/2$ and $q\ge \log n + 1.$
Then, with $g\sim N(0,1)^n,$
\[
\E\| \left((M+E)^q - M^q\right)g \|_2
\le
9\min\Set{1,\sqrt{\frac {k\cdot\mu_k(M)}n}} q\sigma_1^{q-1}\mper
\]
\end{theorem}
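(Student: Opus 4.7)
The plan rests on the telescoping matrix identity
\[
(M+E)^q - M^q \;=\; \sum_{i=0}^{q-1} M^{q-1-i}\, E\, (M+E)^i,
\]
which follows by an easy induction on $q$. Setting $v_i = (M+E)^i g$ and applying the triangle inequality inside the expectation reduces the problem to bounding $\sum_{i=0}^{q-1}\mathbb{E}\,\|M^{q-1-i}\,E\,v_i\|$. The perturbation $E$ is taken to be a single-entry unit change $\alpha e_s e_t^T$ with $|\alpha|\le 1$ (the notion of neighbor used throughout the paper, with the reduction to symmetric $M$ from Section~2 in force). Then $Ev_i = \alpha (v_i)_t\, e_s$, and thus
\[
\|M^{q-1-i}\,E\,v_i\| \;=\; |\alpha|\cdot |(v_i)_t|\cdot \|M^{q-1-i}e_s\|.
\]

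Two separate ingredients then control this product. For the random factor, I would observe that $(v_i)_t = e_t^T(M+E)^i g$ is a mean-zero Gaussian in $g$ with standard deviation $\|[(M+E)^T]^i e_t\|\le (\sigma_1+1)^i$, so Jensen yields $\mathbb{E}|(v_i)_t|\le (\sigma_1+1)^i \le 2\sigma_1^i$ under the assumption $\sigma_1\ge 4q$. For the deterministic factor, I would decompose along the eigenbasis $M=\sum_j \lambda_j u_j u_j^T$ of the symmetric matrix and split the top $k$ components from the rest:
\[
\|M^\rho e_s\|^2 \;=\; \sum_{j\le k}\lambda_j^{2\rho}(u_j)_s^2 + \sum_{j>k}\lambda_j^{2\rho}(u_j)_s^2 \;\le\; \frac{k\,\mu_k(M)}{n}\sigma_1^{2\rho} + \Bigl(\frac{\sigma_1}{2}\Bigr)^{2\rho},
\]
using the definition of top-$k$ coherence in the first sum (each $(u_j)_s^2\le\mu_k/n$) and the eigenvalue gap $|\lambda_j|\le\sigma_{k+1}\le\sigma_1/2$ for $j>k$ together with $\sum_{j>k}(u_j)_s^2\le1$ in the second. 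Hence $\|M^\rho e_s\|\le \sqrt{k\mu_k/n}\,\sigma_1^\rho + (\sigma_1/2)^\rho$.

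Combining the two estimates and summing over $i$ with $\rho=q-1-i$, the coherence portion sums to roughly $2q\sqrt{k\mu_k(M)/n}\,\sigma_1^{q-1}$, while the residual $(\sigma_1/2)^\rho$ terms collapse by geometric series. The crude alternative $\|M^\rho e_s\|\le\sigma_1^\rho$ gives the unconditional $O(q\sigma_1^{q-1})$ bound. Taking the better of the two analyses yields the claimed $9\min\{1,\sqrt{k\mu_k(M)/n}\}\,q\sigma_1^{q-1}$.

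The main technical obstacle is absorbing the residual $(\sigma_1/2)^\rho$ contributions into the coherence bound when $k\mu_k/n$ is small, since these terms carry no coherence factor. This is exactly where the hypothesis $q\ge\log n+1$ enters: it forces $(\sigma_1/2)^\rho\le \sigma_1^\rho/n$ whenever $\rho\ge\log n$, so only an $O(\log n)$-size prefix of the sum (those $i$ close to $q-1$, where $\rho$ is small) is potentially troublesome. For this prefix one needs a sharper bound on $\mathbb{E}|(v_i)_t|$, obtained by iterating the same telescoping identity once more to show that $(M+E)^i e_t$ inherits a nearly coherent structure from $M^i e_t$ up to a low-order perturbation. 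All remaining multiplicative slack is absorbed into the constant $9$ by the tight margin built into $\sigma_1\ge 4q$.
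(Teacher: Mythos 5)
Your overall strategy is the same one the paper uses at its core: expand $(M+E)^q-M^q$ into words containing the rank-one perturbation $E=\alpha e_se_t^T$, exploit $Ev=\alpha v_t e_s$ to factor each word into a left column $M^\rho e_s$ and a right scalar, bound $\|M^\rho e_s\|$ by splitting the top $k$ singular directions (coherence) from the tail (spectral gap $\sigma_{k+1}\le\sigma_1/2$), and use Jensen for the Gaussian factor. The difference is organizational but consequential: you stop after one level of telescoping, writing $(M+E)^q-M^q=\sum_{i=0}^{q-1}M^{q-1-i}E(M+E)^i$, whereas the paper expands fully into all $2^q-1$ monomials $M^{\alpha_1}EM^{\alpha_2}E\cdots EM^{\alpha_\ell}$, collapses every interior $EM^\alpha E$ to a scalar $C_\alpha$ times $E$, and crucially gets a coherence factor from \emph{both} ends of each monomial --- in particular from the rightmost factor $e_t^TM^{\alpha_\ell}g$.

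This is where your argument has a genuine gap. Writing $\delta=\min\{1,\sqrt{k\mu_k(M)/n}\}$, your bound for the term of index $i$ is $2\sigma_1^i\bigl(\delta\sigma_1^{q-1-i}+(\sigma_1/2)^{q-1-i}\bigr)$; the coherence parts sum to $2\delta q\sigma_1^{q-1}$ as you say, but the residual parts sum to about $4\sigma_1^{q-1}$ with no $\delta$ in front. Already the single term $i=q-1$ (left factor $M^0$, so $\|M^0e_s\|=1$) contributes $\E|e_t^T(M+E)^{q-1}g|=\Theta(\|(M+E)^{q-1}e_t\|)$, which your crude estimate only bounds by $2\sigma_1^{q-1}$ --- larger than the target $9\delta q\sigma_1^{q-1}$ whenever $\delta q=o(1)$ (e.g.\ $k,\mu_k=O(1)$ and $q=\Theta(\log n)$, so $\delta q=O(\log n/\sqrt n)$). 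The hypothesis $q\ge\log n+1$ cannot rescue these terms, since they are exactly the ones with small exponent $\rho=q-1-i$. You correctly identify this obstacle, but the proposed fix (``iterate the telescoping once more so that $(M+E)^ie_t$ inherits coherence from $M^ie_t$'') is precisely the missing half of the proof: it requires a deterministic analogue of the theorem for $\|((M+E)^i-M^i)e_t\|$, and unrolling that recursion reproduces the paper's full expansion, where each of the $\binom{q}{z}\le q^{q-z}\le(\sigma_1/4)^{q-z}$ words of ``order'' $z$ is controlled by $(\delta+2^{-\alpha_1})(\delta+2^{-\alpha_\ell})\sigma_1^{z}$ and the cross terms are absorbed using $2^{-(q-1)}\le 1/n\le\delta$. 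As it stands, the proposal establishes the theorem only up to an additive $O(\sigma_1^{q-1})$ error, which is not of the claimed form.
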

\begin{remark}
We note that the above bound could easily be turned into a high probability guarantee.
\end{remark}
The next lemma will be helpful in simplifying some expressions arising
in the proof of the theorem.
\begin{lemma} \lemmalabel{simplify}
Let $E=e_se_t^T$ and $\alpha\ge1,$ Then,
\begin{itemize}
\item
$E^2=E$ if $s=t$ and $E^2=0$ otherwise.
\item
$EM^\alpha E  = C_\alpha E,$
where
$C_\alpha = \sum_{i=1}^n\sigma_i^\alpha\langle u_i,e_s\rangle\langle
v_i,e_t\rangle\mper$
\end{itemize}
\end{lemma}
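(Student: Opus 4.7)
The plan is a direct computation exploiting the rank-one structure of $E = e_s e_t^T$. For the first claim, associativity of matrix multiplication yields
\[
E^2 = (e_s e_t^T)(e_s e_t^T) = e_s \, (e_t^T e_s) \, e_t^T = \langle e_t, e_s\rangle E,
\]
so $E^2$ equals $E$ when $s=t$ (since $\langle e_t,e_s\rangle=1$) and vanishes otherwise.

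For the second claim, the same maneuver collapses the sandwich into a scalar multiple of $E$:
\[
E M^\alpha E = (e_s e_t^T) M^\alpha (e_s e_t^T) = e_s \, (e_t^T M^\alpha e_s) \, e_t^T = (e_t^T M^\alpha e_s) \, E,
\]
so it suffices to identify the scalar $C_\alpha = e_t^T M^\alpha e_s$ with the claimed spectral expression. Since we have reduced to symmetric $M$ via \sectionref{symmetric}, the SVD $M = \sum_i \sigma_i u_i v_i^T$ satisfies $M^\alpha = \sum_i \sigma_i^\alpha u_i v_i^T$ with the standard sign conventions aligning left and right singular vectors for symmetric matrices; reading off the $(t,s)$-entry then yields $\sum_i \sigma_i^\alpha \langle u_i, e_t\rangle \langle v_i, e_s\rangle$, which agrees with the stated sum by the $u_i \leftrightarrow v_i$ symmetry available in the symmetric case.

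There is no substantive obstacle here — the lemma is pure bookkeeping, made painless by the rank-one form of $E$. The only minor subtlety is tracking sign conventions between the SVD and the eigendecomposition when raising $M$ to the integer power $\alpha$, but these do not affect the bilinear form of $C_\alpha$.
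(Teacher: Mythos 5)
Your computation is correct and is exactly the direct rank-one calculation the paper intends --- its own proof of this lemma simply reads ``Both claims are immediate.'' Your closing remark about sign conventions is apt: for symmetric $M$ with negative eigenvalues and even $\alpha$ the identity $M^\alpha=\sum_i\sigma_i^\alpha u_iv_i^T$ acquires sign factors $\mathrm{sign}(\lambda_i)^{\alpha-1}$, but since the downstream use in \lemmaref{Calpha} only bounds $C_\alpha$ by absolute values, this does not affect anything.
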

\begin{proof}
Both claims are immediate.
\end{proof}
In the following we will use $C=\mu_k(M)$ as a shorthand.
\begin{lemma}\lemmalabel{Calpha}
Let $\delta=\min\Set{\sqrt{kC/n},1}\mper$ Then,
$C_\alpha \le \delta^2\sigma_1^\alpha +
\frac{\sigma_1^\alpha}{2^\alpha}\mper$
\end{lemma}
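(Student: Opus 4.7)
The plan is to bound $C_\alpha$ by splitting the sum at index $k$, using the $\mu_k$-coherence assumption on the top $k$ terms and the spectral gap $\sigma_{k+1}\le\sigma_1/2$ on the tail.

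First, I would write
\[
C_\alpha \;\le\; \sum_{i=1}^n \sigma_i^\alpha \,|\langle u_i,e_s\rangle|\,|\langle v_i,e_t\rangle|
\;=\; S_{\le k} \;+\; S_{>k},
\]
where $S_{\le k}$ is the sum over $i\le k$ and $S_{>k}$ is the sum over $i>k$.

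For $S_{\le k}$, I would obtain two different upper bounds and then take their minimum, which is exactly $\delta^2\sigma_1^\alpha$. The first bound uses the $\mu_k$-coherence: since $\|u_i\|_\infty^2, \|v_i\|_\infty^2 \le C/n$ for $i\le k$, each summand is at most $\sigma_1^\alpha\cdot C/n$, so $S_{\le k}\le k\sigma_1^\alpha C/n$. The second bound uses Cauchy--Schwarz together with the orthonormality of $\{u_i\}$ and $\{v_i\}$:
\[
S_{\le k} \;\le\; \sigma_1^\alpha \sqrt{\sum_{i\le k}\langle u_i,e_s\rangle^2}\,\sqrt{\sum_{i\le k}\langle v_i,e_t\rangle^2}\;\le\;\sigma_1^\alpha.
\]
Taking the minimum of $kC/n$ and $1$ yields $S_{\le k}\le \delta^2\sigma_1^\alpha$.

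For $S_{>k}$, I would use $\sigma_i\le\sigma_{k+1}\le \sigma_1/2$ to factor out $(\sigma_1/2)^\alpha$, and then apply Cauchy--Schwarz and orthonormality as above:
\[
S_{>k} \;\le\; \Paren{\tfrac{\sigma_1}{2}}^\alpha
\sqrt{\sum_{i>k}\langle u_i,e_s\rangle^2}\,\sqrt{\sum_{i>k}\langle v_i,e_t\rangle^2}
\;\le\; \frac{\sigma_1^\alpha}{2^\alpha}.
\]
Adding the two bounds gives the claim. There is no real obstacle here: the lemma is a bookkeeping step, and the only thing to double-check is that the same Cauchy--Schwarz argument delivers both the ``$1$'' bound in $\delta^2$ and the tail bound without double-counting the coordinate masses $\langle u_i,e_s\rangle^2$, which it does since $S_{\le k}$ and $S_{>k}$ use disjoint ranges of $i$.
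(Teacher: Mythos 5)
Your proof is correct and follows essentially the same route as the paper's: split the sum at index $k$, bound the head by $\delta^2\sigma_1^\alpha$ via the coherence of the top $k$ singular vectors, and bound the tail by $(\sigma_1/2)^\alpha$ times a Cauchy--Schwarz estimate. You are in fact slightly more explicit than the paper, which asserts the head bound $\delta^2\sigma_1^\alpha$ without spelling out that the case $\delta=1$ requires the orthonormality/Cauchy--Schwarz argument rather than the coherence bound.
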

\begin{proof}
Appealing to the definition of $C_\alpha$ from \lemmaref{simplify}, we have
\begin{align*}
C_\alpha  =
\sum_{i=1}^n\sigma_i^\alpha\langle u_i,e_s\rangle\langle u_i,e_t\rangle
& = \sum_{i=1}^k\sigma_i^\alpha\langle u_i,e_s\rangle\langle v_i,e_t\rangle
+ \sum_{i=k+1}^n\sigma_i^\alpha\langle u_i,e_s\rangle\langle v_i,e_t\rangle \\
& \le \delta^2\sigma_1^\alpha + \frac{\sigma_1^\alpha}{2^\alpha}
\sum_{i=k+1}^n\langle u_i,e_s\rangle\langle v_i,e_t\rangle \\
& \le \delta^2\sigma_1^\alpha + \frac{\sigma_1^\alpha}{2^\alpha}\|e_s\|\|e_t\|
= \delta^2\sigma_1^\alpha + \frac{\sigma_1^\alpha}{2^\alpha}\mper
\end{align*}
\end{proof}
\begin{lemma}\lemmalabel{Malpha}
Recall that $\delta=\min\Set{\sqrt{kC/n},1}\mper$ We have,
\[
\E\|M^\alpha EM^\beta g\|
\le
\left(\delta\sigma_1^\alpha
+\frac{\sigma_1^\alpha}{2^\alpha}\right)
\left(\delta\sigma_1^\beta
+\frac{\sigma_1^\beta}{2^\beta}\right)
\mper
\]
\end{lemma}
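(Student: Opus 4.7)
The plan is to exploit the fact that $E = e_s e_t^T$ is rank one, so it factorizes $M^\alpha E M^\beta g$ into the product of a random scalar and a fixed vector. Concretely, I would first observe that
\[
E M^\beta g \;=\; e_s e_t^T M^\beta g \;=\; \bigl(e_t^T M^\beta g\bigr)\, e_s,
\]
so that
\[
\bigl\|M^\alpha E M^\beta g\bigr\| \;=\; \bigl|e_t^T M^\beta g\bigr| \cdot \bigl\|M^\alpha e_s\bigr\|.
\]
The first factor depends on $g$, the second does not, so I would take expectation only over the first. Since $g \sim N(0,1)^n$, the scalar $e_t^T M^\beta g$ is a centered Gaussian with variance $\|(M^\beta)^T e_t\|^2$ (which under the symmetric reduction from Section~\sectionref{symmetric} is just $\|M^\beta e_t\|^2$). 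Hence $\E|e_t^T M^\beta g| \le \|M^\beta e_t\|$, and the task reduces to bounding the two deterministic norms $\|M^\alpha e_s\|$ and $\|M^\beta e_t\|$ by $\delta \sigma_1^\alpha + \sigma_1^\alpha / 2^\alpha$ and the analogous quantity for $\beta$.

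For each of these norms I would use the same top-$k$/tail split that drove \lemmaref{Calpha}. Writing $M = \sum_i \sigma_i u_i v_i^T$ and using (after symmetrization) $u_i = v_i$,
\[
\bigl\|M^\alpha e_s\bigr\|^2 \;=\; \sum_{i=1}^n \sigma_i^{2\alpha}\, \langle u_i, e_s\rangle^2
\;=\; \sum_{i=1}^k \sigma_i^{2\alpha}\, \langle u_i, e_s\rangle^2 + \sum_{i>k} \sigma_i^{2\alpha}\, \langle u_i, e_s\rangle^2.
\]
For the top-$k$ part, I bound $\sigma_i \le \sigma_1$ and use the top-$k$ coherence assumption $\|u_i\|_\infty^2 \le C/n$ to get $\langle u_i, e_s\rangle^2 \le C/n$, yielding at most $k\cdot (C/n)\sigma_1^{2\alpha}$; combined with the trivial bound $\|M^\alpha e_s\|^2 \le \sigma_1^{2\alpha}$ this gives a contribution of $\delta^2 \sigma_1^{2\alpha}$ with $\delta = \min\{\sqrt{kC/n},1\}$. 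For the tail part, I use $\sigma_i \le \sigma_1/2$ for $i > k$ together with $\sum_{i>k}\langle u_i, e_s\rangle^2 \le \|e_s\|^2 = 1$, giving a contribution of at most $\sigma_1^{2\alpha}/4^\alpha$. Adding these gives $\|M^\alpha e_s\|^2 \le \delta^2 \sigma_1^{2\alpha} + \sigma_1^{2\alpha}/4^\alpha$, and then $\sqrt{a^2+b^2} \le a+b$ yields
\[
\bigl\|M^\alpha e_s\bigr\| \;\le\; \delta \sigma_1^\alpha + \sigma_1^\alpha/2^\alpha.
\]

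The same argument applied to $\|M^\beta e_t\|$ gives the matching bound with $\alpha$ replaced by $\beta$. Multiplying the two estimates then produces exactly the right-hand side stated in the lemma. I do not expect any real obstacle: the only subtle point is making sure the two cases in the definition $\delta = \min\{\sqrt{kC/n},1\}$ are handled uniformly, which is done automatically by taking the minimum of the incoherence bound and the trivial bound $\|M^\alpha\|\|e_s\| \le \sigma_1^\alpha$ before splitting off the tail.
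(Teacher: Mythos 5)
Your proposal is correct and follows essentially the same route as the paper: both reduce $\E\|M^\alpha E M^\beta g\|$ to the product $\|M^\alpha e_s\|\cdot\|e_t^T M^\beta\|$ (the paper via Jensen on the second moment, you via the equivalent factorization of the rank-one $E$ and $\E|Z|\le\sqrt{\E Z^2}$ for the Gaussian scalar), and both bound each deterministic norm by the same top-$k$/tail split using the coherence bound on the leading singular vectors, the gap $\sigma_{k+1}\le\sigma_1/2$ on the tail, and $\sqrt{a^2+b^2}\le a+b$. Your explicit handling of the $\min$ with $1$ via the trivial bound $\|M^\alpha e_s\|\le\sigma_1^\alpha$ is a minor tidying of a point the paper glosses over.
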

\begin{proof}
First note that
\[
\E\|M^\alpha EM^\beta g\|^2
=\|M^\alpha e_s\|^2 \E \langle e_t^TM^\beta, g\rangle^2
=\|M^\alpha e_s\|^2 \|e_t^T M^\beta\|^2\mcom
\]
where we used that $g\sim N(0,1)^n.$ Hence, by Jensen's inequality,
\[
\E\|M^\alpha EM^\beta g\|
\le \sqrt{\E\|M^\alpha EM^\beta g\|^2}
= \|M^\alpha e_s\|\cdot\|e_t^T M^\beta\|\mper
\]
It remains to bound the right hand side of the previous inequality. Indeed,
\[
\|M^\alpha e_s\|
=\sqrt{\sum_{i=1}^n\sigma_i^{2\alpha}\langle u_i,e_s\rangle^2}
\le \sqrt{\frac {Ck}n}\sigma_1^\alpha
+\frac{\sigma_1^\alpha}{2^\alpha}\sqrt{\sum_{i=1}^k\langle u_i, e_s\rangle^2}
\le \sqrt{\frac {Ck}n}\sigma_1^\alpha
+\frac{\sigma_1^\alpha}{2^\alpha}\mper
\]
We can bound $\|e_t^T M^\beta\|$ with the same reasoning.
\end{proof}

\begin{lemma}\lemmalabel{Ag}
Let $A=M^{\alpha_1}EM^{\alpha_2}E\cdots EM^{\alpha_{\ell-1}} EM^{\alpha_\ell}$ with
$\alpha_i\ge0.$ Then,
\[
\E\|Ag\|
\le \left(\frac{\sigma_1}2\right)^{\sum_{i=1}^\ell\alpha_i}
+ \delta(1+\delta)^\ell\sigma_1^{\sum_{i=1}^\ell\alpha_i}\mper
\]
\end{lemma}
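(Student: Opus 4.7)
The key structural observation is that $E = e_s e_t^T$ has rank one, which lets us collapse the alternating product defining $A$ into a single rank-one matrix. The plan is to iterate the identity $e_t^T M^{\alpha_i} E = (e_t^T M^{\alpha_i} e_s)\, e_t^T = C_{\alpha_i}\, e_t^T$ (recall \lemmaref{simplify}) through the $\ell-2$ middle copies of $E$. This yields
\[
A \;=\; \left(\prod_{i=2}^{\ell-1} C_{\alpha_i}\right)\,(M^{\alpha_1} e_s)\,(e_t^T M^{\alpha_\ell}),
\]
where the empty product is taken to be $1$ in the base case $\ell=2$, which recovers \lemmaref{Malpha} exactly.

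Once $A$ has been written in rank-one form, its norm separates as
\[
\|Ag\| \;=\; \left|\prod_{i=2}^{\ell-1} C_{\alpha_i}\right|\cdot \|M^{\alpha_1} e_s\|\cdot |e_t^T M^{\alpha_\ell} g|,
\]
and only the last factor is random. Since $e_t^T M^{\alpha_\ell} g$ is a centered one-dimensional Gaussian with variance $\|e_t^T M^{\alpha_\ell}\|^2$, Jensen's inequality bounds its expected absolute value by $\|e_t^T M^{\alpha_\ell}\|$. Then I would insert the pointwise bounds that have already been established: \lemmaref{Calpha} gives $|C_{\alpha_i}|\le \delta^2 \sigma_1^{\alpha_i} + \sigma_1^{\alpha_i}/2^{\alpha_i}$ for each interior index, while the intermediate computation inside the proof of \lemmaref{Malpha} gives $\|M^{\alpha_1} e_s\|,\ \|e_t^T M^{\alpha_\ell}\|\le \delta\sigma_1^{\alpha} + \sigma_1^{\alpha}/2^{\alpha}$ for the two endpoint factors. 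Writing $S=\sum_{i=1}^\ell \alpha_i$ and pulling out $\sigma_1^S$ gives
\[
\E\|Ag\| \;\le\; \sigma_1^{S}\,(\delta + 2^{-\alpha_1})(\delta + 2^{-\alpha_\ell}) \prod_{i=2}^{\ell-1}(\delta^2 + 2^{-\alpha_i}).
\]

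The main obstacle will be the final combinatorial repackaging of this polynomial in $\delta$ into the stated form $(\sigma_1/2)^S + \delta(1+\delta)^\ell \sigma_1^S$. My plan is first to relax $\delta^2\le\delta$ (valid since $\delta\le 1$), so that the upper bound becomes $\sigma_1^S\prod_{i=1}^\ell(\delta + 2^{-\alpha_i})$, and then to expand this product by subsets $T\subseteq[\ell]$:
\[
\prod_{i=1}^\ell(\delta + 2^{-\alpha_i}) \;=\; \sum_{T\subseteq[\ell]}\delta^{|T|}\prod_{i\notin T}2^{-\alpha_i}.
\]
The $T=\emptyset$ term contributes exactly $\prod_i 2^{-\alpha_i} = (1/2)^S$, matching the first piece of the target. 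Every non-empty $T$ satisfies $\prod_{i\notin T}2^{-\alpha_i}\le 1$, so the remaining contribution is at most $\sum_{j\ge 1}\binom{\ell}{j}\delta^j = (1+\delta)^\ell - 1$, which is then absorbed into the $\delta(1+\delta)^\ell$ slack on the right-hand side. The delicate point here is that the naive bound $\delta^2\le\delta$ is potentially wasteful on the $\ell-2$ interior factors, so if constants get lost one should instead keep the $\delta^2$'s and distribute them to supply the leading $\delta$ demanded by the right-hand side, at the cost of a slightly more careful term-by-term bookkeeping; I expect this to be the only real difficulty in the argument.
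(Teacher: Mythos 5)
Your route is the paper's route almost verbatim: collapse the interior factors via $EM^{\alpha}E=C_\alpha E$ (\lemmaref{simplify}), bound the resulting scalars with \lemmaref{Calpha} and the two endpoint factors with the computation inside \lemmaref{Malpha}, relax $\delta^2\le\delta$, and reduce everything to the product $\prod_{i=1}^\ell(\delta+2^{-\alpha_i})$. All of that is fine. The gap is in your last step. After the subset expansion you bound the non-empty-$T$ contribution by $\sum_{j\ge1}\binom{\ell}{j}\delta^j=(1+\delta)^\ell-1$ and claim this is absorbed by the $\delta(1+\delta)^\ell$ term. But $(1+\delta)^\ell-1\le\delta(1+\delta)^\ell$ is false for every $\ell\ge2$ once $\delta$ is small: for $\ell=2$ the difference is $\delta-\delta^2-\delta^3>0$, and in general $(1+\delta)^\ell-1\ge\ell\delta$ while $\delta(1+\delta)^\ell=\delta+O(\ell\delta^2)$. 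Since $\delta=\min\{\sqrt{kC/n},1\}$ is small precisely in the incoherent regime the lemma is built for, this is not a corner case. Keeping the $\delta^2$'s on the interior factors, as you propose as a fallback, does not rescue it either: the two endpoint factors alone already contribute $2\delta+O(\delta^2)$ against the available $\delta+O(\ell\delta^2)$.

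To be fair, you have landed exactly on the soft spot of the paper's own proof: the paper asserts $\prod_i(\delta+2^{-\alpha_i})\le 2^{-\sum_i\alpha_i}+\delta(1+\delta)^\ell$ with ``it is not hard to see,'' and that inequality is itself false whenever some $\alpha_i=0$ (taking all $\alpha_i=0$ reduces it to exactly the false inequality above), and such terms do arise in the expansion of $(M+E)^q$. What your expansion honestly yields is $\prod_i(\delta+2^{-\alpha_i})\le 2^{-\sum_i\alpha_i}+(1+\delta)^\ell-1\le 2^{-\sum_i\alpha_i}+\ell\delta(1+\delta)^{\ell-1}$, i.e., the lemma with an extra factor of $\ell$ on the $\delta$ term. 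Since $\ell\le q$ in the application, this weaker form still supports \theoremref{perturbation} at the cost of one additional factor of $q$ in the final bound. I would either prove and propagate that corrected statement, or restrict the lemma to $\alpha_i\ge1$ (disposing of repeated $E$'s separately via $E^2\in\{0,E\}$), rather than trying to salvage the inequality as written.
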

\begin{proof}
First we apply \lemmaref{simplify} to all intermediate
terms $EM^{\alpha_i}E$ where $i\in\{2,\dots,\ell-1\}.$ Then we apply
\lemmaref{Calpha} and \lemmaref{Malpha} to the remaining term.
Noting that $\delta^2\le \delta,$ since $0\le \delta\le 1,$ we have
established that
\[
\E\|Ag\|
\le \sigma_1^{\sum_{i=1}^\ell\alpha_i}\prod_{i=1}^\ell
\left(\delta+\frac1{2^{\alpha_i}}\right)
\]
On the other hand, it is not hard to see that the following inequality holds,
\[
\prod_{i=1}^\ell
\left(\delta+\frac1{2^{\alpha_i}}\right)
\le \left(\frac12\right)^{\sum_{i=1}^\ell\alpha_i}
+ \delta(1+\delta)^\ell \mper
\]
\end{proof}

We are now ready to prove \theoremref{perturbation}.

\begin{proof}[Proof of \theoremref{perturbation}]
Observe that the matrix $(M+E)^q-M^q$ equals the sum of $2^q-1$ matrices that
are either zero or of the form
$A=M^{\alpha_1}EM^{\alpha_2}E\cdots EM^{\alpha_{\ell-1}}
EM^{\alpha_\ell}$ as described in \lemmaref{Ag}. Let us say that
$\sum_{i=1}^\ell\alpha_i$ is the \emph{``order''} of the matrix $A.$ Clearly,
the order of the matrix $A$ is at most $q-\ell+1.$
Furthermore, there are at most $\binom qz$ matrices of order $z.$

Using the fact that $\binom qz \le q^{q-z}$ and the assumption that
$q\le \sigma_1/4,$ we will apply \lemmaref{Ag} to each such matrix and sum
over the resulting error terms:
\begin{align*}
\sum_{z=0}^{q-1}\binom qz \sigma_1^z\left(\frac{1}{2^{z}}
+ \delta(1+\delta)^{q-i+z}\right)
& \le
q\sum_{z=0}^{q-1}
\left(\frac{\sigma_1}{4}\right)^{q-z-1}\sigma_1^z\left(\frac{1}{2^{z}}
+ \delta 2^{q-z+1}\right)\\
& \le
q\sigma_1^{q-1}\sum_{z=0}^{q-1}\left(\frac1{2^{q-1}}+\frac{4\delta}{2^{q-z-1}}\right)\\
& \le q\sigma_1^{q-1}\left(8\delta + \frac q{2^{q-1}}\right)\\
& \le 9\delta q \sigma_1^{q-1}\mper
\end{align*}
In the last step we used that $\delta\ge\sqrt{1/n}$ and the assumption that
$q\ge \log(n)+1.$

The theorem follows now straightforwardly. By the previous argument and
linearity of expectation, we have
\[
\|((M+E)^q-M^q)g\|
\le 9\delta q \sigma_1^{q-1}
= 9\min\Set{1,\sqrt{\frac {Ck}n}} q\sigma_1^{q-1}\mper
\]
\end{proof}

\end{document}